\newcommand{\D}{\mathcal{D}}
\newcommand{\E}{\mathbb{E}}
\newcommand{\Naive}{\mathsf{Naive}}
\newcommand{\Linear}{\mathsf{Linear}}
\newcommand{\LinearFixed}{\mathsf{LinearFixed}}
\newcommand{\IgnoreLarge}{\mathsf{IgnoreLarge}}
\newcommand{\IgnoreLargeExp}{\mathsf{IgnoreLargeExp}}
\newcommand{\Opt}{\mathsf{Opt}}
\newcommand{\Norm}{\mathcal{N}}
\newcommand{\Event}{\mathcal{E}}
\newcommand{\R}{\mathbb{R}}
\newcommand{\std}{\boldsymbol{\sigma}}
\newcommand{\observ}{\boldsymbol{y}}
\newcommand{\good}[2]{\mathcal{S}_{(#1, #2)}}
\newcommand{\goodMHR}[2]{\mathcal{S}^\textsc{MHR}_{(#1, #2)}}
\newcommand{\bad}[2]{\mathcal{L}_{(#1, #2)}}
\newcommand{\badMHR}[2]{\mathcal{M}_{(#1, #2)}}
\DeclareMathOperator*{\argmax}{arg\,max}
\DeclareMathOperator\erf{erf}
\title{Reward Selection with Noisy Observations}
\author{
    Kamyar Azizzadenesheli \\ 
    \small{Nvidia}
    \and Trung Dang \\ 
    \small{Purdue University}
    \and Aranyak Mehta \\
    \small{Google}
    \and Alexandros Psomas \\ 
    \small{Purdue University}
    \and Qian Zhang \\ 
    \small{Purdue University}
}
\date{}
\begin{document}

\maketitle

\begin{abstract}
We study a fundamental problem in optimization under uncertainty. There are $n$ boxes; each box $i$ contains a hidden reward $x_i$. Rewards are drawn i.i.d. from an unknown distribution $\D$. For each box $i$, we see $y_i$, an unbiased estimate of its reward, which is drawn from a Normal distribution with known standard deviation $\sigma_i$ (and an unknown mean $x_i$). Our task is to select a single box, with the goal of maximizing our reward. This problem captures a wide range of applications, e.g. ad auctions, where the hidden reward is the click-through rate of an ad. Previous work in this model~\cite{bax2012comparing} proves that the naive policy, which selects the box with the largest estimate $y_i$, is suboptimal, and suggests a linear policy, which selects the box $i$ with the largest $y_i - c \cdot \sigma_i$, for some $c > 0$. However, no formal guarantees are given about the performance of either policy (e.g., whether their expected reward is within some factor of the optimal policy's reward).

In this work, we prove that both the naive policy and the linear policy are arbitrarily bad compared to the optimal policy, even when $\D$ is well-behaved, e.g. has monotone hazard rate (MHR), and even under a ``small tail'' condition, which requires that not too many boxes have arbitrarily large noise.
On the flip side, we propose a simple threshold policy that gives a constant approximation to the reward of a prophet (who knows the realized values $x_1, \dots, x_n$) under the same ``small tail'' condition. We prove that when this condition is not satisfied, even an optimal clairvoyant policy (that knows $\D$) cannot get a constant approximation to the prophet, even for MHR distributions, implying that our threshold policy is optimal against the prophet benchmark, up to constants. En route to proving our results, we show a strong concentration result for the maximum of $n$ i.i.d. samples from an MHR random variable that might be of independent interest.
\end{abstract}

\newpage

\section{Introduction}

Suppose that you are given $n$ boxes, with box $i$ containing a hidden reward $x_i$. Rewards are drawn independently and identically distributed (i.i.d.) from an unknown distribution $\D$. For each box $i$, you see an unbiased estimate $y_i$ of its reward: nature draws noise $\epsilon_i \sim \Norm(0,\sigma_i)$ with known $\sigma_i$, and you observe $y_i = x_i + \epsilon_i$. Your goal is to select the box with the highest reward $x_i$. This fundamental problem, originally introduced by Bax et al.~\cite{bax2012comparing}, captures a wide range of applications. The original motivation of Bax et al.~\cite{bax2012comparing} is ad auctions, where one can think of the hidden reward $x_i$ as the click-through rate of an ad, and the observed value $y_i$ as an estimation of the click-through rate produced by a machine learning algorithm; these algorithms typically have different amounts of data, and therefore different variance in the error, across different populations. 

If the distribution $\D$ is known, the optimal policy simply calculates the posterior expectation $R_i(y_i) = \E[ X_i \mid Y_i = y_i ]$ for each box $i$ and selects the box with the largest $R_i(y_i)$. However, when $\D$ is not known, this calculation is, of course, not possible. Furthermore, if $\epsilon_i$s were drawn i.i.d. (that is, if all $\sigma_i$s were equal), it should be intuitive that $\Naive$, the policy that picks the box with the largest observation $y_i$, is optimal, since $R_i(y_i) = \E[ X_i \mid X_i + \epsilon_i = y_i ]$ ``should'' be a monotone non-decreasing function of $y_i$.\footnote{As we show in one of our technical lemmas, this happens to be true when $\epsilon_i$ is drawn from $\Norm(0,\sigma)$, but, perhaps surprisingly, this is not true for an arbitrary noise distribution. To see this, consider the case that $X_i$ is uniform in the set $\{ -1 , +1 \}$ and $\epsilon_i$ is uniform in the set $\{ -10, +10 \}$. In this case, $\E[ X_i \mid X_i + \epsilon_i = -9 ] = 1 > -1 = \E[ X_i \mid X_i + \epsilon_i = 9 ]$.}

Bax et al.~\cite{bax2012comparing} show that $\Naive$ is suboptimal when the $\sigma_i$s are not equal. Specifically, they consider a family of \emph{linear} policies. A linear policy with parameter $c$ selects the box with the largest $y_i - c \cdot \sigma_i$; for $c=0$ we recover $\Naive$. Bax et al.~\cite{bax2012comparing} show that the derivative of the expected reward is strictly positive at $c=0$; that is, the $\Naive$ policy is not optimal, even within the family of linear policies. However, and this brings us to our interest here, no other formal guarantees are given. Is the best linear policy, or even the $\Naive$ policy, a good (e.g. constant) approximation to the optimal policy? Are there better policies, outside the family of linear policies?

\subsection{Our contribution}

 Without loss of generality, we assume that $\std = (\sigma_1, \dots, \sigma_n)$ satisfies  $\sigma_1 \leq ... \leq \sigma_n$. Naturally, if $\sigma_i$ is extremely large for almost all $i$, no policy, including a clairvoyant policy that knows $\D$, can hope to achieve any non-trivial performance guarantees (e.g., perform better than picking  a random box). We start by making this intuition precise.
Informally, given $\D$, $n$ and $c$, $\std$ has large noise if $\sigma_{n^c}$ is at least $\tilde{\Omega}( \E[\D_{n^c:n^c}] )$.\footnote{Recall that $\D_{k:n}$ is the $k$-th lowest of  $n$ i.i.d. samples from $\D$.} Under this condition, we show that, even for the case of a distribution $\D$ with monotone hazard rate (MHR),\footnote{A distribution has monotone hazard rate (MHR) if $\frac{1-F(x)}{f(x)}$ is a non-increasing function.} an optimal clairvoyant policy (which knows $\D$) cannot compete with $\E[\D_{n:n}]$, the expected reward of a prophet that knows the rewards $x_1, \dots, x_n$. Despite the fact that the prophet is a very strong benchmark, we note that, as we see later in the paper, our policies compete against the prophet, in similarly ``noisy'' environments. We further show that, assuming a bit more noise, $\sigma_{cn} \in \tilde{\Omega}(\E[\D_{cn:cn}])$ for $cn \in O(1)$, an optimal clairvoyant policy has reward comparable to the reward of picking a box uniformly at random. See~\Cref{sec: prelims} for the precise definitions, and~\Cref{sec: lower bounds for large noise} for the formal statements and proofs. We henceforth assume that the environment has ``small noise.''

We proceed to analyze the performance of known policies under this assumption. In~\Cref{subsec: naive fails} we study the $\Naive$ policy, which selects the box with the highest reward, and show that not only is it suboptimal, but that it can be made suboptimal for \emph{every} distribution $\D$ (\Cref{thm:master-lower-bound-s}). Specifically, given an arbitrary distribution $\D$, there exist choices for $n$ and $\std$ (satisfying the aforementioned ``small noise'' assumption) such that the optimal (non-clairvoyant) policy has reward at least $\E[ \D_{n:n}]/2$, while the $\Naive$ policy has a reward of at most $4\E[\D]$.  Our construction has a small number, $\Theta(\log(n))$, boxes with large noise, with the remaining boxes having no noise. The intuition is that, with high probability, a random large noise box is chosen by $\Naive$, while picking among the no noise boxes yields a reward of almost $\E[\D_{n:n}]$. Selecting $\D$ such that $\E[\D_{n:n}] \in \Theta( n\E[\D] )$, we have that  $\Naive$ provides only a trivial approximation to the optimal reward.  

In~\Cref{subsec: lower bound linear} we study linear policies. Surprisingly, this family of policies can also be made suboptimal in a similarly strong way. Given an arbitrary MHR distribution $\D$, there exist choices for $n$ and $\std$ (again, satisfying the aforementioned ``small noise'' assumption) such that the optimal policy has reward at least a constant times $\E[ \D_{n:n}]$, but no linear policy can get expected reward more than a constant times $\E[\D]$ (\Cref{thr:master-lower-bound}). By letting $\D$ be the exponential distribution, we get a lower bound of $\Omega(\log(n))$ for the approximation ratio of linear policies. 
Constructing a counter-example for linear policies is more delicate. First, observe that on all $\std$'s and realizations $y$'s, every linear policy's performance is at most the best $\LinearFixed_c$ policy, which discounts all boxes by a weight $c$ tailored to $\std$ and $y$. For a fixed and small $c$, a construction similar to the one for $\Naive$ works. For a fixed and large $c$, $\LinearFixed_c$ ``over-discounts'', and therefore a construction with many small noise boxes (that are not picked with high probability) works. We show how to combine these two ideas into a single construction where all  $\LinearFixed_c$ policies fail with high probability, and then use a union bound to relate to the best linear policy.

Combined, Theorems~\ref{thm:master-lower-bound-s} and~\ref{thr:master-lower-bound} show that, even if we know that $\D$ belongs to the (arguably very well-behaved) family of monotone hazard rate distributions, we need a new approach. En route to showing~\Cref{thr:master-lower-bound}, we prove a lemma about the concentration of the maximum of $n$ i.i.d. samples from an MHR distribution which might be of independent interest. It is known that order-statistics of MHR distributions also satisfy the MHR condition~\cite{barlow1996}. Furthermore, MHR distributions exceed their mean with probability at least $1/e$. Therefore, $\Pr\left[ \D_{n:n} \geq \E[\D_{n:n}] \right] \geq 1/e$. Here, we show that $\D_{n:n}$ does not exceed twice its mean with high probability (\Cref{lem:lower bound on max passing expectation}): $\Pr\left[ \D_{n:n} \leq 2\E[\D_{n:n}] \right] \geq 1 - \frac{1}{n^{3/5}}$, implying a very small tail for $\D_{n:n}$. The proof of this result is based on a new lemma (which again might be of independent interest) which states that the $(1-1/n)$-quantile value of an MHR distribution $\D$ is within a constant of $\E[\D_{n:n}]$.

At a high level, the downfall of both $\Naive$ and linear policies is that they treat very different types of boxes in a virtually identical manner: $\Naive$ does not take in the noise information at all, while linear policies utilize this information in a very crude way, and discount boxes with massively different order of noises using the same weight. Intuitively, a good policy should identify large noise boxes and ignore them. However, a non-trivial obstacle, is that a noise being ``large'' is relative to $\D$, which is unknown. 

In~\Cref{sec:positive} we propose our new policy, that circumvents this issue. The policy is quite simple: pick $\alpha \sim U[0,1]$, and run $\Naive$ on the $\alpha$ fraction of the boxes with the lowest noise (i.e. boxes $1$ through $\alpha n$). Therefore, if, e.g. a constant fraction of the boxes has small noise, we have a constant probability of keeping a constant fraction of them.
In more detail, if a $c$ fraction of the boxes has low noise, and specifically, if $\sigma_{cn} \leq \frac{\E[\D_{cn:cn}]}{5\sqrt{2\ln(n)}}$ (arguably, a very permissive bound), then our policy gives a $\frac{c^2}{20}$ approximation to $\E[\D_{n:n}]$, the expected reward of a prophet. Clearly, if $c$ is a constant, we get a constant approximation. Interestingly, our policy provides the same guarantees even in a setting with a lot less information, where the $\sigma_i$s are \emph{unknown}, and only their order is available to the policy. For the case of MHR distributions we further improve this result. The policy itself has a slight twist: pick $\alpha \sim U[0,1]$, and run $\Naive$ on the $n^{\alpha}$ boxes with the lowest noise (i.e. boxes $1$ through $n^{\alpha}$). This time, if $n^c$ boxes have low noise, and specifically if $\sigma_{n^c} \leq \frac{\E[\D_{n^c:n^c}]}{18\sqrt{2\ln(n^c)}}$, this version of our policy guarantees a $c^2/576$ approximation to the prophet. For a constant $c$, our approximation to the prophet is again a constant, and we only require $n^c$ boxes with bounded noise.

\subsection{Related Work}

\cite{bax2012comparing}, whose contribution we already discussed, and~\cite{mahdian2022regret}, are the two works most closely related to ours.~\cite{mahdian2022regret} study a very similar model to ours, where the reward $x_i$ for each box $i$ is not stochastic, but adversarial, and the noise distribution is not $\Norm(0,\sigma_i)$, but an arbitrary (known) zero-mean distribution $A_i$.~\cite{mahdian2022regret} are interested in finding policies with small worst-case \emph{regret}, defined as the difference between the maximum reward and the expected performance of the policy, where the expectation is over only the random noise. A policy is then a constant approximation if its regret is within a constant of the optimal regret; in contrast, for us, a policy is a constant approximation if its expected \emph{reward} is within a constant of the expected reward of the optimal policy/a prophet.~\cite{mahdian2022regret} show that in their model as well, the naive policy which picks the box with the highest observation $y_i$ is arbitrarily bad (in terms of regret) even in the $n=2$ case. Similar to our results here,~\cite{mahdian2022regret} show that there is a function $\theta$ from random variables to positive reals, such that picking the box with the largest $y_i - \theta(A_i)$ is a constant approximation (in terms of regret) to the optimal policy. Note that, in the case of our policy, this function is especially simple: $\theta(A_i) = 0$ if $\sigma_i$ is small, otherwise $\theta(A_i)$ is infinite.

A phenomenon related to the naive policy being suboptimal, both in the model studied here/the model of \cite{bax2012comparing}, as well as the model of~\cite{mahdian2022regret}, is the winner's curse~\cite{thaler1988anomalies}, where multiple bidders, with the same ex-post value for an item, estimate this value independently and submit bids based on those estimates; the winner tends to have a bid that's an overestimate of the true value. Our problem is also related to robust optimization which studies optimization in which we seek solutions that are robust with respect to the realization of uncertainty; see~\cite{bertsimas2011theory} for a survey. Finally, there has been a lot of work on the related problem of finding the maximum (or the top $k$ elements) given noisy information, see, e.g.,~\cite{feige1994computing,braverman2016parallel,braverman2019sorted,cohen2020instance}.

Many of our theorems can be strengthened by additionally assuming that $\D$ is MHR. MHR distributions are known to satisfy a number of interesting properties, see~\cite{barlow1996} for a textbook. In algorithmic economics, such properties have been exploited to enable strong positive results for a number of problems, including the sample complexity of revenue maximization~\cite{dhangwatnotai2010revenue,cole2014sample,huang2015making,guo2019settling,guo2021robust}, the competition complexity of dynamic auctions~\cite{liu2018competition}, and the design of optimal and approximately optimal ~\cite{hartline2009simple,daskalakis2012symmetries,cai2011extreme,allouah2020prior,giannakopoulos2021optimal}.

\section{Preliminaries}\label{sec: prelims}

There are $n$ boxes. The $i$-th box contains a reward $x_i$. These rewards are drawn i.i.d. from an \textit{unknown} distribution $\D$ with a cumulative distribution function $F$ and density function $f$. We assume that $\D$ is supported on $[0,\infty)$. Rewards are not observed by our algorithm.
Instead, nature draws unbiased estimates, $y_1, \dots, y_n$, where $y_i$ is drawn from a normal distribution with (an unknown) mean $x_i$ and a \textit{known} standard deviation $\sigma_i$. We refer to $y_i$ as the $i$-th observation. We often write $X_i$ and $Y_i$ for the random variable for the $i$-th reward and $i$-th observation, respectively. Note that $Y_i$ can be equivalently thought as $Y_i = X_i + \epsilon_i$, where the noise $\epsilon_i$ is drawn from $\Norm(0,\sigma_i)$. Our goal is to select a single box $i$ with the goal of maximizing the (expected) realized reward.

\paragraph{Policies and expected rewards}
Formally, a policy $A$ maps the public information, the pair $(\std, \observ)$, $\std = (\sigma_1, \dots, \sigma_n)$ and $\observ = (y_1, \dots, y_n)$, to a distribution over boxes. We write $R_{A}(\D, \std, \observ)$ for the expected reward of a policy $A$ under true reward distribution $\D$ and observations $\observ = (y_1, \dots, y_n)$, where the standard deviation of the noise is according to $\std = (\sigma_1, \dots, \sigma_n)$, and where this expectation is with respect to the randomness of $A$ and the randomness in the rewards. In order to evaluate a policy under a fixed reward distribution $\D$ we need to take an additional expectation over the random observations $\observ = (y_1, \dots, y_n)$. We overload notation and write $R_{A}(\D, \std) = \E_{\observ} \left[ R_{A}(\D, \std, \observ) \right]$ for the expected reward of a policy $A$ under true reward distribution $\D$, where the standard deviation of the noise is according to $\std = (\sigma_1, \dots, \sigma_n)$.

\paragraph{Previous policies and benchmarks}
\cite{bax2012comparing} consider two simple policies. The $\Naive$ policy always selects the box $i$ with the largest observation $y_i$.  A linear policy $\Linear_\gamma$, parameterized by a function $\gamma : \R^n \times \R^n \to \R$, chooses the box $i$ which maximizes $y_i - \gamma(\std,\observ) \cdot \sigma_i$. 

We use the following two policies as useful benchmarks: the \emph{optimal policy}, and the \emph{prophet}.
The optimal policy for a distribution $\D$, $\Opt_\D$, selects the box $i$ with maximum $\E \left[ X_i \mid Y_i = y_i \right]$. Its expected reward in outcome $\observ$ is precisely $\max_i \E \left[ X_i \mid Y_i = y_i \right]$. That is, $R_{\Opt_{\D}}(\D, \std) = \E_{\observ} \left[ \max_{i \in [n]} \E \left[ X_i \mid Y_i = y_i \right] \right]$. 
Finally, the (expected) reward of a prophet who knows $x_1, \dots, x_n$, for a distribution $\D$, is equal to $\E[ \D_{n:n} ]$, the expected maximum of $n$ i.i.d. draws from $\D$.

% \paragraph{Approximation}
% A policy $A$ in an $\alpha$-approximation for a fixed reward distribution $\D$ if $R_{A}(\D, \std) \geq \frac{1}{\alpha} R_{\Opt_{\D}}(\D, \std)$. A policy $A$ is an $\alpha$-approximation if for all reward distributions $\D$, $R_{A}(\D, \std) \geq \frac{1}{\alpha} R_{\Opt_{\D}}(\D, \std)$.

%For our negative results on $\Naive$ and $\Linear_\gamma$ we will use a specific fixed policy as our benchmark. For our positive results, we will aim to compete against the prophet, i.e. $\E[ \D_{n:n} ]$.

%That is, our goal is to design policies that are agnostic to $\D$, but are competitive against an \emph{informed} optimal policy, which knows $\D$.

\paragraph{Formalizing ``small'' and ``large'' noise environments}

Clearly, if $\sigma_i$ is large for almost all $i \in [n]$, then no policy can hope to get a non-trivial guarantee. Therefore, we intuitively need a condition that captures the fact that we need small noise for enough boxes. In the following couple of definitions, we formalize precisely what we mean by ``small'' and ``enough''.

\begin{definition}[Small noise]\label{dfn: small noise}
For any distribution $\D$, any $n$ and any $c \in (0,1]$, let $\good{\D}{n,c}$ be the set of vectors $\std \in \R_+^n$ where at least $cn$ values in $\std$ are at most $\frac{\E[\D_{cn:cn}]}{5\sqrt{2\ln n}}$. Formally, $\good{\D}{n,c} = \{\std \in \R_+^n \mid \sigma_1 \le \dots \le \sigma_n \text{ and } \sigma_{cn} \le \frac{\E[\D_{cn:cn}]}{5\sqrt{2\ln n}}\}$.
\end{definition}

For the case of MHR distributions, we only need a weaker condition to guarantee strong positive results. We state this condition in~\Cref{dfn:small noise mhr}. %Recall that a distribution has a monotone hazard rate (MHR) if $\frac{1-F(x)}{f(x)}$ is a non-increasing function.

\begin{definition}[Small noise for MHR]\label{dfn:small noise mhr}
For any MHR distribution $\D$ and any $n$, let $\goodMHR{\D}{n,c}$ be the set of vectors $\std \in \R_+^n$ where at least $n^c$ values in $\std$ are at most $\frac{\E[\D_{n^c:n^c}]}{18\sqrt{2c \ln n}}$. Formally, $\goodMHR{\D}{n,c} = \{\std \in \R_+^n \mid \sigma_1 \le \dots \le \sigma_n \text{ and }\sigma_{n^c} \le \frac{\E[\D_{n^c:n^c}]}{18\sqrt{2c \ln n}}\}$.
\end{definition}

Ideally, we would like to, whenever $\std \notin \good{\D}{n,c}$ or $\goodMHR{\D}{n,c}$, have strong negative results for, say, the optimal policy. We show such strong negative results for the optimal \emph{clairvoyant} policy, even for MHR distributions, even under a condition close to $\goodMHR{\D}{n,c}$. On the negative side, the precise condition is not the complement of $\goodMHR{\D}{n,c}$, but we lose an extra $\sqrt{c}$ factor. Under the following ``medium noise'' condition, we cannot hope to compete against the prophet (\Cref{thm:opt-bad-vs-prophet}).

\begin{definition}[Medium noise]
\label{dfn: large noise mhr}
For any distribution $\D$, any $n$ and any $c \in (0, 1]$, let $\badMHR{\D}{n,c}$ be the set of vectors $\std \in \R_+^n$ where at most $n^c$ values in $\std$ is at most $\frac{\E[\D_{n^c:n^c}]\}}{18 c\sqrt{2\ln n}}$. Formally, $\badMHR{\D}{n,c} = \{\std \in \R_+^n \mid \sigma_1 \le \dots \le \sigma_n \text{ and }\sigma_{n^c} > \frac{\E[\D_{n^c:n^c}]\}}{18 c\sqrt{2\ln n}}$.
\end{definition}

Finally, under the following ``large noise'' condition, closer to the complement of $\good{\D}{n,c}$ (with an extra $\sqrt{\ln(n)/\ln(cn)}$ factor), we cannot hope to do better than picking a box uniformly at random (\Cref{thm: opt bad vs random}).

\begin{definition}[Large noise]\label{dfn: large noise}
For any distribution $\D$, any $n$ and any $c \in (0,1]$, let $\bad{\D}{n,c}$ be the set of vectors $\std \in \R_+^n$ where at most $cn$ values in $\std$ are at most $\frac{\E[\D_{cn:cn}] \cdot \sqrt{\ln n}}{ln(cn)}$. Formally, $\bad{\D}{n,c} = \{\std \in \R_+^n \mid \sigma_1 \le \dots \le \sigma_n \text{ and } \sigma_{cn} > \frac{\E[\D_{cn:cn}] \cdot \sqrt{\ln n}}{\ln(cn)} \}$.
\end{definition}

\subsection{Technical Lemmas}\label{subsec:technical}

Here, we present some definitions and a few technical lemmas that will be useful throughout the paper. All missing proofs can be found in Appendix~\ref{app: missing from technical}.

We often use the following lemma (\Cref{lem:normal-bound}) about the CDF of the standard normal distribution, and a lemma (\Cref{lem:compare-order-stats}) about the relation between the expected maximum of $a$ and $b$ i.i.d. samples from an arbitrary distribution $\D$. We write $\D_{k:n}$ for the $k$-th lowest order statistic out of $n$ i.i.d. samples, that is, $\D_{1:n} \leq \D_{2:n} \leq \dots \leq \D_{n:n}$.
Throughout the paper, $\Phi(x)$ is the CDF of the standard normal distribution, and $\phi(x)$ is the PDF of the standard normal distribution.

\begin{lemma}[\cite{gordon1941}]
\label{lem:normal-bound}
For all $t > 0$, we have $1 - \frac{1}{\sqrt{2 \pi}} \frac{1}{t} e^{-t^2/2} \le \Phi(t) \le 1 - \frac{1}{\sqrt{2 \pi}} \frac{t}{t^2+1} e^{-t^2/2}$. Furthermore, this implies directly that for all $t > 0$, $1 - \frac{\phi(t)}{t} \le \Phi(t) \le 1 - \frac{t \phi(t)}{t^2 + 1}$.
\end{lemma}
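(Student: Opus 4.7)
The statement is a classical Mills-ratio bound, so let me rewrite it in a form that is easier to work with. Setting $Q(t) = 1 - \Phi(t)$ and $\phi(t) = \frac{1}{\sqrt{2\pi}} e^{-t^2/2}$, the claim is equivalent to
\[
\frac{t\,\phi(t)}{t^2+1} \le Q(t) \le \frac{\phi(t)}{t} \qquad \text{for all } t > 0.
\]
Since the second sentence of the lemma is just the first sentence after plugging the definition of $\phi$ back in, it follows immediately once we prove the display above. The plan is to handle each inequality by producing an auxiliary function that vanishes at $+\infty$ and is strictly decreasing on $(0,\infty)$, so must be nonnegative everywhere. The one algebraic fact I will use repeatedly is $\phi'(x) = -x\,\phi(x)$.

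For the upper bound, consider $U(t) = \phi(t)/t - Q(t)$. Clearly $U(t) \to 0$ as $t \to \infty$ (both summands vanish). Differentiating and using $\phi'(t) = -t\phi(t)$, the computation collapses to $U'(t) = -\phi(t)/t^2$, which is strictly negative on $(0,\infty)$. Hence $U$ is strictly decreasing to $0$, so $U(t) \ge 0$ and thus $Q(t) \le \phi(t)/t$.

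For the lower bound, consider $L(t) = Q(t) - \frac{t\phi(t)}{t^2+1}$. Again $L(t) \to 0$ at infinity. The derivative is where a bit of care is needed: by the quotient rule and $\phi'(t) = -t\phi(t)$, the derivative of $t\phi(t)/(t^2+1)$ is $\phi(t)\cdot\frac{(1-t^2)(t^2+1) - 2t^2}{(t^2+1)^2}$. Expanding the numerator gives $1 - t^4 - 2t^2 = 2 - (t^2+1)^2$, so after adding the $-\phi(t)$ coming from $Q'(t)$, the sum telescopes to
\[
L'(t) \;=\; -\phi(t) \cdot \frac{2}{(t^2+1)^2} \;<\; 0.
\]
Thus $L$ is strictly decreasing to $0$, giving $L(t) \ge 0$, which is the desired lower bound.

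The only mildly delicate step is the cancellation in computing $L'(t)$; everything else is mechanical. The upper-bound argument is essentially the classical one-line proof (bounding $\phi(x)$ by $(x/t)\phi(x)$ on $[t,\infty)$ and integrating), but the monotonicity framing is uniform with the lower bound and makes the whole proof fit on a few lines without any appeals to tail estimates beyond elementary calculus.
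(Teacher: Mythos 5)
Your proof is correct. The paper itself gives no argument for this lemma; it is stated as a classical fact with a citation to Gordon (1941), so there is no in-paper proof to compare against. That said, your monotonicity argument is a well-chosen, self-contained derivation. Both computations check out: $U'(t) = -\phi(t)/t^2$ follows directly from $\phi'(t) = -t\phi(t)$ and the quotient rule, and for the lower bound the numerator $\phi(t)\bigl[(1-t^2)(t^2+1) - 2t^2\bigr] = \phi(t)\bigl[2 - (t^2+1)^2\bigr]$ yields $L'(t) = -2\phi(t)/(t^2+1)^2 < 0$ exactly as you claim; since both $U$ and $L$ vanish at $+\infty$ and decrease strictly, both are nonnegative on $(0,\infty)$. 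For contrast, Gordon's original 1941 paper obtains these bounds as the first two convergents of a continued-fraction expansion of the Mills ratio $Q(t)/\phi(t)$, which simultaneously produces the whole hierarchy of tighter two-sided bounds; your approach proves just the two inequalities needed, but does so from first principles with nothing beyond elementary calculus, which is arguably preferable when only this pair of bounds is used.
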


\begin{lemma}
\label{lem:compare-order-stats}
    For any distribution $\D$ supported on $[0, \infty)$ and for any two integers $1 \le a < b$, we have
    $\frac{\E[\D_{a:a}]}{a} \ge \frac{\E[\D_{b:b}]}{b}.$
\end{lemma}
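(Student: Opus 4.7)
The plan is to reduce the claim to a pointwise inequality between the CDFs of the order statistics, which then integrates immediately to the desired bound. Since $\D$ is supported on $[0,\infty)$, for any non-negative random variable I can write the expectation as the integral of the tail, so
\[
\E[\D_{n:n}] \;=\; \int_0^\infty \Pr[\D_{n:n} > x]\,dx \;=\; \int_0^\infty \bigl(1 - F(x)^n\bigr)\,dx,
\]
using that $\D_{n:n}$ has CDF $F^n$. Thus it suffices to establish, for every $x \ge 0$, the pointwise inequality $\frac{1 - F(x)^a}{a} \ge \frac{1 - F(x)^b}{b}$; integrating over $x \in [0,\infty)$ then yields the lemma.

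The pointwise inequality is a statement about a single variable $u = F(x) \in [0,1]$: I need to show that $g(n) := \frac{1 - u^n}{n}$ is non-increasing in $n \in \mathbb{N}$ for every fixed $u \in [0,1]$. The cleanest way to see this is the factorization
\[
\frac{1 - u^n}{n} \;=\; (1-u)\cdot \frac{1 + u + u^2 + \cdots + u^{n-1}}{n},
\]
which expresses $g(n)$ as $(1-u)$ times the arithmetic mean of the non-increasing sequence $1, u, u^2, \dots, u^{n-1}$. Extending the averaging window from $a$ terms to $b > a$ terms only incorporates smaller values, so the mean is non-increasing in $n$, hence so is $g(n)$. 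Applying this with $u = F(x)$ gives the pointwise inequality for all $x$.

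There is essentially no obstacle here. The two edge cases $u=1$ (where both sides are $0$) and $u=0$ (where both sides equal $1/n$, so the inequality is strict for $a<b$) are handled automatically by the factored form. The proof will consist of (i) the tail-integral representation of $\E[\D_{n:n}]$, (ii) the one-line factorization establishing monotonicity of $g(n)$, and (iii) integration of the pointwise inequality in $x$.
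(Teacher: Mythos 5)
Your proof is correct and follows essentially the same route as the paper's: both reduce to the tail-integral identity $\E[\D_{n:n}] = \int_0^\infty (1 - F(x)^n)\,dx$ and then establish the pointwise monotonicity of $\frac{1-u^n}{n}$ in $n$ for $u \in [0,1]$. The only cosmetic difference is in how that pointwise fact is verified — you phrase it as ``the average of the non-increasing sequence $1, u, \dots, u^{n-1}$ is non-increasing in $n$,'' whereas the paper runs the equivalent algebraic chain starting from $\sum_{i=0}^{\ell-1} t^i \ge \ell t^\ell$ for the adjacent step $\ell \to \ell+1$.
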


The following definitions will be crucial in describing our lower bounds.

\begin{definition}[Cai and Daskalakis~\cite{cai2011extreme}]\label{dfn: alpha_m}
For a distribution $\D$, let $\alpha^{(\D)}_m = \inf\{x \mid F(x) \ge 1 - \frac{1}{m} \}$ be the $(1 - \frac{1}{m})$-th quantile of $\D$.
\end{definition}

\begin{definition}\label{dfn: beta_m}
For a distribution $\D$, let $\beta^{(\D)}_m = \inf\{x \mid \E[\D \mid \D \ge x] \cdot \Pr[\D \ge x] \le \frac{\E[\D]}{m}\}$ be the smallest threshold such that the contribution to $\E[\D]$ from values at least this threshold is at most $\frac{\E[\D]}{m}$.
\end{definition}

% {\color{red} Is this ever used??}
% \begin{lemma}\label{lem:beta-high-quantile}
%     For any distribution $\D$ supported on $[0, \infty)$ and any $m > 1$, $\Pr[\D \le \beta^{(\D)}_m] \ge 1 - \frac{1}{m}$. In other words, $\beta^{(\D)}_{m} \ge \alpha^{(\D)}_m$.
% \end{lemma}

\paragraph{Technical lemmas for MHR distributions}

Here, we prove a technical lemma for the concentration of the maximum of $n$ i.i.d. samples of an MHR distribution, that might be of independent interest. 

It is known that the maximum of i.i.d. draws from an MHR distribution is also MHR~\cite{barlow1996}. This implies that the probability that the maximum exceeds its mean, $\Pr \left[ \D_{n:n} \geq  \E[\D_{n:n}] \right]$, is at least $1/e$. In~\Cref{lem:lower bound on max passing expectation} we show that, in fact, this maximum concentrates around its mean: it does not exceed twice its mean with high probability. We note that a related, but incomparable, statement is given by~\cite{cai2011extreme}, who show that at least a $(1-\epsilon)$-fraction of $\E[\max_i X_i]$ is contributed by values no larger than $\E[\max_i X_i] \cdot \log(\frac{1}{\epsilon})$, where the $X_i$s are (possibly not identical) MHR distributions.

\begin{lemma}\label{lem:lower bound on max passing expectation}
    For any MHR distribution $\D$ and any $n \ge 4$,
    we have
    \[\Pr[\D_{n:n} < 2 \cdot \E[\D_{n:n}]] \ge 1 - \frac{1}{n^{3/5}}.\]
\end{lemma}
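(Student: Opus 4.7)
Write $\mu := \E[\D_{n:n}]$, $q := \alpha_n^{(\D)}$, and $H(x) := -\ln(1 - F(x)) = \int_0^x h(s)\,ds$ for the cumulative hazard of $\D$. Since $\D$ is MHR, $h$ is non-decreasing, so $H$ is convex with $H(0) = 0$ and $H(q) = \ln n$. The plan is to split the argument into two pieces: (i) a tail bound showing $\Pr[\D_{n:n} \ge \lambda q] \le n^{1-\lambda}$ for every $\lambda \ge 1$, and (ii) the quantile-versus-mean lemma advertised in the introduction, namely $q \le (5/4)\mu$ for $n \ge 4$. Choosing $\lambda = 8/5$ in (i) and using (ii) to convert $(8/5)q$ into $2\mu$ gives the claim immediately.

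Part (i) is a one-line consequence of convexity of $H$. Because $H(0) = 0$ and $H$ is convex, the function $H(x)/x$ is non-decreasing, hence $H(\lambda q) \ge \lambda H(q) = \lambda \ln n$ for every $\lambda \ge 1$. Therefore $1 - F(\lambda q) \le n^{-\lambda}$, and a union bound over the $n$ samples gives $\Pr[\D_{n:n} \ge \lambda q] \le n(1 - F(\lambda q)) \le n^{1-\lambda}$.

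For part (ii), I would start from $\mu \ge \int_0^q (1 - F(x)^n)\,dx$ and again invoke convexity: on $[0,q]$ we have $H(x) \le (x/q)\ln n$, so $F(x) \le 1 - n^{-x/q}$ and therefore $1 - F(x)^n \ge 1 - (1 - n^{-x/q})^n$. The substitution $u = n^{-x/q}$ turns the integral into
\[\mu \ge \frac{q}{\ln n} \int_{1/n}^{1} \frac{1 - (1-u)^n}{u}\, du = \frac{q}{\ln n}\bigl(H_n - C_n\bigr),\]
where $C_n := \int_0^{1/n} (1-(1-u)^n)/u\, du$ and I use the standard identity $\int_0^1 (1-(1-u)^n)/u\, du = H_n$. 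Expanding $(1-u)^n$ as a polynomial produces the closed form $C_n = \sum_{k=1}^n \binom{n}{k}(-1)^{k+1}/(k\, n^k)$, and a direct evaluation shows $(H_n - C_n)/\ln n \ge 4/5$ for every $n \ge 4$ (the minimum sits near $n = 4$, with value $\approx 0.90$, and the ratio tends to $1$ as $n \to \infty$). This gives $\mu \ge (4/5)q$, equivalently $2\mu \ge (8/5)q$, and combining with part (i) at $\lambda = 8/5$ completes the proof.

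The main obstacle is part (ii): the constant $4/5$ leaves only a modest slack below the true ratio $(H_n - C_n)/\ln n$, which approaches $1$ from below at rate $O(1/\ln n)$. A clean write-up will need either to verify the handful of small cases $n \in \{4,5,6,\ldots\}$ directly from the closed form for $C_n$ and then use an asymptotic expansion for large $n$, or to find a monotonicity-in-$n$ argument for the inequality. Part (i), by contrast, is almost automatic once we have convexity of the cumulative hazard.
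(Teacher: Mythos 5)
Your macro-structure is the same as the paper's: show $\alpha_{n^{8/5}}^{(\D)} \le \tfrac{8}{5}\alpha_n^{(\D)} \le 2\E[\D_{n:n}]$ and finish with a tail estimate. The only real difference is that you re-derive both ingredients from a single source, convexity of the cumulative hazard $H = -\ln(1-F)$, rather than cite them. Part~(i) is fine and is precisely the standard proof of \Cref{lem: cai daskalakis alpha bound} (the Cai--Daskalakis inequality $d\alpha_m^{(X)} \ge \alpha_{m^d}^{(X)}$); your union-bound tail estimate $\Pr[\D_{n:n}\ge\lambda q]\le n^{1-\lambda}$ is equivalent to the paper's $(1-n^{-8/5})^n \ge 1 - n^{-3/5}$ via Bernoulli.

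Part~(ii) is where the gap sits, and you flag it yourself. You need $\alpha_n^{(\D)} \le \tfrac{5}{4}\E[\D_{n:n}]$, which is the upper bound of \Cref{lem:order-stat-vs-quantile}. The paper proves it by observing that $\D_{n:n}$ is itself MHR, that $\alpha_n^{(\D)} \le \zeta^{(\D_{n:n})}_{1/e}$, and then invoking the Barlow--Proschan quantile bound (\Cref{lem:barlow-quantile-bounds}). Your route --- the chord bound $H(x) \le (x/q)\ln n$ on $[0,q]$, the substitution $u = n^{-x/q}$, and the reduction to $(H_n - C_n)/\ln n \ge 4/5$ --- is correctly carried out up to that last inequality, which you leave as a numerical observation. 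This is a genuine missing step, and it is not a triviality: the cheap estimate $C_n \le 1$ (from $1-(1-u)^n \le nu$) only yields $H_n - C_n \ge H_n - 1$, which at $n=4$ equals $13/12 \approx 1.083$ and already falls below $(4/5)\ln 4 \approx 1.109$. So one genuinely needs a sharper bound on $C_n$ together with either a monotonicity-in-$n$ argument or a finite check plus an asymptotic tail, exactly as you anticipate. Until that is supplied, your proof is incomplete at the one step the paper closes by citation; the citation-based route is simply cleaner here, though your convexity argument has the appeal of unifying both sub-lemmas under a single observation.
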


\Cref{lem:lower bound on max passing expectation} is an immediate consequence of the following two lemmas. The first is shown in~\cite{cai2011extreme}; the second we prove in~\Cref{app: missing from technical}.

\begin{lemma}[\cite{cai2011extreme}; Lemma 34]\label{lem: cai daskalakis alpha bound}
If the distribution of a random variable $X$ satisfies MHR, $m \geq 1$ and $d \geq 1$, then $d \alpha^\text{(X)}_m \geq \alpha^\text{(X)}_{m^d}$.
\end{lemma}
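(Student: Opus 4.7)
The plan is to reduce the claim to a one-line consequence of the convexity of the cumulative hazard function. Define $g(x) = -\ln(1 - F(x))$ on the support of $X$. Under MHR, the hazard rate $h(x) = f(x)/(1-F(x))$ is non-decreasing, and since $g'(x) = h(x)$, the function $g$ is convex. Because $\D$ is supported on $[0,\infty)$ with $F(0)=0$, we also have $g(0)=0$. The quantile definition translates directly: $\alpha^{(X)}_m$ is the smallest $x$ with $F(x) \geq 1 - 1/m$, equivalently with $g(x) \geq \ln m$, so $g(\alpha^{(X)}_m) \geq \ln m$ (with equality in the continuous case), and analogously $d\alpha^{(X)}_m \geq \alpha^{(X)}_{m^d}$ would follow from $g(d\alpha^{(X)}_m) \geq d\ln m$.

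The key geometric fact I would invoke is that for any convex function $g$ with $g(0)=0$, the ratio $g(x)/x$ is non-decreasing on $(0,\infty)$. This follows from writing $\alpha_m = (1/d)\cdot(d\alpha_m) + (1 - 1/d)\cdot 0$ and applying convexity:
\[
g(\alpha^{(X)}_m) \;\leq\; \tfrac{1}{d}\, g(d\alpha^{(X)}_m) + \bigl(1-\tfrac{1}{d}\bigr) g(0) \;=\; \tfrac{1}{d}\, g(d\alpha^{(X)}_m),
\]
which rearranges to $g(d\alpha^{(X)}_m) \geq d \cdot g(\alpha^{(X)}_m)$. Chaining with $g(\alpha^{(X)}_m) \geq \ln m$ gives $g(d\alpha^{(X)}_m) \geq d\ln m$, i.e.\ $1 - F(d\alpha^{(X)}_m) \leq m^{-d}$, i.e.\ $F(d\alpha^{(X)}_m) \geq 1 - m^{-d}$. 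By the definition of $\alpha^{(X)}_{m^d}$ as an infimum, this yields $d\alpha^{(X)}_m \geq \alpha^{(X)}_{m^d}$, as desired.

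I would also check the degenerate cases: when $m=1$ we have $\alpha^{(X)}_1 = 0 = \alpha^{(X)}_{1^d}$ and the bound is trivial; when $d=1$ it is tautological; when $\alpha^{(X)}_m = 0$ (which only happens if $F$ has an atom at $0$ with mass $\geq 1-1/m$) the convexity argument still reads $0 \geq \alpha^{(X)}_{m^d}$, which is forced because the atom at $0$ already carries mass $\geq 1 - 1/m \geq 1 - 1/m^d$. I do not anticipate a serious obstacle here; the only subtlety is spelling out that under MHR one can safely assume $F$ is continuous on its interior support (so that $g$ is genuinely differentiable and the quantile inequalities are equalities), which is standard in the Cai--Daskalakis setting. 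The substantive content of the lemma is entirely captured by the convexity-plus-$g(0)=0$ observation above.
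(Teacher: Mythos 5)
The paper does not actually prove this lemma---it cites it as Lemma~34 of Cai and Daskalakis---so there is no in-paper proof to compare against. Your argument is correct and is the standard one for this kind of MHR quantile bound: the cumulative hazard $g(x) = -\ln(1-F(x))$ has derivative equal to the hazard rate, hence is convex under MHR, and $g(0)=0$ since $F(0)=0$; therefore $g(x)/x$ is non-decreasing, so $g(d\alpha_m) \geq d\, g(\alpha_m) \geq d\ln m$, which gives $1-F(d\alpha_m) \leq m^{-d}$ and hence $d\alpha_m \geq \alpha_{m^d}$. To my recollection this is also the route taken in the cited source, so you have reproduced the intended proof.

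One small slip in the edge-case aside: the inequality $1 - 1/m \geq 1 - 1/m^d$ that you invoke to dispatch the case $\alpha_m^{(X)} = 0$ is false for $m > 1$ and $d > 1$ (it would require $m^d \leq m$). The correct way to dismiss that case is to observe that the paper's definition of MHR presupposes a density $f$, so $F$ is continuous; since $\D$ is supported on $[0,\infty)$, this forces $F(0)=0$, hence $\alpha_m^{(X)} > 0$ for every $m > 1$, and the atom-at-zero scenario never arises (while $m=1$ is trivial, as you note). This does not affect the validity of the main argument.
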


\begin{lemma}
\label{lem:order-stat-vs-quantile}
    For any MHR distribution $\D$ and any $n \ge 4$, we have $\frac{1}{3} \cdot \E[\D_{n:n}] \le \alpha^{(\D)}_n \le \frac{5}{4} \cdot \E[\D_{n:n}]$.
\end{lemma}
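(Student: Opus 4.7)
My plan is to exploit the MHR assumption via convexity of the cumulative hazard $H(t) := -\ln(1-F(t)) = \int_0^t h(s)\,ds$, which is convex with $H(0)=0$ since the hazard rate $h$ is non-decreasing. Because the $(1-1/n)$-quantile satisfies $H(\alpha^{(\D)}_n) = \ln n$, the standard fact that $H(t)/t$ is non-decreasing (for a convex function vanishing at zero) yields two-sided control of the tail in terms of $\alpha^{(\D)}_n$: for $t \le \alpha^{(\D)}_n$ one has $1 - F(t) \ge n^{-t/\alpha^{(\D)}_n}$, and for $t \ge \alpha^{(\D)}_n$ one has $1 - F(t) \le n^{-t/\alpha^{(\D)}_n}$. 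These two inequalities will drive the two directions of the lemma.

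For the lower bound $\alpha^{(\D)}_n \ge \E[\D_{n:n}]/3$, I would write $\E[\D_{n:n}] = \int_0^\infty (1 - F(t)^n)\,dt$, split the integral at $\alpha^{(\D)}_n$, use the trivial bound $1 - F(t)^n \le 1$ on $[0, \alpha^{(\D)}_n]$ (contributing at most $\alpha^{(\D)}_n$), and on $[\alpha^{(\D)}_n, \infty)$ combine the union bound $1 - F(t)^n \le n(1-F(t))$ with the upper MHR tail bound to obtain integrand $\le n \cdot n^{-t/\alpha^{(\D)}_n}$. Direct integration of that exponential yields $\alpha^{(\D)}_n/\ln n$, so $\E[\D_{n:n}] \le \alpha^{(\D)}_n(1 + 1/\ln n)$, which for $n \ge 4$ is comfortably at most $3\alpha^{(\D)}_n$.

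The trickier direction is the upper bound $\alpha^{(\D)}_n \le (5/4)\,\E[\D_{n:n}]$. Here I would lower bound $\E[\D_{n:n}] \ge \int_0^{\alpha^{(\D)}_n} (1 - F(t)^n)\,dt$, plug in the lower tail bound $F(t) \le 1 - n^{-t/\alpha^{(\D)}_n}$ followed by $(1-x)^n \le e^{-nx}$ to get integrand $\ge 1 - e^{-n^{1-t/\alpha^{(\D)}_n}}$, and then apply the substitutions $u = t/\alpha^{(\D)}_n$ and $v = n^{1-u}$ to transform the integral into
\[
\alpha^{(\D)}_n \cdot \frac{1}{\ln n} \int_1^n \frac{1-e^{-v}}{v}\,dv \;=\; \alpha^{(\D)}_n \cdot \frac{1}{\ln n}\left(\ln n - \int_1^n \frac{e^{-v}}{v}\,dv\right) \;\ge\; \alpha^{(\D)}_n\left(1 - \frac{E_1(1)}{\ln n}\right),
\]
where $E_1(1) = \int_1^\infty e^{-v}/v\,dv < 0.22$. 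For $n \ge 4$ we have $\ln n \ge 2\ln 2 > 1.38$, so the factor is at least $1 - 0.22/1.38 > 4/5$, delivering the bound.

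The main obstacle will be this upper-bound direction: since the target constant $5/4$ gives little slack, the analysis needs the full $1/v$ weight (rather than a cruder $e^{-v}$ estimate) after the substitutions in order to extract the $\ln n - O(1)$ asymptotic. The lower-bound direction, by contrast, has ample slack and follows directly from the convex-hazard tail bound combined with a union bound.
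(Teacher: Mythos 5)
Your proof is correct, but takes a genuinely different route from the paper's. The paper invokes two facts from Barlow--Proschan: that $\D_{n:n}$ is itself MHR, and a two-sided bound relating the $p$-quantile of an MHR variable to its mean for $p\le 1-1/e$. It then observes that $\alpha^{(\D)}_n$ coincides with the $(1-1/n)^n$-quantile of $\D_{n:n}$, sandwiches this between the $81/256$- and $1/e$-quantiles for $n\ge 4$, and reads off the constants $1/3$ and $5/4$. You instead work directly with $\D$ (never needing the MHR-closure of order statistics or the quantile-mean comparison): convexity of the cumulative hazard $H$ with $H(0)=0$ gives $H(t)/t$ non-decreasing, hence the two exponential-in-quantile tail bounds $1-F(t) \gtrless n^{-t/\alpha^{(\D)}_n}$ on either side of $\alpha^{(\D)}_n$, which you then integrate against $1-F(t)^n$. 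Your route is more elementary and self-contained, and it yields strictly tighter, $n$-dependent constants: you get $\E[\D_{n:n}] \le \alpha^{(\D)}_n(1+1/\ln n)$ and $\E[\D_{n:n}] \ge \alpha^{(\D)}_n(1-E_1(1)/\ln n)$, both of which approach equality as $n\to\infty$, whereas the paper's constants $1/3$ and $5/4$ are bounded away from $1$ for all $n$. The only shared caveat is that both arguments implicitly assume $F(0)=0$ and that $F$ is continuous at $\alpha^{(\D)}_n$ (so $H(0)=0$ and $H(\alpha^{(\D)}_n)=\ln n$ exactly); these hold for any absolutely continuous MHR distribution, which is the setting the paper is working in, so you are on parity with the paper here.
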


\begin{proof}[Proof of \Cref{lem:lower bound on max passing expectation}]

Together the lemmas give that
$\alpha^{(\D)}_{n^{8/5}} \le^{\text{(\Cref{lem: cai daskalakis alpha bound})}} \frac{8}{5} \alpha^{(\D)}_n \le^{\text{({\Cref{lem:order-stat-vs-quantile}})}} 2 \E[\D_{n:n}]$. Therefore,
\[
    \Pr[\D_{n:n} \le 2 \E[\D_{n:n}]] \ge \Pr[\D_{n:n} \le \alpha^{(\D)}_{n^{8/5}}] 
        = \left(1 - \frac{1}{n^{8/5}}\right)^n 
        \ge^{\text{(Bernoulli's inequality)}} 1 - \frac{1}{n^{3/5}}.\qedhere
\]
\end{proof}

\section{Negative results for large noise environments}\label{sec: lower bounds for large noise}

Before discussing small noise environments, we show strong lower bounds for the optimal clairvoyant policy (an optimal policy that knows $\D$) in large noise environments, even under the assumption that the distribution $\D$ is MHR. All missing proofs can be found in~\Cref{app: missing from lower bounds for regimes}.

Starting with ``medium'' noise,~\Cref{thm:opt-bad-vs-prophet} shows that, for an MHR distribution $\D$, when $\std \in \badMHR{\D}{n,c}$, even an optimal clairvoyant policy cannot approximate the prophet to some absolute value proportional to $\sqrt{c}$. First, as we discussed in~\Cref{sec: prelims}, note that $\goodMHR{\D}{n,c}$ is \emph{almost}, but not exactly, the complement of $\badMHR{\D}{n,c}$; the complement of $\badMHR{\D}{n,c}$ includes $\std$ where at least $n^c$ values in $\std$ are at most $\frac{\E[\D_{n^c:n^c}]\}}{18 c\sqrt{2\ln n}}$, while $\goodMHR{\D}{n,c}$ is characterized by $\std$'s containing at least $n^c$ values upper bounded by $\frac{ \E[\D_{n^c:n^c}]}{18\sqrt{2c \ln n}}$, implying that $\badMHR{\D}{n,c}$ is a strict subset of the complement of $\goodMHR{\D}{n,c}$ as $c \le 1$. This leaves a gap (arguably insignificant, but a gap nonetheless) in our understanding.
On the flip side, our negative result holds against the (well-behaved) class of MHR distributions, even against the strong benchmark of the optimal clairvoyant policy.

\begin{theorem}
\label{thm:opt-bad-vs-prophet}
    There exists a MHR distribution $\D$ where $\E[\D_{k:k}] \in \omega(\E[D])$ for $k \in \omega(1)$, such that for all $n \ge n_0$, for some constant $n_0$, for all $c \in [\frac{1}{400 \sqrt{\ln n}}, 1]$, and all $\std \in \badMHR{\D}{n, c}$, we have
     \[R_{\Opt_\D}(\D, \std) \in O \left( \sqrt{c} \cdot \E[\D_{n:n}] \right).\]
\end{theorem}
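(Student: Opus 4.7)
My plan is to take $\D$ to be the half-normal distribution with density $\sqrt{2/\pi}\, e^{-x^2/2}$ on $[0,\infty)$: it is MHR (its hazard rate $\phi(x)/(1-\Phi(x))$ is monotone, growing like $x$ for large $x$), has $\E[\D] = \sqrt{2/\pi}$, and satisfies $\E[\D_{k:k}] = \Theta(\sqrt{\ln k}) \in \omega(\E[\D])$ as $k \to \infty$, so the distributional hypotheses of the theorem are met. Fix $\std \in \badMHR{\D}{n,c}$, set $\tau := \E[\D_{n^c:n^c}]/(18 c\sqrt{2\ln n})$, and partition $[n]$ into $A := \{i : \sigma_i \le \tau\}$ (of size at most $n^c$ by the medium-noise hypothesis) and $B := [n] \setminus A$. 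Since $R_{\Opt_\D}(\D,\std) = \E[\max_i g_i(Y_i)]$ with $g_i(y) := \E[X_i \mid Y_i = y]$, I bound the $A$- and $B$-contributions separately. For $A$, the pointwise inequality $\max_{i \in A}\E[X_i \mid Y] \le \E[\max_{i \in A} X_i \mid Y]$ together with the tower property yields $\E[\max_{i \in A} g_i(Y_i)] \le \E[\D_{n^c:n^c}] = \Theta(\sqrt{c\ln n}) = O(\sqrt c\, \E[\D_{n:n}])$, using $\E[\D_{n:n}] = \Theta(\sqrt{\ln n})$.

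The bulk of the work is bounding the $B$-contribution via a clean pointwise bound on $g_i$. Completing the square in $e^{-x^2/2}\, e^{-(y-x)^2/(2\sigma_i^2)}$ shows the posterior of $X_i \mid Y_i = y$ is $\Norm\bigl(y/(\sigma_i^2+1),\, \sigma_i^2/(\sigma_i^2+1)\bigr)$ truncated to $[0,\infty)$, hence $g_i(y) = \tilde\sigma\, h(a)$ where $\tilde\sigma := \sigma_i/\sqrt{\sigma_i^2+1} \le 1$, $a := y/(\sigma_i \sqrt{\sigma_i^2+1})$, and $h(a) := a + \phi(a)/\Phi(a)$. I claim the elementary inequality $h(a) \le \max(a,0) + 1$: for $a \ge 0$ it follows from $\phi(a)/\Phi(a) \le 2\phi(0) < 1$ (via \Cref{lem:normal-bound}); for $a < 0$, observing that $h(a)$ equals the mean of a $\Norm(a,1)$ truncated to $[0,\infty)$ shows $h$ is monotone increasing in $a$, so $h(a) \le h(0) = \sqrt{2/\pi} < 1$. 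Using the identity $\tilde\sigma\, a = y/(\sigma_i^2+1)$ together with $\tilde\sigma \le 1$ then yields the uniform pointwise bound $g_i(y) \le 1 + \max(y,0)/\sigma_i^2$.

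Plugging this in and using $\max(Y_i,0) \le X_i + |\epsilon_i|$ (since $X_i \ge 0$), $\E[\max_{i \in B} g_i(Y_i)] \le 1 + \E[\max_B X_i/\sigma_i^2] + \E[\max_B |\epsilon_i|/\sigma_i^2]$. The medium-noise condition combined with $\E[\D_{n^c:n^c}] = \Theta(\sqrt{c\ln n})$ gives $\tau = \Theta(1/\sqrt c)$, so for $i \in B$ we have $1/\sigma_i^2 \le 1/\tau^2 = O(c)$; this bounds the first piece by $O(c\, \E[\D_{n:n}]) = O(c\sqrt{\ln n}) \le O(\sqrt c\, \E[\D_{n:n}])$. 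For the second, $|\epsilon_i|/\sigma_i^2 = |Z_i|/\sigma_i$ with $Z_i := \epsilon_i/\sigma_i \sim \Norm(0,1)$ i.i.d., and $\E[\max_i |Z_i|/\sigma_i] \le \E[\max_i |Z_i|]/\tau = O(\sqrt{\ln n}/\tau) = O(\sqrt{c\ln n}) = O(\sqrt c\, \E[\D_{n:n}])$. Summing with the $A$-bound completes the proof.

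The main obstacle is the pointwise bound $g_i(y) \le 1 + \max(y,0)/\sigma_i^2$: the $a \ge 0$ branch is a direct Mills-ratio computation, but the $a < 0$ branch would be painful to handle by asymptotic expansions of $\phi(a)/\Phi(a)$; the clean route is to recognize $h(a) = a + \phi(a)/\Phi(a)$ as the mean of a $\Norm(a,1)$ truncated to $[0,\infty)$ and invoke its monotonicity in $a$ to get $h(a) \le h(0) < 1$. After the pointwise bound is in hand, the remaining steps reduce to standard sub-Gaussian max estimates and plugging in the size of $\tau$.
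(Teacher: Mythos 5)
Your proof is correct and follows essentially the same route as the paper: the half-normal distribution, a split into the at most $n^c$ low-noise boxes (bounded by $\E[\D_{n^c:n^c}] = \Theta(\sqrt{c\ln n})$) versus the rest, and the pointwise posterior bound $g_i(y) \le 1 + \max(y,0)/\sigma_i^2$, which is the paper's \Cref{lem:upp-bound-expected-posterior} (with the $y<0$ case handled by monotonicity, as in \Cref{lem: monotonicity of expected posterior}), followed by Gaussian max estimates and the medium-noise lower bound $\sigma_i > \tau = \Theta(1/\sqrt{c})$. The only step you leave implicit is absorbing the additive constant $1$ into $O(\sqrt{c}\,\E[\D_{n:n}])$, which is precisely where the hypothesis $c \ge \frac{1}{400\sqrt{\ln n}}$ is needed, exactly as the paper uses it to absorb $\sqrt{2/\pi}$.
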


One way to interpret~\Cref{thm:opt-bad-vs-prophet} is that, for any desired constant approximation $\alpha$, for all large enough $n$, one can select a small enough $c$ and $\std$ that satisfies the ``medium'' noise condition (noting that this condition also depends on $c$), such that the optimal clairvoyant policy does not achieve an $\alpha$ approximation. We include the fact that $\E[\D_{k:k}] \in \omega(\E[D])$, to highlight that the distribution is not trivial. For example, it is not the case that the expectation is already a constant away from the expected maximum. %; such condition is not necessary for this theorem, but it will be crucial for \Cref{thm: opt bad vs random} below.

The distribution that witnesses \Cref{thm:opt-bad-vs-prophet} is the standard half-normal distribution $\D = |\Norm(0, 1)|$. We start by proving that this distribution is MHR, and bounding its expected maximum value. %, observing that it is trivial to see that $\D$ is supported on $[0, \infty)$.

\begin{lemma}\label{lem:order-stats-half-norm}
    $\D = |\Norm(0, 1)|$ is MHR, $\E[\D] = \sqrt{\frac{2}{\pi}}$, and $\frac{4}{5} \cdot \sqrt{\ln n} \le \E[\D_{n:n}] \le 3 \sqrt{2} \cdot \sqrt{\ln n}$ for $n \ge 8$.
\end{lemma}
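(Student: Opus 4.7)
My plan is to dispatch the three assertions separately, since each uses a different tool.

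For the MHR claim, the plan is to reduce the hazard rate of $\D = |\Norm(0,1)|$ to the hazard rate of the standard normal restricted to $[0,\infty)$. Concretely, $\D$ has density $f(x) = \sqrt{2/\pi}\,e^{-x^2/2} = 2\phi(x)$ and CDF $F(x) = 2\Phi(x)-1$ on $[0,\infty)$, so $\frac{1-F(x)}{f(x)} = \frac{1-\Phi(x)}{\phi(x)}$, which is Mills' ratio for $x \ge 0$. Monotonicity of Mills' ratio (a classical fact provable from $(1-\Phi(x))' = -\phi(x)$ together with the ODE $\phi'(x) = -x\phi(x)$, or equivalently the bound in \Cref{lem:normal-bound}) immediately gives MHR. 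The mean calculation is the standard one: $\E[\D] = 2\int_0^\infty x\phi(x)\,dx = \sqrt{2/\pi}$.

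For the upper bound on $\E[\D_{n:n}]$, the plan is direct integration using tail bounds. Writing $\E[\D_{n:n}] = \int_0^\infty \Pr[\D_{n:n} > t]\,dt$ and applying the union bound gives $\Pr[\D_{n:n} > t] \le \min\bigl(1,\, n\Pr[\D > t]\bigr)$; \Cref{lem:normal-bound} yields $\Pr[\D > t] \le \frac{2\phi(t)}{t}$. Splitting at $T = \sqrt{2\ln n}$ (so that $ne^{-T^2/2} = 1$), bounding the first piece by $T$ and the tail by $\int_T^\infty \frac{n\sqrt{2/\pi}}{t}e^{-t^2/2}dt \le \frac{n\sqrt{2/\pi}}{T}\cdot \frac{e^{-T^2/2}}{T} = \frac{\sqrt{2/\pi}}{2\ln n}$ (Mills again), gives $\E[\D_{n:n}] \le \sqrt{2\ln n} + \frac{\sqrt{2/\pi}}{2\ln n}$. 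For $n \ge 8$ this is comfortably at most $(\sqrt{2}+1)\sqrt{\ln n} \le 3\sqrt{2}\sqrt{\ln n}$.

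For the lower bound on $\E[\D_{n:n}]$, the plan is to invoke \Cref{lem:order-stat-vs-quantile} to get $\E[\D_{n:n}] \ge \tfrac{4}{5}\alpha^{(\D)}_n$, then show $\alpha^{(\D)}_n \ge \sqrt{\ln n}$. The latter amounts to verifying $\Pr[\D \ge \sqrt{\ln n}] > 1/n$. Using the lower half of \Cref{lem:normal-bound} with $t = \sqrt{\ln n}$ and $\phi(t) = \frac{1}{\sqrt{2\pi n}}$ gives $\Pr[\D \ge \sqrt{\ln n}] \ge \frac{2t\phi(t)}{t^2+1} = \frac{2\sqrt{\ln n}}{(1+\ln n)\sqrt{2\pi n}}$, and the inequality $\frac{2\sqrt{\ln n}}{(1+\ln n)\sqrt{2\pi n}} > \frac{1}{n}$ is equivalent to $n\ln n > \tfrac{\pi}{2}(1+\ln n)^2$, which holds at $n=8$ and whose left-hand side dominates for $n \ge 8$.

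The only slightly delicate step is the upper tail integration where a naive application of \Cref{lem:order-stat-vs-quantile} (going through the quantile) would yield a factor $3\sqrt{2\ln(2n)}$ that does not fit the stated constant $3\sqrt{2}\sqrt{\ln n}$ at small $n$; hence my preference for the direct tail-integration approach above. Verifying the small-$n$ numerics (i.e.\ that the promised bounds already hold at $n=8$, which is where they are tightest) is the only real sanity check required.
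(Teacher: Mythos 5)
Your proof is correct, and three of the four pieces match the paper's argument closely: the MHR claim is the paper's hazard-rate derivative calculation repackaged as monotonicity of Mills' ratio (both ultimately rest on the inequality $t(1-\Phi(t)) \le \phi(t)$ from \Cref{lem:normal-bound}), the mean is the standard one, and the lower bound on $\E[\D_{n:n}]$ goes through $\alpha^{(\D)}_n \ge \sqrt{\ln n}$ plus \Cref{lem:order-stat-vs-quantile} exactly as the paper does (the paper phrases it as $F_\D(\sqrt{\ln n}) \le 1-1/n$, you phrase it as $\Pr[\D \ge \sqrt{\ln n}] > 1/n$, same thing). Where you genuinely diverge is the upper bound: you integrate the tail of $\D_{n:n}$ directly via a union bound and a split at $T = \sqrt{2\ln n}$, landing at $\sqrt{2\ln n} + \tfrac{\sqrt{2/\pi}}{2\ln n}$, which is much tighter than what is needed. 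The paper instead stays within the quantile framework: it shows $F_\D(\sqrt{2\ln n}) \ge 1 - \tfrac{\sqrt{2/\pi}}{n\sqrt{2\ln n}} \ge 1-\tfrac{1}{n}$ using the sharp side of \Cref{lem:normal-bound}, concludes $\alpha_n^{(\D)} \le \sqrt{2\ln n}$, and applies $\E[\D_{n:n}] \le 3\alpha_n^{(\D)}$ from \Cref{lem:order-stat-vs-quantile} to get exactly $3\sqrt{2}\sqrt{\ln n}$. Your parenthetical worry that the quantile route would produce $3\sqrt{2\ln(2n)}$ is unfounded: that extra $\ln 2$ only appears if one uses the crude tail bound $\Pr[\Norm > T] \lesssim e^{-T^2/2}$; with the $\phi(T)/T$ Mills refinement (which is what the paper uses), the choice $T=\sqrt{2\ln n}$ already yields $\Pr[\D > T] \le \tfrac{\sqrt{2/\pi}}{n\sqrt{2\ln n}} \le 1/n$ for $n\ge 2$, so the quantile argument gives the stated constant with no slack at small $n$. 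Your direct-integration approach is a perfectly valid alternative and gives a sharper numerical bound; the paper's quantile approach has the advantage of reusing \Cref{lem:order-stat-vs-quantile} symmetrically for both directions and keeping the MHR machinery in view.
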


Since order statistics are preserved under affine transformations, an immediate corollary is the following.

\begin{corollary}
\label{lem:order-stats-half-norm-general}
For all $\sigma > 0$, $\frac{4}{5} \cdot \sigma \sqrt{\ln n} \le \E\left[|\Norm(0, \sigma^2)|\right]_{n:n} \le 3 \sqrt{2} \cdot \sigma\sqrt{\ln n}$ for $n \ge 8$.
\end{corollary}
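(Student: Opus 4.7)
My plan splits the lemma into three parts that share the same structural backbone.

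\textbf{MHR and mean.} The density of $\D$ is $f(x)=2\phi(x)$ and its CDF is $F(x)=2\Phi(x)-1$ on $[0,\infty)$, so the hazard rate equals the reverse Mills ratio $\phi(x)/(1-\Phi(x))$. Monotonicity is equivalent, after one differentiation, to $x(1-\Phi(x)) \leq \phi(x)$, which is exactly the upper half of Gordon's inequality (\Cref{lem:normal-bound}). For the mean, I would just integrate: $\E[\D]=2\int_0^\infty x\phi(x)\,dx=2/\sqrt{2\pi}=\sqrt{2/\pi}$.

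\textbf{Reducing both tail bounds to bounds on the quantile $\alpha_n^{(\D)}$.} Once MHR is established, \Cref{lem:order-stat-vs-quantile} gives $\tfrac{4}{5}\alpha_n^{(\D)} \leq \E[\D_{n:n}] \leq 3\alpha_n^{(\D)}$ for $n\geq 4$. It therefore suffices to prove $\sqrt{\ln n}\leq \alpha_n^{(\D)}\leq \sqrt{2\ln n}$ for $n\geq 8$. Since $F(x)=2\Phi(x)-1$, the defining condition $F(\alpha_n^{(\D)})=1-1/n$ becomes $1-\Phi(\alpha_n^{(\D)})=1/(2n)$, so the task reduces to verifying
\[
    1-\Phi\bigl(\sqrt{2\ln n}\bigr) \;\leq\; \frac{1}{2n} \;\leq\; 1-\Phi\bigl(\sqrt{\ln n}\bigr).
\]

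\textbf{Checking the two quantile inequalities.} For the upper inequality, I plug $t=\sqrt{2\ln n}$ into the upper half of Gordon's bound to get $1-\Phi(t)\leq \phi(t)/t = 1/(2n\sqrt{\pi \ln n})$, which is at most $1/(2n)$ as soon as $\ln n\geq 1/\pi$, trivially true for $n\geq 8$. For the lower inequality, plugging $t=\sqrt{\ln n}$ into the lower half of Gordon's bound yields $1-\Phi(t)\geq t\phi(t)/(t^2+1)=\sqrt{\ln n}/\bigl((\ln n+1)\sqrt{2\pi n}\bigr)$; requiring this to be at least $1/(2n)$ is equivalent to $4n\ln n\geq 2\pi(\ln n+1)^2$. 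I would check the base case $n=8$ numerically (both sides are roughly $60$, with LHS larger), and then verify that the function $g(n)=4n\ln n-2\pi(\ln n+1)^2$ has positive derivative for $n\geq 8$, so the inequality propagates.

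\textbf{Main obstacle.} The only genuinely delicate step is the lower bound: the constant $4/5$ in \Cref{lem:order-stat-vs-quantile} leaves almost no slack, so the inequality $4n\ln n\geq 2\pi(\ln n+1)^2$ is tight at $n=8$ and must be verified at the boundary by hand before the monotonicity argument takes over. Everything else is essentially a direct application of Gordon's inequality together with the MHR-to-quantile bridge already proven in the paper.
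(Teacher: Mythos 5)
Your proposal correctly re-derives the content of \Cref{lem:order-stats-half-norm} (the $\sigma=1$ case) and does so by essentially the same route the paper takes: establish MHR via Gordon's inequality, reduce the two tail bounds to locating the quantile $\alpha_n^{(\D)}$ between $\sqrt{\ln n}$ and $\sqrt{2\ln n}$, and transfer quantile bounds to $\E[\D_{n:n}]$ via \Cref{lem:order-stat-vs-quantile}. The only cosmetic difference is that you pass to $1-\Phi(\alpha_n^{(\D)})=\tfrac{1}{2n}$ immediately, while the paper works with $F_\D(\cdot)$ directly; both land on the same polynomial inequality $4n\ln n \geq 2\pi(\ln n+1)^2$, which you are right must be checked at the base case $n=8$ and then propagated by monotonicity.

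However, the statement you were asked to prove is the \emph{corollary}, for general $\sigma>0$, and your argument never mentions $\sigma$. The paper derives the corollary from the lemma in a single line: since $|\Norm(0,\sigma^2)| \ID \sigma\,|\Norm(0,1)|$ and the $k$-th order statistic commutes with multiplication by a positive scalar, we have $\E\bigl[|\Norm(0,\sigma^2)|\bigr]_{n:n} = \sigma\,\E\bigl[|\Norm(0,1)|\bigr]_{n:n}$, and the constants carry over unchanged. You should append this scaling observation; as written, your proof establishes the $\sigma=1$ bound but stops short of the claimed generalization. In effect you have reproduced the paper's proof of the lemma and omitted its (admittedly trivial) proof of the corollary.
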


Towards bounding the optimal policy, we can compute the exact expression for $\E[X_i \mid Y_i = y_i]$.

\begin{lemma}
\label{lem:expected-posterior-half-norm}
Given $Y_i = X_i + \epsilon_i$ where $X_i \sim \D$ and $\epsilon_i \sim \Norm(0, \sigma_i^2)$, we have
\[\E[X_i \mid Y_i = y_i] = \frac{y_i}{\sigma_i^2 + 1} + \frac{\phi\left(\frac{-y_i}{\sigma_i \sqrt{\sigma_i^2 + 1}}\right)}{1 - \Phi\left(\frac{-y_i}{\sigma_i \sqrt{\sigma_i^2 + 1}}\right)} \cdot \frac{\sigma_i}{\sqrt{\sigma_i^2 + 1}}.\]
\end{lemma}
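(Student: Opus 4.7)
The plan is a direct Bayes-rule computation, recognizing the conditional distribution of $X_i$ given $Y_i=y_i$ as a truncated normal and then invoking the standard formula for its mean.

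First I would write down the relevant densities. Since $X_i \sim |\Norm(0,1)|$, its density on $[0,\infty)$ is $f_{X_i}(x) = 2\phi(x)$, and the conditional density of $Y_i$ given $X_i=x$ is $f_{Y_i|X_i}(y\mid x) = \frac{1}{\sigma_i}\phi\bigl((y-x)/\sigma_i\bigr)$. By Bayes' rule, $f_{X_i|Y_i}(x\mid y) \propto f_{X_i}(x) f_{Y_i|X_i}(y\mid x) \cdot \Ind[x\ge 0]$, so the whole problem reduces to simplifying the exponent
\[
-\tfrac{1}{2}x^2 - \tfrac{1}{2\sigma_i^2}(y-x)^2.
\]
Completing the square in $x$ turns this into $-\tfrac{\sigma_i^2+1}{2\sigma_i^2}\bigl(x-\tfrac{y}{\sigma_i^2+1}\bigr)^2$ plus a term depending only on $y$. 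Hence $X_i\mid Y_i=y_i$ is distributed as $\Norm(\mu,\tau^2)$ \emph{truncated to} $[0,\infty)$, with
\[
\mu = \frac{y_i}{\sigma_i^2+1}, \qquad \tau = \frac{\sigma_i}{\sqrt{\sigma_i^2+1}}.
\]

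Next I would apply the standard formula for the mean of a one-sided truncated normal: if $Z\sim\Norm(\mu,\tau^2)$, then
\[
\E[Z \mid Z\ge 0] = \mu + \tau\cdot\frac{\phi(-\mu/\tau)}{1-\Phi(-\mu/\tau)}.
\]
(This is immediate from the identity $\int_{a}^{\infty} z\,\phi(z)\,dz = \phi(a)$ together with a change of variables $z=(x-\mu)/\tau$; I would derive it in one line for self-containedness.) Substituting $\mu$ and $\tau$, and computing $-\mu/\tau = -y_i/\bigl(\sigma_i\sqrt{\sigma_i^2+1}\bigr)$, yields exactly the stated expression.

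The proof is essentially mechanical; the only mild pitfall is the algebra of completing the square and correctly tracking the argument $-y_i/(\sigma_i\sqrt{\sigma_i^2+1})$ after the substitution. Beyond that, no new ideas are needed — the half-normal prior plus Gaussian likelihood gives a conjugate structure up to the positivity constraint, which is what produces the Mills-ratio term.
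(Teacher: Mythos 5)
Your proposal is correct and takes essentially the same approach as the paper: both complete the square in the exponent to recognize the posterior of $X_i$ given $Y_i=y_i$ as a normal $\Norm\bigl(y_i/(\sigma_i^2+1),\,\sigma_i^2/(\sigma_i^2+1)\bigr)$ truncated to $[0,\infty)$, then invoke the standard truncated-normal mean formula. The paper carries the normalizing constants through the ratio of integrals a bit more explicitly, but the underlying computation is identical.
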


Unfortunately, while this form is exact, it is not easy to work with. We instead consider the following upper bound on $\E[X_i \mid Y_i = y_i]$.

\begin{lemma}
\label{lem:upp-bound-expected-posterior}
    Let $U_\sigma(y) = \sqrt\frac{2}{\pi} + \max\left\{0, \frac{y}{\sigma^2+1}\right\}$, then $\E[X_i \mid Y_i = y_i] \le U_{\sigma_i}(y_i)$ for all $\sigma_i$ and $y_i$.
\end{lemma}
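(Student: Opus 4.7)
The plan is to start from the exact expression in Lemma \ref{lem:expected-posterior-half-norm} and reduce the desired inequality to an elementary bound on the hazard rate of the standard normal. I would introduce the shorthand $u := \frac{-y_i}{\sigma_i\sqrt{\sigma_i^2+1}}$ and $s := \frac{\sigma_i}{\sqrt{\sigma_i^2+1}} \in (0,1)$, observe that $\frac{y_i}{\sigma_i^2+1} = -us$, and rewrite Lemma \ref{lem:expected-posterior-half-norm} as
\[ \E[X_i \mid Y_i = y_i] \;=\; s\left( \frac{\phi(u)}{1-\Phi(u)} - u \right). \]
The proof then splits on the sign of $y_i$.

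If $y_i > 0$ (so $u < 0$), then $\max\{0, y_i/(\sigma_i^2+1)\} = -us$, and the trivial bounds $\phi(u) \le \phi(0) = 1/\sqrt{2\pi}$ and $1 - \Phi(u) \ge 1/2$ give $\frac{\phi(u)}{1-\Phi(u)} \le \sqrt{2/\pi}$; combined with $s \le 1$, this directly yields the target bound.

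If $y_i \le 0$ (so $u \ge 0$), the right-hand side reduces to $\sqrt{2/\pi}$, and since $s \le 1$ it suffices to show $h(u) := \frac{\phi(u)}{1-\Phi(u)} - u \le \sqrt{2/\pi}$ for every $u \ge 0$, equivalently $F(u) := (u+\sqrt{2/\pi})(1-\Phi(u)) - \phi(u) \ge 0$. I would establish this by a shape analysis: one checks that $F(0) = 0$, that $F(u) \to 0$ as $u \to \infty$ (using $1-\Phi(u) \sim \phi(u)/u$ from Lemma \ref{lem:normal-bound}), and that $F''(u) = \phi(u)\bigl(\sqrt{2/\pi}\,u - 1\bigr)$, while $F'(0) = 1/2 - 1/\pi > 0$ and $F'(u) \to 0$ from below. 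The single sign change of $F''$ forces $F'$ to have a unique zero, so $F$ rises from $0$, attains a maximum, and descends back to $0$; hence $F \ge 0$ throughout.

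The main obstacle is this second case. Morally, it is the classical fact that the mean residual life function of a log-concave distribution is non-increasing: since $h(u) = \E[Z - u \mid Z > u]$ for $Z \sim \Norm(0,1)$, it peaks at $u = 0$ where it equals $\E[|Z|] = \sqrt{2/\pi}$ and only shrinks thereafter. The shape argument on $F$ sketched above is a self-contained way to formalize this using only the elementary estimates already available from Lemma \ref{lem:normal-bound}.
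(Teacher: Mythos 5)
Your proof is correct, and for positive $y_i$ it coincides with the paper's: the elementary bounds $\phi \le 1/\sqrt{2\pi}$, $1-\Phi(u) \ge 1/2$ for $u \le 0$, and $\sigma_i/\sqrt{\sigma_i^2+1} \le 1$ do the work. For $y_i \le 0$, however, you take a genuinely different route. The paper reduces to the boundary case by invoking the monotonicity of $y \mapsto \E[X_i \mid Y_i = y]$ (Lemma~\ref{lem: monotonicity of expected posterior}, proved in the appendix via a Cauchy--Schwarz argument), so that $\E[X_i\mid Y_i=y_i]\le \E[X_i\mid Y_i=0]$ and the $y_i\ge 0$ bound finishes. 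You instead prove the self-contained inequality $h(u) := \frac{\phi(u)}{1-\Phi(u)}-u \le \sqrt{2/\pi}$ for $u\ge 0$ directly, by a shape analysis of $F(u)=(u+\sqrt{2/\pi})(1-\Phi(u))-\phi(u)$: your reported values $F(0)=0$, $F'(0)=\tfrac12-\tfrac1\pi>0$, $F''(u)=\phi(u)(\sqrt{2/\pi}\,u-1)$, and the asymptotics $F(u),F'(u)\to 0$ (with $F'<0$ eventually) all check out, and the single sign change of $F''$ does force $F$ to rise once and then fall back to zero, hence $F\ge 0$. Conceptually your argument is the standard ``mean residual life of a log-concave law is non-increasing'' fact, specialized to $\Norm(0,1)$, with $h(0)=\E[|Z|]=\sqrt{2/\pi}$. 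One also needs $h(u)\ge 0$ (i.e.\ $\phi(u)\ge u(1-\Phi(u))$) to justify dropping the factor $s\le 1$, which follows immediately from Lemma~\ref{lem:normal-bound}; you implicitly use this. Each route has its merits: the paper's version reuses a monotonicity lemma that is also needed in the proof of Lemma~\ref{lem:large-box-reward-bounded-D}, so it amortizes well, whereas yours keeps this lemma self-contained at the cost of a somewhat more involved (but entirely elementary) calculus argument.
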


\begin{proof}
We first consider the case where $y_i \ge 0$. In this case, $U_{\sigma_i}(y_i) = \sqrt\frac{2}{\pi} + \frac{y_i}{\sigma^2_i+1}$. Observe that $\phi(x) \le \frac{1}{\sqrt{2 \pi}}$ for all $x$, $1 - \Phi(x) \ge \frac{1}{2}$ for all $x \le 0$, and $\frac{\sigma_i}{\sqrt{\sigma_i^2 + 1}} \le 1$ for all $\sigma_i \ge 0$. Therefore, 
\[\E[X_i \mid Y_i = y_i] = \frac{y_i}{\sigma_i^2 + 1} + \frac{\phi\left(\frac{-y_i}{\sigma_i \sqrt{\sigma_i^2 + 1}}\right)}{1 - \Phi\left(\frac{-y_i}{\sigma_i \sqrt{\sigma_i^2 + 1}}\right)} \cdot \frac{\sigma_i}{\sqrt{\sigma_i^2 + 1}} \le \frac{y_i}{\sigma_i^2 + 1} + \sqrt\frac{2}{\pi} = U_{\sigma_i}(y_i).\]

If $y_i < 0$, we use the property that $\E[X_i \mid Y_i = y_i] \le \E[X_i \mid Y_i = 0]$ (this is due to the monotonicity of $\E[X_i \mid Y_i = y_i]$; see \Cref{lem: monotonicity of expected posterior}): $\E[X_i \mid Y_i = y_i] \le \E[X_i \mid Y_i = 0] \le U_{\sigma_i}(0) = U_{\sigma_i}(y_i)$.
\end{proof}

We are now ready to prove \Cref{thm:opt-bad-vs-prophet}.

\begin{proof}[Proof of~\Cref{thm:opt-bad-vs-prophet}]
Let $\D = |\Norm(0, 1)|$, and consider $\std = \in \badMHR{\D}{n, c}$ where, without loss of generality, we have $\sigma_1 \le \sigma_2 \le \dots \le \sigma_n$. This means that $\sigma_{n^c} > \frac{\E[\D_{n^c:n^c}]\}}{18 c\sqrt{2\ln n}} \geq^\text{(\Cref{lem:order-stats-half-norm})} \frac{4\sqrt{\ln(n^c)}}{90c \sqrt{2\ln n}} = \frac{\sqrt{2}}{45 \sqrt{c}}$. Note that the expected reward of the optimal policy is at most the expected reward of the optimal policy that picks $2$ boxes $u$ and $v$ where $u \in [1, n^c - 1]$ and $v \in [n^c, n]$, and \emph{then enjoys the rewards of both boxes}. 
The expected reward from choosing box $u$ is at most $\E[\max_{i \in [1, n^c - 1]} x_i] \le \E[\D_{n^c:n^c}]$. The expected reward from choosing box $v$ is at most the expected reward of $\Opt_{\D}$, restricted to choosing boxes from $n^c$ to $n$, which in turn is at most $\max_{i \in [n^c, n]} \E[X_i \mid Y_i = y_i]$. Therefore, the expected reward from box $v$ is upper bounded by:

\begin{align*}
    \E_{\observ}\left[\max_{i \in [n^c, n]} \E[X_i \mid Y_i = y_i]\right] &\le^\text{(\Cref{lem:upp-bound-expected-posterior})} \E_{\observ}\left[\max_{i \in [n^c, n]} U_{\sigma_i}(y_i)\right] \\
        &= \E\left[\max_{i \in [n^c, n]} U_{\sigma_i}\left(X_i + \Norm(0, \sigma_i^2)\right)\right] \\
        &\le^\text{($U_{\sigma_i}(y)$ is monotone)} \E\left[\max_{i \in [n^c, n]} U_{\sigma_i}\left(X_i + |\Norm(0, \sigma_i^2)|\right)\right] \\
        &= \E\left[\max_{i \in [n^c, n]} \sqrt\frac{2}{\pi} + \frac{\left(X_i + |\Norm(0, \sigma_i^2)|\right)}{\sigma_i^2 + 1}\right] \\
        &\le \E\left[\sqrt\frac{2}{\pi} + \max_{i \in [n^c, n]} \frac{X_i}{\sigma^2_{i}} + \max_{i \in [n^c, n]} \frac{|\Norm(0, \sigma_i^2)|}{\sigma_i^2}\right] \\
        &\le \sqrt\frac{2}{\pi} + \frac{\E\left[ |\Norm(0, 1)|_{n:n} \right]}{\sigma^2_{n^c}}  + \E\left[ \max_{i \in [n^c, n]} \left|\Norm\left(0, \frac{1}{\sigma_i^2}\right)\right| \right] \\
        &\leq^\text{(\Cref{lem:order-stats-half-norm-general})} \sqrt\frac{2}{\pi} + \frac{ 3\sqrt{2} \cdot \sqrt{\ln n}}{\sigma_{n^c}^2} + \frac{1}{\sigma_{n^c}} \cdot 3\sqrt{2} \cdot \sqrt{\ln n} \\
        &\le^{\left( \sigma_{n^c} > \frac{\sqrt{2}}{45 \sqrt{c}}\right)} \sqrt\frac{2}{\pi} + \frac{6075 \cdot c \sqrt{\ln n}}{\sqrt{2}} + 135 \sqrt{c \ln n} \\
        &\le^{\left(\frac{1}{400 \sqrt{\ln n}} \le c \le 1\right)} 4497 \sqrt{c \ln n}
\end{align*}

Combining, we have that $R_{\Opt_\D}(\D, \std) \le 4500 \sqrt{c \ln n} + \E[\D_{n^c:n^c}] \leq^\text{(\Cref{lem:order-stats-half-norm})} 4497 \sqrt{c \ln n} + \frac{3}{\sqrt{2}} \sqrt{c \ln n} \le 4500 \sqrt{c \ln n}$. Using~\Cref{lem:order-stats-half-norm}, this is at most $4500\sqrt{c} \cdot \frac{5}{4} \E[\D_{n:n}] \leq 5625 \sqrt{c} \cdot \E[\D_{n:n}]$.
\end{proof}

The following theorem shows that, if the environment has ``large'' noise, then the optimal clairvoyant policy is comparable to the policy that picks a random box.

\begin{theorem}\label{thm: opt bad vs random}
There exists an MHR distribution $\D$ where $\E[\D_{k:k}] \in \omega(\E[D])$ for $k \in \omega(1)$, such that for all $n \ge n_0$, for some constant $n_0$, for all $c \in [1/n,1]$, all and all $\std \in \bad{\D}{n, c}$, we have
\[ R_{\Opt_\D}(\D, \std) \in O\left( \sqrt{\ln(cn)} \E[\D] \right). \]
     %\[R_{\Opt_\D}(\D, \std) \leq 86 \sqrt{\ln(cn)} \E[\D].\]
\end{theorem}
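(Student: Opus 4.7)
The plan is to mirror the proof of Theorem~\ref{thm:opt-bad-vs-prophet}, reusing the same half-normal witness $\D = |\Norm(0,1)|$. By Lemma~\ref{lem:order-stats-half-norm}, $\D$ is MHR with constant mean $\E[\D] = \sqrt{2/\pi}$ and $\E[\D_{k:k}] = \Theta(\sqrt{\ln k})$, so the qualifier $\E[\D_{k:k}] \in \omega(\E[\D])$ for $k \in \omega(1)$ is satisfied. The first step is to translate the hypothesis $\std \in \bad{\D}{n,c}$ into a concrete lower bound on $\sigma_{cn}$: combining $\sigma_{cn} > \frac{\E[\D_{cn:cn}] \sqrt{\ln n}}{\ln(cn)}$ with the bound $\E[\D_{cn:cn}] \geq \tfrac{4}{5}\sqrt{\ln(cn)}$ from Lemma~\ref{lem:order-stats-half-norm} yields
\[ \sigma_{cn} \geq \tfrac{4}{5} \cdot \tfrac{\sqrt{\ln n}}{\sqrt{\ln(cn)}}. \]
Edge cases with $cn$ below the constant in Lemma~\ref{lem:order-stats-half-norm} are absorbed into the ``$n \geq n_0$'' qualifier.

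Next, I would split the $n$ boxes into a low-index group $[1, cn-1]$ and a high-index group $[cn, n]$, and upper-bound the reward of $\Opt_\D$ by the sum of the best-in-each-group rewards (exactly as in the proof of Theorem~\ref{thm:opt-bad-vs-prophet}). The contribution from the low-index group is trivially at most $\E[\D_{cn:cn}] \leq 3\sqrt{2}\sqrt{\ln(cn)}$. For the high-index group, I would repeat the chain of inequalities from the previous proof: apply Lemma~\ref{lem:upp-bound-expected-posterior} to replace each posterior expectation by $U_{\sigma_i}(y_i)$, use monotonicity of $U_{\sigma_i}$ to swap $\Norm(0,\sigma_i^2)$ for $|\Norm(0,\sigma_i^2)|$, split the resulting maximum across the two summands of $U$, and apply Corollary~\ref{lem:order-stats-half-norm-general} together with the fact that $\sigma_i \geq \sigma_{cn}$ for $i \geq cn$. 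This gives
\[ \E_{\observ}\!\left[\max_{i \in [cn,n]} \E[X_i \mid Y_i = y_i]\right] \leq \sqrt{\tfrac{2}{\pi}} + \frac{3\sqrt{2}\sqrt{\ln n}}{\sigma_{cn}^2} + \frac{3\sqrt{2}\sqrt{\ln n}}{\sigma_{cn}}. \]

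Plugging in the lower bound on $\sigma_{cn}$ from the first step yields $\frac{3\sqrt{2}\sqrt{\ln n}}{\sigma_{cn}} = O(\sqrt{\ln(cn)})$ directly, and $\frac{3\sqrt{2}\sqrt{\ln n}}{\sigma_{cn}^2} = O\!\left(\tfrac{\ln(cn)}{\sqrt{\ln n}}\right) = O(\sqrt{\ln(cn)})$, where the final step uses $\ln(cn) \leq \ln n$. Adding the low-index contribution and the constant $\sqrt{2/\pi} = O(\E[\D])$ yields the desired $O(\sqrt{\ln(cn)}\, \E[\D])$ bound. The main thing to get right is the bookkeeping, not a new idea: I need to verify that the looser noise threshold in $\bad{\D}{n,c}$ (which permits $\sigma_{cn}$ as small as $\Theta(\sqrt{\ln n / \ln(cn)})$ rather than $\Theta(\sqrt{1/c})$ as in Theorem~\ref{thm:opt-bad-vs-prophet}) is still enough to bring both the $1/\sigma_{cn}^2$ and the $1/\sigma_{cn}$ terms under $O(\sqrt{\ln(cn)})$. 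The inequality $\ln(cn)/\sqrt{\ln n} \leq \sqrt{\ln(cn)}$, which is exactly tight at $c = 1$, is what makes the argument go through.
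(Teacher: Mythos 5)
Your proposal is correct and takes essentially the same route as the paper's proof: same half-normal witness, same two-group decomposition, same chain of inequalities through $U_{\sigma_i}$ and Corollary~\ref{lem:order-stats-half-norm-general}. The only cosmetic difference is that the paper first folds the $1/\sigma_{cn}^2$ term into the $1/\sigma_{cn}$ term (using that $\sigma_{cn}$ is bounded below by a constant) before substituting the lower bound on $\sigma_{cn}$, whereas you substitute directly into each term and finish with the observation $\ln(cn)/\sqrt{\ln n} \leq \sqrt{\ln(cn)}$ — both arrive at the same $O(\sqrt{\ln(cn)}\,\E[\D])$ bound.
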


One way to interpret this theorem is that, given any constant target ratio $\alpha$ and any large enough $n$, one can pick $c$ small enough (e.g. such that $c n \in O(1)$) and $\std$ that satisfies the ``large'' noise condition, such that the optimal clairvoyant policy is not $\alpha$ times better than the policy that picks a box uniformly at random. The $\E[\D_{k:k}] \in \omega(\E[D])$ is crucial in this theorem, since, for the theorem to have bite, it must be that 
$\sqrt{\ln(cn)} \E[\D]$ is a lot smaller than $\E[\D_{n:n}]$ the reward of a prophet.

\section{Negative results for small noise environments} \label{sec:lower-bounds}

In this section, we show negative results for $\Naive$ (\Cref{subsec: naive fails}) and $\Linear_\gamma$ (\Cref{subsec: lower bound linear}). All missing proofs can be found in~\Cref{app: missing from 3}.

\subsection{Warm-up: Negative results for $\Naive$}\label{subsec: naive fails}

%We start by showing a lower bound on the performance of $\Naive$.

\begin{theorem}
\label{thm:master-lower-bound-s}
 For every distribution $\D$, all $n \geq 46$, and all $c \leq \frac{n-6\ln(n)}{n}$, there exists $\std^* = (\sigma^*_1, \dots, \sigma^*_n)$ such that $\std^* \in \good{\D}{n,c}$, and 
    \[ R_{\Naive}(\D, \std^*) \le \frac{8 \E[\D]}{\E[\D_{n:n}]} \cdot R_{\Opt_{\D}}(\D, \std^*)\]
\end{theorem}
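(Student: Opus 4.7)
The construction uses only two noise levels: set $k := \lceil 6 \ln n \rceil$ and define $\std^*$ by $\sigma^*_i = 0$ for $i \in S := \{1, \dots, n-k\}$ and $\sigma^*_i = \sigma^*$ for $i \in L := \{n-k+1, \dots, n\}$, where $\sigma^*$ is a large parameter I will choose at the end. Since $c \leq (n - 6 \ln n)/n$ and $cn$ is an integer, $cn \leq n - k$, so the $cn$ smallest entries of $\std^*$ are all $0$ and $\std^* \in \good{\D}{n, c}$ holds automatically. For the lower bound on $R_{\Opt_\D}$, I would note that $\Opt_\D$ dominates the (feasible) policy that ignores $L$ entirely and picks $\argmax_{i \in S} y_i$; this policy is exact since boxes in $S$ have zero noise, so $R_{\Opt_\D} \geq \E[\D_{n-k:n-k}] \geq \tfrac{n-k}{n}\E[\D_{n:n}]$ by \Cref{lem:compare-order-stats}. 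The hypothesis $n \geq 46$ is precisely what makes $12 \ln n \leq n$, hence $n - k \geq n/2$, giving $R_{\Opt_\D} \geq \tfrac{1}{2}\E[\D_{n:n}]$.

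The main task is to prove $R_{\Naive} \leq 4 \E[\D]$, from which the ratio bound follows by dividing. Let $i^* := \argmax_i y_i$ and split $R_{\Naive} = \E[X_{i^*} \Ind[i^* \in L]] + \E[X_{i^*} \Ind[i^* \in S]]$. For the $L$-term, set $I_L := \argmax_{j \in L} y_j$, so that $\E[X_{i^*} \Ind[i^* \in L]] \leq \E[X_{I_L}]$. Writing $y_j = X_j + \sigma^* z_j$ with $z_j \sim \Norm(0,1)$ i.i.d., the key point is that $I_Z := \argmax_{j \in L} z_j$ is independent of $(X_j)_{j \in L}$, so $\E[X_{I_Z}] = \E[\D]$. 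On a common probability space where $(X_j, z_j)$ are sampled once and only $\sigma^*$ varies, $I_L \to I_Z$ almost surely as $\sigma^* \to \infty$ (on the full-measure event that the $z_j$'s are distinct), while $|X_{I_L} - X_{I_Z}| \Ind[I_L \neq I_Z] \leq 2 \max_{j \in L} X_j$, which is integrable since $\E[\max_{j \in L} X_j] \leq k\E[\D]$. Dominated convergence then gives $\E[X_{I_L}] \to \E[\D]$, so $\sigma^*$ can be chosen large enough that $\E[X_{I_L}] \leq 2 \E[\D]$.

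For the $S$-term, on $\{i^* \in S\}$ one has $X_{i^*} = \max_{i \in S} X_i$ (zero noise in $S$), so it suffices to bound $\E[\max_{i \in S} X_i \Ind[i^* \in S]]$. I would split on the sign of $Z := \max_{j \in L} z_j$. On $\{Z \leq 0\}$, which has probability $2^{-k}$ and is independent of $(X_i)$, the contribution is at most $\E[\max_{i \in S} X_i] \cdot 2^{-k} \leq \E[\D_{n:n}] \cdot 2^{-k} \leq n \E[\D] \cdot n^{-6 \ln 2} \leq \E[\D]$, where the middle step uses $\E[\D_{n:n}] \leq n \E[\D]$ (\Cref{lem:compare-order-stats} with $a=1,b=n$) and the last uses $6 \ln 2 > 1$. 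On $\{Z > 0\}$, since $X_j \geq 0$ implies $\max_{j \in L} y_j \geq \sigma^* Z$, the indicator $\Ind[i^* \in S]$ is bounded by $\Ind[\max_{i \in S} X_i > \sigma^* Z]$, which converges to $0$ a.s.\ as $\sigma^* \to \infty$; dominated convergence with integrable envelope $\max_{i \in S} X_i$ drives this contribution below $\E[\D]$ for $\sigma^*$ large enough. Summing, the $S$-term is at most $2\E[\D]$, so $R_{\Naive} \leq 4 \E[\D]$, which combined with $R_{\Opt_\D} \geq \tfrac{1}{2}\E[\D_{n:n}]$ yields the claim.

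The main delicacy will be the dominated-convergence step: one has to pin down a single probability space where $(X_j, z_j)$ are drawn once and only $\sigma^*$ varies, verify almost-sure convergence of both indicator sequences, and produce an integrable envelope. The constants $k = 6 \ln n$ and $n_0 = 46$ sit exactly at the boundary of two independent requirements: $n - k \geq n/2$ (which just barely holds at $n = 46$) and $n \cdot 2^{-k} \leq 1$ (which needs $k \geq \log_2 n$, comfortably implied by $k \geq 6 \ln n$).
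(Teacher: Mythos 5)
Your proof is correct and takes a genuinely different route from the paper's. The paper fixes an explicit finite value of the large-noise parameter, $\sigma_b = 6\beta^{(\D_{n:n})}_{n^2}\sqrt{\ln n}$, conditions on the event $\Event^*$ that all rewards fall below $\beta^{(\D_{n:n})}_{n^2}$, and then proves quantitative bounds: a large-noise box is selected with high probability under $\Event^*$, and the posterior mean of the selected box is at most $2\E[\D]$ via \Cref{lem:large-box-reward-bounded-D}; the rare complementary events are controlled using the definition of $\beta^{(\D_{n:n})}_{n^2}$. You instead decompose $R_\Naive$ directly into the contribution from large-noise boxes and from exact boxes, and drive $\sigma^* \to \infty$ on a single coupled probability space, invoking dominated convergence to make the large-noise contribution converge to $\E[\D]$ (because the argmax among noisy boxes becomes determined purely by the noise, which is independent of the rewards) and the exact-box contribution converge to the probability-$2^{-k}$ event $\{Z \le 0\}$. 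Your argument is arguably cleaner and more self-contained: it avoids introducing $\beta^{(\D_{n:n})}_{n^2}$ and the auxiliary posterior-mean lemma, and the independence-of-argmax insight is quite transparent. What you give up is an explicit choice of $\sigma^*$ (the paper's construction produces a concrete noise level, with no limiting step), and you forgo \Cref{lem:large-box-reward-bounded-D}, which the paper reuses in the linear-policy lower bound (\Cref{lem:linear-upper-regime}), so that machinery would still need to be developed separately. Two small points worth pinning down when writing this up carefully: a single $\sigma^*$ must be chosen large enough to satisfy both dominated-convergence deductions simultaneously (take the maximum of the two thresholds), and you should note that on the event $\{i^* \in L\}$ one indeed has $i^* = I_L$ (the global argmax, if in $L$, is also the argmax within $L$), which is what justifies $\E[X_{i^*}\Ind[i^* \in L]] \le \E[X_{I_L}]$.
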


As an immediate consequence of~\Cref{thm:master-lower-bound-s}, by picking a distribution $\D$ such that $\E[\D_{n:n}] \in \Theta( n \E[\D] )$, we get that $\Naive$ only gives a (trivial) $n$ approximation to the optimal policy.

\begin{corollary}
For all $n \geq 46$ and $c \leq \frac{n-6\ln(n)}{n}$, there exists $\D$ and $\std^* = (\sigma^*_1, \dots, \sigma^*_n)$ such that $R_{\Opt_{\D}}(\D, \std^*) \in \Omega( n ) \, R_{\Naive}(\D, \std^*)$.
%The $\Naive$ policy is an $n$ approximation to the optimal policy.  
\end{corollary}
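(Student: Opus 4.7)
The plan is to follow the recipe sketched in the introduction. I would set $k = \lfloor 6 \ln n \rfloor$, $\sigma^*_i = 0$ for $i \in [1, n-k]$, and $\sigma^*_i = M$ for $i \in [n-k+1, n]$, where $M$ is a large scalar chosen at the end in terms of $\D$ and $n$. The hypothesis $c \le (n - 6\ln n)/n$ yields $cn \le n-k$, so the smallest $cn$ coordinates of $\std^*$ are zero and trivially satisfy Definition~\ref{dfn: small noise}; hence $\std^* \in \good{\D}{n,c}$. For the benchmark, the optimal policy is at least as good as the sub-policy that selects $\argmax_{i \in [1, n-k]} y_i$; since the first $n-k$ boxes are noiseless, this sub-policy has expected reward exactly $\E[\D_{n-k:n-k}]$. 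Lemma~\ref{lem:compare-order-stats} (with $a=n-k, b=n$) then gives $R_{\Opt_\D}(\D, \std^*) \ge \tfrac{n-k}{n}\E[\D_{n:n}] \ge \tfrac{1}{2}\E[\D_{n:n}]$, where the last inequality uses $6 \ln n \le n/2$ for $n \ge 46$ (tight at $n=46$).

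The crux is the upper bound $R_{\Naive}(\D, \std^*) \le 4\E[\D]$; combined with the benchmark, this proves the theorem. Let $J^* = \argmax_i y_i$ and split $R_{\Naive} = \E[x_{J^*}\Ind_A] + \E[x_{J^*}\Ind_B]$ with $A = \{J^* \le n-k\}$, $B = \{J^* > n-k\}$. On $A$, $x_{J^*} \le \D_{n-k:n-k}$ and the event forces $\max_{i \in B} M \epsilon_i \le \D_{n-k:n-k}$; as $M \to \infty$, $\Ind_A \to \Ind_{\max_{i \in B}\epsilon_i \le 0}$ almost surely, which is independent of $\D_{n-k:n-k}$ and has probability $2^{-k}$. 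Together with $\E[\D_{n-k:n-k}] \le n\E[\D]$ (another application of Lemma~\ref{lem:compare-order-stats}) and $2^k \ge n$ for $n \ge 46$, dominated convergence gives $\E[x_{J^*}\Ind_A] \le \E[\D]$ for $M$ sufficiently large. For $\E[x_{J^*}\Ind_B]$, symmetry across the $k$ large-noise boxes yields $\E[x_{J^*}\Ind_B] = k \cdot \E[x_{i_0}\Ind_{J^* = i_0}]$ for any fixed $i_0 \in B$. Writing $Y^*_{-i_0} = \max_{j \ne i_0}y_j$ (independent of $x_{i_0}$), the key identity is $\Pr[J^* = i_0 \mid x_{i_0} = x] = \E[1 - \Phi((Y^*_{-i_0} - x)/M)]$, whose derivative in $x$ is at most $\phi(0)/M = 1/(M\sqrt{2\pi})$. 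Combining this Lipschitz bound with the monotonicity of $\Pr[J^* = i_0 \mid x_{i_0} = x]$ in $x$ (which lets me replace $\Pr[J^*=i_0 \mid x_{i_0}=0]$ by $\Pr[J^*=i_0]$) and integrating against $x_{i_0}$ yields $\E[x_{i_0}\Ind_{J^* = i_0}] \le \E[\D]\Pr[J^* = i_0] + \E[\D^2]/(M\sqrt{2\pi})$, hence $\E[x_{J^*}\Ind_B] \le \E[\D] + k\E[\D^2]/(M\sqrt{2\pi}) \le 3\E[\D]$ for $M$ sufficiently large.

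The main obstacle is that $\E[\D^2]$ need not be finite for a heavy-tailed $\D$. The standard fix is to pre-truncate $\D$ at a threshold $T$ chosen so that $k\,\E[\D \cdot \Ind_{\D > T}] \le \E[\D]$ (possible by monotone convergence): the contribution from $\{x_{i_0} > T\}$ is controlled directly, while the Lipschitz step applied to the truncated piece only sees $\E[(\D \wedge T)^2] \le T\E[\D] < \infty$; then $M$ is chosen large in $T$ and $k$. Putting everything together gives $R_{\Naive}(\D, \std^*) \le \E[\D] + 3\E[\D] = 4\E[\D] \le (8\E[\D]/\E[\D_{n:n}]) \cdot R_{\Opt_\D}(\D, \std^*)$, which is the theorem.
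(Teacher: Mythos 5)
Your proposal, even if its technical details check out, proves the paper's Theorem~\ref{thm:master-lower-bound-s} rather than the corollary: it shows that for \emph{every} $\D$ there is a $\std^* \in \good{\D}{n,c}$ with $R_{\Naive}(\D,\std^*) \le 4\E[\D] \le \frac{8\E[\D]}{\E[\D_{n:n}]}R_{\Opt_\D}(\D,\std^*)$, but it never turns the ratio $\E[\D_{n:n}]/\E[\D]$ into $\Omega(n)$. This last step is exactly where the corollary differs from the theorem, and it requires exhibiting a concrete distribution. The ratio is at most $n$ by Lemma~\ref{lem:compare-order-stats}, but it can be as small as $O(1)$ (e.g.\ for a point mass) or $\Theta(\log n)$ for any MHR $\D$ (Lemma~\ref{lem:order-stat-mean-bound}), so the conclusion is not automatic. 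The paper's proof is essentially just this missing step: take $\D$ to be the two-point distribution with $\Pr[\D = n] = 1/n$ and $\Pr[\D=0]=1-1/n$, so $\E[\D]=1$ and $\E[\D_{n:n}] = n\bigl(1 - (1-1/n)^n\bigr) \ge n(1-1/e)$, then cite Theorem~\ref{thm:master-lower-bound-s}. Without such an instantiation, your last displayed inequality is the theorem, not the corollary, as you yourself note.

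It is worth recording that your route to the underlying theorem is genuinely different from the paper's. The paper fixes the large-noise level at the finite scale $\sigma_b = 6\beta^{(\D_{n:n})}_{n^2}\sqrt{\ln n}$, conditions on events $\Event^*, \Event_1', \Event_2'$ that confine both the rewards and the realized $y_i$'s, and then invokes the posterior bound of Lemma~\ref{lem:large-box-reward-bounded-D} on the truncated reward. You instead send $M\to\infty$, so that the probability of picking a noiseless box degenerates to $2^{-k}$, and bound the contribution of each noisy box through the Lipschitz constant $\phi(0)/M$ of $x \mapsto \Pr[J^*=i_0 \mid x_{i_0}=x]$ together with the monotonicity used to replace $\Pr[J^*=i_0\mid x_{i_0}=0]$ by $\Pr[J^*=i_0]$, handling unbounded second moments by truncation. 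This is a clean limiting argument that avoids the paper's $\beta$-threshold machinery entirely, at the cost of being nonconstructive in $M$; the paper's version gives an explicit $\sigma_b$. Either way, to close the corollary you still need the two-point distribution (or some other $\D$ with $\E[\D_{n:n}]\in\Omega(n\E[\D])$).
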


\begin{proof}
Consider the distribution $\D$ that takes the value $0$ with probability $1-1/n$, and the value $n$ with probability $1/n$. Then, $\E[\D] = 1$, and $\E[ \D_{n:n} ] = n\cdot \left( 1 - \left( 1 - \frac{1}{n} \right)^n \right) \geq n \cdot \left( 1- \frac{1}{e} \right)$. Applying~\Cref{thm:master-lower-bound-s} implies the corollary.
\end{proof}

Our construction of $\std^*$ works as follows, where $c_b = 6 \ln n$ and $\sigma_b = 6 \beta^{(\D_{n:n})}_{n^2} \sqrt{\ln n}$:

\[\sigma^*_i = \begin{cases}
0 & i \in [1, n - c_b] \\
\sigma_b & i \in [n - c_b + 1, n] \\
\end{cases}\]

\noindent We refer to the boxes with $\sigma^*_i = 0$ as ``exact'', while the boxes with $\sigma^*_i = \sigma_b$ as having ``large noise.'' It is straightforward to confirm that $\std^* \in \good{\D}{n,c}$, for $c \leq \frac{n-6\ln(n)}{n}$ (according to~\Cref{dfn: small noise}).

\Cref{thm:master-lower-bound-s} will be an immediate consequence of two facts. First, intuitively, a large noise box will have large $\epsilon_i$ with high probability, and therefore be selected by $\Naive$, but its expected reward won't be much better than $4\E[\D]$ (\Cref{lem:naive-upper-s}). On the other hand, even the policy that selects the best exact box gets reward at least $\frac{1}{2} \E[\D_{n:n}]$ (\Cref{lem:optimal-lower-s}).

\begin{lemma}
\label{lem:optimal-lower-s}
For every distribution $\D$, for all $n \geq 46$, $R_{\Opt_\D}(\D, \std^*) \geq \frac{1}{2} \E[\D_{n:n}]$.
\end{lemma}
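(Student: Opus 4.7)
The plan is to lower bound $R_{\Opt_\D}(\D, \std^*)$ by the reward of a particular (suboptimal) policy: the one that restricts attention to the first $n - c_b$ boxes, which have $\sigma^*_i = 0$. Since $\Opt_\D$ is at least as good as any specific policy, it suffices to show that this restricted policy achieves reward at least $\tfrac{1}{2}\E[\D_{n:n}]$.

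For any box $i \in [1, n - c_b]$, we have $\sigma^*_i = 0$, so $y_i = x_i$ deterministically and hence $\E[X_i \mid Y_i = y_i] = y_i$. The policy that picks $\argmax_{i \in [1, n-c_b]} y_i$ therefore obtains, in expectation, exactly $\E[\D_{n-c_b:n-c_b}]$, the expected maximum of $n - c_b$ i.i.d.\ samples from $\D$. Thus
\[R_{\Opt_\D}(\D, \std^*) \;\ge\; \E[\D_{n-c_b:n-c_b}].\]

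Next I would apply \Cref{lem:compare-order-stats} with $a = n - c_b$ and $b = n$, which yields
\[\frac{\E[\D_{n-c_b:n-c_b}]}{n - c_b} \;\ge\; \frac{\E[\D_{n:n}]}{n}, \qquad \text{i.e.,} \qquad \E[\D_{n-c_b:n-c_b}] \;\ge\; \left(1 - \frac{c_b}{n}\right)\E[\D_{n:n}].\]
Substituting $c_b = 6 \ln n$, the right-hand side becomes $\left(1 - \frac{6 \ln n}{n}\right)\E[\D_{n:n}]$.

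Finally, it remains to check that $1 - \frac{6 \ln n}{n} \ge \frac{1}{2}$, equivalently $n \ge 12 \ln n$, for all $n \ge 46$. A direct calculation shows $12 \ln 46 \approx 45.94 \le 46$, and since $n - 12 \ln n$ is increasing for $n \ge 12$, the inequality holds for all $n \ge 46$. Combining the displays gives $R_{\Opt_\D}(\D, \std^*) \ge \tfrac{1}{2}\E[\D_{n:n}]$. There is no real obstacle here; the only mildly delicate step is verifying the threshold $n \ge 46$ is tight enough to guarantee $n \ge 12 \ln n$.
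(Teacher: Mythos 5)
Your proof is correct and follows essentially the same approach as the paper: lower bound $\Opt_\D$ by the policy that restricts to the zero-noise boxes, apply \Cref{lem:compare-order-stats} to get $\E[\D_{n-c_b:n-c_b}] \ge \frac{n-c_b}{n}\E[\D_{n:n}]$, and verify the numerical threshold. The only difference is that you spell out the verification that $n \geq 46$ suffices for $n \geq 12\ln n$, which the paper leaves implicit.
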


\begin{proof}
The optimal policy is as least as good as the policy that selects the box with the largest $y_i$ among the exact boxes. Since $x_i = y_i$ for these boxes, the reward of this policy is at least
\[
\E[\D_{n-c_b:n-c_b}] \ge^{\text{(\Cref{lem:compare-order-stats})}} \frac{n - c_b}{n} \cdot \E[\D_{n:n}] = \frac{n - 6 \ln n}{n} \cdot \E[\D_{n:n}] \ge^{(n \geq 46)} \frac{1}{2} \E[\D_{n:n}]. \qedhere
 \]
\end{proof}

\begin{lemma}
\label{lem:naive-upper-s}
    For every distribution $\D$, for all $n \geq 22$, we have that $R_{\Naive}(\D, \std^*) \le 4 \E[\D]$.
\end{lemma}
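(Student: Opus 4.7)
The plan is to write $R_{\Naive} = x_{j^*}$ with $j^* = \argmax_i y_i$ and decompose according to whether $j^*$ lies among the $n-c_b$ exact boxes or the $c_b$ noisy ones. Throughout, set $T = \beta^{(\D_{n:n})}_{n^2}$, so $T/\sigma_b = 1/(6\sqrt{\ln n})$, and observe that by the definition of $\beta$ combined with \Cref{lem:compare-order-stats}, $\E[\D_{n:n}\,\Ind[\D_{n:n} > T]] \leq \E[\D_{n:n}]/n^2 \leq \E[\D]/n$. For the exact contribution: when $j^*$ is exact, $x_{j^*} = M_e := \max_{i \leq n-c_b} x_i$ and $M_e \geq \max_{i \text{ noisy}} \epsilon_i$ (since $x_i \geq 0$), so by independence $\E[x_{j^*}\,\Ind[j^*\text{ exact}]] \leq \E[M_e\,\Phi(M_e/\sigma_b)^{c_b}]$. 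Splitting at $M_e = T$: on $\{M_e \leq T\}$, \Cref{lem:normal-bound} together with $c_b = 6\ln n$ gives $\Phi(T/\sigma_b)^{c_b} \leq n^{-4}$, contributing at most $n^{-4}\E[\D_{n:n}]$; on $\{M_e > T\}$ the contribution is at most $\E[\D_{n:n}]/n^2$ by the defining property of $\beta$. Both pieces are $O(\E[\D]/n)$.

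For the noisy contribution, $\E[x_{j^*}\,\Ind[j^*\text{ noisy}]] \leq \E[x_{i^*_n}]$ with $i^*_n = \argmax_{i \text{ noisy}} y_i$; by symmetry of the $c_b$ i.i.d.\ noisy boxes (re-indexed $1,\dots,c_b$), $\E[x_{i^*_n}] = c_b\,\E[x_1\,\Ind[y_1 = \max_{j \in [c_b]} y_j]]$. Splitting again on $x_1 \leq T$ vs.\ $x_1 > T$, the tail piece is at most $c_b\,\E[\D\,\Ind[\D > T]] \leq c_b\,\E[\D_{n:n}]/n^2 = O((\ln n)/n)\,\E[\D]$, where the first inequality couples a single sample with the maximum. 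The bounded piece contributes at most $2\E[\D]$ provided the key claim holds: for all $v \leq T$,
\[\Pr\bigl[y_1 = \max_{j \in [c_b]} y_j \,\big|\, x_1 = v\bigr] \leq 2/c_b.\]
Summing the three pieces then yields $R_{\Naive} \leq 4\E[\D]$.

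The main obstacle is the key claim. Using $y_j \geq \epsilon_j$, its left-hand side is at most $\Pr[\epsilon_1 \geq M' - v]$ with $M' = \max_{j \neq 1}\epsilon_j$, and symmetry of $c_b$ i.i.d.\ Gaussians gives $\Pr[\epsilon_1 \geq M'] = 1/c_b$; thus it suffices to show the shift contribution $\Pr[M' - v \leq \epsilon_1 < M']$ is $o(1/c_b)$. By the mean value theorem and the independence of $\epsilon_1$ and $M'$, this contribution is at most $(v/\sigma_b)\,\E[\phi((M'-v)/\sigma_b)]$ plus a negligible $n^{-4}$ correction from the event $\{M' < v\}$. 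I would then derive $\E[\phi(M'/\sigma_b)] \leq \sqrt{2\ln c_b}/c_b$ via the integration-by-parts identity
\[(c_b-1)\int \phi(u)^2\Phi(u)^{c_b-2}\,du \;=\; \int u\,\phi(u)\,\Phi(u)^{c_b-1}\,du \;=\; \E\!\left[\max_{j \in [c_b]} Z_j\right]\!\bigg/c_b,\]
where the $Z_j$ are i.i.d.\ standard Gaussians and $\E[\max_j Z_j] \leq \sqrt{2\ln c_b}$. An analogous manipulation yields the same bound, up to a factor of two, for $\E[\phi((M'-v)/\sigma_b)]$ when $v \leq T$. Multiplying by $v/\sigma_b \leq 1/(6\sqrt{\ln n})$ gives a shift contribution of order $\sqrt{\ln c_b/\ln n}/c_b$, which is $o(1/c_b)$ since $c_b = \Theta(\ln n)$ forces $\ln c_b = \Theta(\ln\ln n)$. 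This delicate Gaussian-max estimate is the crux of the proof; the rest is bookkeeping against $\beta$ and the Gaussian tail bound.
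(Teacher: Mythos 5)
Your overall strategy is sound and it is genuinely different from the paper's. The paper conditions on the global event that every reward is below $\beta_{n^2}^{(\D_{n:n})}$, shows that with high probability $\Naive$ is then forced to select a large-noise box with a not-too-large observation, and bounds the posterior mean of such a box by $2\E[\D]$ via the bounded-support posterior lemma (\Cref{lem:large-box-reward-bounded-D}, which rests on the monotonicity of $\E[X\mid Y=y]$ and is reused later for linear policies). You instead decompose by which box wins, and replace the posterior-mean bound with a \emph{winning-probability} bound: a noisy box whose reward is at most $T=\beta_{n^2}^{(\D_{n:n})}$ wins among the $c_b$ noisy boxes with probability at most $2/c_b$, so by exchangeability the selected noisy box contributes at most $2\E[\D]$; the exact-box and tail contributions are handled by the Gaussian tail bound and the defining property of $\beta$, exactly as the paper handles its residual events. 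This buys a proof with no Bayesian machinery at all (and a cleaner intuition: a moderate-reward noisy box wins essentially no more often than under full symmetry), at the cost of a more delicate extreme-value estimate; the paper's route produces a reusable lemma and a set of events that its linear-policy argument recycles.

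Two places need more care than your write-up gives them. Minor: $\Phi\bigl(1/(6\sqrt{\ln n})\bigr)^{6\ln n}\le n^{-4}$ is false for moderate $n$ (at $n=22$ the left side is about $2\cdot 10^{-5}>22^{-4}$); the paper's own bound $\Phi(1/6)^{6\ln n}\le n^{-3}$ is what you should use, and it is still negligible. More substantively, the step ``an analogous manipulation yields the same bound, up to a factor of two, for $\E[\phi((M'-v)/\sigma_b)]$'' does not follow from the shifted integration by parts as stated: that identity gives $\int w\,\phi(w)\,\Phi(w+\delta)^{c_b-1}\,dw$ with $\delta=v/\sigma_b$, and the naive comparison $\Phi(w+\delta)\le\Phi(w)e^{O(\delta)}$ loses a factor $e^{\Theta(\delta c_b)}=e^{\Theta(\sqrt{\ln n})}$, which destroys the bound. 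The claim itself is true, but it needs a separate truncation argument, e.g.\ use $\phi(u-\delta)\le\phi(u)e^{u\delta}$ for $u\ge 0$, split the expectation at $u_0=2\sqrt{2\ln c_b}$ (where $e^{u_0\delta}\le 2$ since $\delta\le 1/(6\sqrt{\ln n})$ and $c_b=6\ln n$), and bound the region $u>u_0$ by $\phi(u_0)e^{u_0\delta}\le c_b^{-4}$; together with your unshifted estimate $\E[\phi(M'/\sigma_b)]\le\sqrt{2\ln c_b}/c_b$ and the $\Pr[M'<v]$ correction this does give the key claim with constant $2$ for all $n\ge 22$, and the final sum ($\approx 2\E[\D]+6\ln n\,\E[\D]/n+O(\E[\D]/n)$) stays below $4\E[\D]$ with room to spare. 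With that step filled in, your proof is complete.
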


On a high level, our proof works as follows. Consider the event $\Event^*$ that $X_i \le \beta_{n^2}^{(\D_{n:n})}$ for all boxes $i$. We prove that conditioned on $\Event^*$, $\Naive$ gets an expected reward of at most $3 \E[\D]$. On the other hand, when $\Event^*$ does not occur, even if $\Naive$ performs as well as taking $\D_{n:n} = \max_i X_i$, the contribution to the final expected reward is also upper bounded by $\E[\D]$. The second fact can be shown directly from the definition of $\beta_{n^2}^{(\D_{n:n})}$. For the first fact, we first show that with high probability $\epsilon_i$ is not too small for some large box $i$ (\Cref{lem:large-noise-eps-bound-s}); conditioned on $\Event^*$ and this event, this implies that $\Naive$ picks a large noise box. It is also true that with high probability $\epsilon_i$ is not too big, for any large noise box $i$ (\Cref{lemma:large-box-reward}). Additionally conditioning on $\epsilon_i$ being not too big for every large noise box, we have that both the noise and the reward are not too big (and there is a box with large noise). We can then upper bound the reward of $\Naive$ by the reward of a ``clairvoyant'' policy which knows $\D$, but is required to pick a large noise box; for this step, we need a technical lemma (\Cref{lem:large-box-reward-bounded-D}) that will also be useful in our lower bound for linear policies. In all other events, we upper bound $\Naive$ by $\max_i X_i$.

% \begin{corollary}
% \label{cor:large-box-reward-bounded-D}
% For any non-negative and bounded random variable $Z$ supported on $[0, V]$, with $\sigma = 6V \sqrt{\ln n}$, we have that $\E[Z \mid Z + \Norm(0,\sigma^2) = y] \le 2 \E[Z]$ for all $y \le 18 V \ln n$.
% \end{corollary}

\begin{lemma}
\label{lem:large-noise-eps-bound-s}
With probability at least $1 - \frac{1}{n^3}$, $\epsilon_i > \beta_{n^2}^{(\D_{n:n})}$ for at least one large noise box $i$.%Moreover, this event implies that $Y_i > \beta_{n^2}^{(\D_{n:n})}$ for at least one large noise box $i$.
\end{lemma}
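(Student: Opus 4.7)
The plan is a direct union-bound calculation exploiting the way $\sigma_b$ is chosen relative to $\beta^{(\D_{n:n})}_{n^2}$ and the fact that there are $c_b = 6\ln n$ independent large-noise boxes.

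First, for any individual large-noise box $i$, $\epsilon_i/\sigma_b$ is standard normal, so
\[
  \Pr\!\left[\epsilon_i > \beta^{(\D_{n:n})}_{n^2}\right] \;=\; 1 - \Phi(t_n), \qquad t_n \;:=\; \frac{\beta^{(\D_{n:n})}_{n^2}}{\sigma_b} \;=\; \frac{1}{6\sqrt{\ln n}}.
\]
Since $t_n \to 0$, this per-box probability is only slightly below $1/2$. To quantify this, I would use the elementary bound $\Phi(t_n) - \tfrac12 = \int_0^{t_n} \phi(s)\,ds \leq t_n \phi(0) = t_n/\sqrt{2\pi}$, giving $\Phi(t_n) \leq \tfrac12 + 1/(6\sqrt{2\pi \ln n})$.

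Second, since the noises across the $c_b$ large-noise boxes are independent, the probability that none of them exceed $\beta^{(\D_{n:n})}_{n^2}$ is at most
\[
  \Phi(t_n)^{c_b} \;\leq\; \left(\tfrac12 + \tfrac{1}{6\sqrt{2\pi \ln n}}\right)^{6 \ln n}.
\]
Taking logarithms and applying $\ln(\tfrac12 + x) \leq -\ln 2 + 2x$ for $x \geq 0$ small, the exponent becomes $6\ln n \cdot \bigl(-\ln 2 + O(1/\sqrt{\ln n})\bigr)$. Because $6\ln 2 > 4.15$, the leading term dominates for all moderately large $n$, yielding a failure bound of at most $n^{-c}$ for some $c > 3$, and in particular at most $n^{-3}$.

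The only mild obstacle is pinning down the numerical constant: one must verify $-6\ln 2 + 12 t_n/\sqrt{2\pi} \leq -3$ for the range of $n$ in the lemma hypothesis, which reduces to checking the inequality at the smallest allowed $n$. The slack $6\ln 2 - 3 \approx 1.16$ is comfortably larger than $12 t_n/\sqrt{2\pi}$ even at $n$ in the low tens, so the bound goes through with the stated threshold $n \geq 22$ (inherited from the parent lemma's assumption). No deeper machinery seems to be needed — the argument is just Gaussian anticoncentration near the origin combined with independence across the $\Theta(\log n)$ boxes.
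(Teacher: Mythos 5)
Your proof is correct and takes essentially the same route as the paper: compute the per-box probability $\Pr[\epsilon_i \le \beta^{(\D_{n:n})}_{n^2}] = \Phi(t_n)$ with $t_n = 1/(6\sqrt{\ln n})$, use independence across the $c_b = 6\ln n$ large-noise boxes to get failure probability $\Phi(t_n)^{6\ln n}$, and show this is at most $n^{-3}$. The only difference is in how $\Phi(t_n)$ is bounded: the paper first uses monotonicity ($t_n \le 1/6$ for $n \ge e$) and the crude numeric fact $\Phi(1/6) < 0.6$ combined with $(0.6)^2 < 1/e$, while you use the tangent-line bound $\Phi(t_n) \le \tfrac12 + t_n/\sqrt{2\pi}$ and take logarithms — a slightly tighter but morally identical estimate that closes with the same slack $6\ln 2 - 3 > 0$.
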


\begin{lemma}
\label{lemma:large-box-reward}
For any large noise box $i$, we have $\Pr\left[ \epsilon_i \le 12 \beta_{n^2}^{(\D_{n:n})} \ln n \right] \geq 1 - \frac{1}{n^2}$.
%Moreover, if $\left( \Event^* \cap \epsilon_i \le 12 \beta_{n^2}^{(\D_{n:n})} \ln n \right)$ occurs, then $Y_i = X_i + \epsilon_i \le 18 \beta_{n^2}^{(\D_{n:n})} \ln n$. \alex{no need to have this in the statement}
\end{lemma}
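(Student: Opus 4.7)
The plan is a routine Gaussian tail bound after rescaling. Since $\epsilon_i \sim \Norm(0, \sigma_b^2)$ with $\sigma_b = 6 \beta^{(\D_{n:n})}_{n^2} \sqrt{\ln n}$, the random variable $Z = \epsilon_i/\sigma_b$ is a standard normal, and the event $\epsilon_i > 12 \beta^{(\D_{n:n})}_{n^2} \ln n$ is equivalent to $Z > 12 \beta^{(\D_{n:n})}_{n^2} \ln n / \sigma_b = 2\sqrt{\ln n}$. So it suffices to show $\Pr[Z > 2\sqrt{\ln n}] \le 1/n^2$.

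I would then invoke~\Cref{lem:normal-bound} with $t = 2\sqrt{\ln n}$, which yields $1 - \Phi(t) \le \phi(t)/t$. Computing $\phi(2\sqrt{\ln n}) = \frac{1}{\sqrt{2\pi}} e^{-2 \ln n} = \frac{1}{\sqrt{2\pi} \cdot n^2}$, and dividing by $t = 2\sqrt{\ln n}$, gives
\[
\Pr\bigl[Z > 2\sqrt{\ln n}\bigr] \;\le\; \frac{1}{2\sqrt{2\pi \ln n} \cdot n^2} \;\le\; \frac{1}{n^2},
\]
where the last inequality holds for any $n \ge 2$ since $2\sqrt{2\pi \ln n} \ge 1$. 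Taking the complementary event finishes the proof.

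There is no real obstacle here: the statement is essentially a ``$2\sqrt{\ln n}$ standard deviations away'' concentration inequality, engineered so that the constants match. The only thing to be careful about is that the choice of constants in the definition of $\sigma_b$ (namely $6\sqrt{\ln n}$) and in the threshold ($12 \ln n$) are precisely the ones that make the rescaled threshold equal to $2\sqrt{\ln n}$, so the exponent in $\phi(t)$ contributes exactly $n^{-2}$.
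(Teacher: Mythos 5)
Your proof is correct and follows exactly the same route as the paper: rescale by $\sigma_b = 6\beta_{n^2}^{(\D_{n:n})}\sqrt{\ln n}$ to reduce the event to $Z > 2\sqrt{\ln n}$ for standard normal $Z$, then apply the Gaussian tail bound from~\Cref{lem:normal-bound} with $t = 2\sqrt{\ln n}$, getting $\phi(t)/t = \frac{1}{2\sqrt{2\pi\ln n}\,n^2} \le \frac{1}{n^2}$. The arithmetic and constants all match; there is nothing to add.
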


\begin{lemma}
\label{lem:large-box-reward-bounded-D}
For any non-negative and bounded random variable $Z$ supported on $[0, V]$ and any $\sigma > 2V$, we have that $\E[Z \mid Z + \Norm(0,\sigma^2) = y] \le 2 \E[Z]$ for all $y \le \frac{\sigma^2}{2V}$.
\end{lemma}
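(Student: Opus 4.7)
The plan is to write the conditional expectation via Bayes' rule and then tightly control the multiplicative reweighting that the observation $y$ induces on the prior of $Z$. Writing $\phi_\sigma$ for the density of $\Norm(0,\sigma^2)$, the posterior density of $Z$ given $Z+\epsilon = y$ is proportional in $z$ to $f_Z(z)\,\phi_\sigma(y-z)$, so
\[
\E[Z \mid Z + \epsilon = y] \;=\; \frac{\E[Z \cdot g(Z)]}{\E[g(Z)]}, \qquad g(z) \;:=\; \frac{\phi_\sigma(y-z)}{\phi_\sigma(y)} \;=\; \exp\!\left(\frac{z(2y-z)}{2\sigma^2}\right).
\]

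The main case I would handle first is $0 \le y \le \sigma^2/(2V)$, where I can pinch $g(z)$ between two constants on $[0,V]$. Since $z \ge 0$, $z(2y-z) \le 2yz \le 2yV \le \sigma^2$, so $g(z) \le e^{1/2}$. In the other direction, $z(2y-z)$ is only negative when $z > 2y$, and then $z(2y-z) \ge -z^2 \ge -V^2$; combined with $\sigma > 2V$ (which gives $V^2/(2\sigma^2) < 1/8$), this yields $g(z) \ge e^{-1/8}$. Substituting both bounds into the Bayes expression,
\[
\E[Z \mid Z + \epsilon = y] \;\le\; \frac{e^{1/2}\,\E[Z]}{e^{-1/8}} \;=\; e^{5/8}\,\E[Z] \;<\; 2\,\E[Z],
\]
since $e^{5/8} \approx 1.868$.

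For the remaining range $y < 0$, redoing the pinch would fail because the lower bound on $g$ degrades as $y \to -\infty$ (the corner value $z=V$ gives $g(V) = \exp(V(2y-V)/(2\sigma^2))$, which vanishes as $y$ becomes very negative). Instead, I would invoke monotonicity of the posterior mean in $y$ under Gaussian noise (the technical fact \Cref{lem: monotonicity of expected posterior} already used in the proof of \Cref{lem:upp-bound-expected-posterior}). This gives $\E[Z \mid Z+\epsilon=y] \le \E[Z \mid Z+\epsilon = 0]$, and the previous paragraph applied at $y=0$ closes the case.

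The one delicate point is the tuning of constants: the two slack conditions $y \le \sigma^2/(2V)$ and $\sigma > 2V$ are calibrated exactly so that the sum of exponents $\tfrac{1}{2} + \tfrac{1}{8} = \tfrac{5}{8}$ keeps $e^{5/8}$ below $2$. Any attempt to strengthen the conclusion (e.g.\ replacing the factor $2$ by something close to $1$) would require strictly more headroom in $\sigma/V$ and in $y V/\sigma^2$. Beyond this bookkeeping, the proof is a one-line Bayes manipulation plus a single appeal to monotonicity for the negative-$y$ tail.
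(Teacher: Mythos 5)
Your proof is correct and uses essentially the same mechanism as the paper: express the posterior mean as a ratio involving the Gaussian likelihood, pinch that likelihood on $[0,V]$ using the two slack conditions $\sigma > 2V$ and $y \le \sigma^2/(2V)$, and fall back on the monotonicity lemma for the remaining $y$. The only cosmetic difference is that the paper evaluates the pinch solely at the endpoint $y^* = \sigma^2/(2V)$ (obtaining the slightly tighter constant $\sqrt{e}$) and lets monotonicity cover all $y \le y^*$, whereas you pinch uniformly over $0 \le y \le y^*$ (yielding $e^{5/8}$) and only invoke monotonicity for negative $y$.
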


% \begin{corollary}
% \label{cor:naive-bad}
% Conditioned on $\Event^*$, $\Naive$ chooses a large noise box with probability at least $1 - \frac{1}{n^3}$.
% \end{corollary}

% \begin{proof}[Proof of~\Cref{cor:naive-bad}]
% Note that $Y_i \le \beta_{n^2}^{(\D_{n:n})}$ for all exact boxes $i$. By~\Cref{lem:large-noise-eps-bound-s}, with probability of at least $1 - \frac{1}{n^3}$, there exists some large noise box $j$ where $\epsilon_j > \beta_{n^2}^{(\D_{n:n})}$, and therefore $Y_j = X_j + \epsilon_j > \beta_{n^2}^{(\D_{n:n})} \ge Y_i$ for all exact boxes $i$.
% \end{proof}

\begin{proof}[Proof of~\Cref{lem:naive-upper-s}]

We define the following events.
\begin{itemize}[leftmargin=*]
\item $\Event_1$ be the event that $\epsilon_j \le 12 \beta_{n^2}^{(\D_{n:n})} \ln n$ for all large noise boxes $j$.
\item $\Event_1'$ be the event that $Y_j \le 18 \beta_{n^2}^{(\D_{n:n})} \ln n$ for all large noise boxes $j$.
\item $\Event_2$ be the event that $\epsilon_j > \beta_{n^2}^{(\D_{n:n})}$ for at least one large noise box $j$.
\item $\Event_2'$ be the event that $Y_j > \beta_{n^2}^{(\D_{n:n})}$ for at least one large noise box $j$.
\end{itemize}
Recall that $\Event^*$ is the event that $X_i \le \beta_{n^2}^{(\D_{n:n})}$ for all $i \in [n]$. 

We first explore the relationship between these events. First, 
notice that if $X_i \leq \beta_{n^2}^{(\D_{n:n})}$ and $\epsilon_i \leq 12 \beta_{n^2}^{(\D_{n:n})} \ln n$, we have that 
\[Y_i = X_i + \epsilon_i \le \beta_{n^2}^{(\D_{n:n})} + 12 \beta_{n^2}^{(\D_{n:n})} \ln n \le 18 \beta_{n^2}^{(\D_{n:n})} \ln n.\]
Therefore, $\Event_1 \cap \Event^* \subseteq \Event_1' \cap \Event^*$. Since $X_i \geq 0$ for all $i$, $\Event_2'$ occurs every time $\Event_2$ occurs, i.e. $\Event_2 \subseteq \Event_2'$, and thus $\Event_2 \cap \Event^* \subseteq \Event_2' \cap \Event^*$. Therefore, $\Event_1 \cap \Event_2 \cap \Event^* \subseteq \Event_1' \cap \Event_2' \cap \Event^*$, or $\overline{\Event_1 \cap \Event_2} \cap \Event^* \supseteq \overline{\Event_1' \cap \Event_2'} \cap \Event^*$. 

First, we will bound $\E[\max X_i \mid \overline{\Event_1' \cap \Event_2'} \cap \Event^*] \cdot \Pr[\overline{\Event_1' \cap \Event_2'} \mid \Event^*]$, which is an upper bound on the contribution of outcomes in $\overline{\Event_1' \cap \Event_2'} \cap \Event^*$ to the overall expected reward of $\Naive$. Since the contribution of an event $A$ to the expectation of a random variable ($\E[X|A]\Pr[A]$) is smaller than the contribution of an event $B$ to the expectation if $A \subseteq B$, we have
\[\E[\max_i X_i \mid \overline{\Event_1' \cap \Event_2'} \cap \Event^*] \cdot \Pr[\overline{\Event_1' \cap \Event_2'} \mid \Event^*] \leq \E[\max_i X_i \mid \overline{\Event_1 \cap \Event_2} \cap \Event^*] \cdot \Pr[\overline{\Event_1 \cap \Event_2} \mid \Event^*].\]
By~\Cref{lemma:large-box-reward}, $\Pr[\Event_1] \geq \left( 1 - \frac{1}{n^2} \right)^{c_b} \ge 1 - \frac{6 \ln n}{n^2}$. By~\Cref{lem:large-noise-eps-bound-s}, $\Pr[\Event_2] \geq 1 - \frac{1}{n^3}$. Therefore, $\Pr[ \Event_1 \cap \Event_2 ] \geq \Pr[ \Event_1] + \Pr[\Event_2] - 1 \geq 1 - \frac{6 \ln n}{n^2} + 1 - \frac{1}{n^3} - 1 \geq 1 - \frac{7 \ln n}{n^2}$. Observe that, $\Event_1$ and $\Event_2$ are independent of the $X_i$s, while $\Event^*$ only dependent on $X_i$s. Therefore, $\Event_1 \cap \Event_2$ and $\Event^*$ are independent, and hence $\Pr[\Event_1 \cap \Event_2 \mid \Event^*] = \Pr[\Event_1 \cap \Event_2] \ge 1 - \frac{7 \ln n}{n^2}$, or $\Pr[\overline{\Event_1 \cap \Event_2} \mid \Event^*] \le \frac{7 \ln n}{n^2}$. Additionally, $\E[\max_i X_i \mid \overline{\Event_1 \cap \Event_2} \cap \Event^*] = \E[\max_i X_i \mid \Event^*]$, as $\Event_1$ and $\Event_2$ are events regarding $\epsilon_i$s and therefore is independent of $X_i$. Furthermore, $\E[\max_i X_i \mid \Event^*] = \E[\max_i X_i \mid X_i \leq \beta_{n^2}^{(\D_{n:n})}] \le \E[\max_i X_i] = \E[\D_{n:n}]$. Putting everything together, we have
\begin{align*}
\E[\max_i X_i \mid \overline{\Event_1' \cap \Event_2'} \cap \Event^*] \cdot \Pr[\overline{\Event_1' \cap \Event_2'} \mid \Event^*] &\le \E[\max_i X_i \mid \overline{\Event_1 \cap \Event_2} \cap \Event^*] \cdot \Pr[\overline{\Event_1 \cap \Event_2} \mid \Event^*] \\
&\le \E[\D_{n:n}] \cdot \frac{7 \ln n}{n^2} \\
&\le^{\text{(\Cref{lem:compare-order-stats})}} \frac{7 \ln n}{n^2} \cdot n \cdot \E[\D] \\
&\le^{(n \geq 22)} \E[\D].
\end{align*}

Second, we will upper bound the contribution of outcomes in $\Event_1' \cap \Event_2' \cap \Event^*$ to the expected reward of $\Naive$. Note that in such outcomes, $\Naive$ must choose a large noise box, by the definition of $\Event_2'$ ($Y_j > \beta_{n^2}^{(\D_{n:n})}$ for some large noise box $j$) and $\Event^*$ ($X_i \leq \beta_{n^2}^{(\D_{n:n})}$ for all $i$, and therefore the exact boxes). Therefore, in such an outcome, the reward of $\Naive$ is at most the reward of an optimal policy which also knows $\D$, but is conditioned to pick a large noise box. When selecting box $i$ such a policy makes expected reward $\E[X_i \mid Y_i = y_i, \Event^*, \Event_1', \Event_2'] = \E[X_i \mid Y_i = y_i, \Event^*]$, where the equality holds since $X_i$ is independent of $Y_j$, for $j \neq i$, and $\Event_1' \cap \Event_2'$ have less information about $Y_i$ than $\{ Y_i = y_i \}$. Let $R_i(y_i) = \E[ X_i \mid Y_i = y_i, \Event^* ]$. The reward of an optimal policy which knows $\D$ and is conditioned to pick a large noise box is then $\E_{\observ}\left[ \max_{i \in [n - c_b+1,n]} R_i(y_i) \mid \Event_1' \cap \Event_2' \cap \Event^* \right]$. We prove that $R_i(y_i) \leq 2 \E[\D]$ for all $y_i$ consistent with $\Event_1' \cap \Event_2' \cap \Event^*$, which in turn implies an upper bound of $2\E[\D]$ for the expected reward of $\Naive$ conditioned on in $\Event_1' \cap \Event_2' \cap \Event^*$.

Consider any large noise box $i$. Let $\overline{X}_i = X_i \mid X_i \leq \beta_{n^2}^{(\D_{n:n})}$.\footnote{Equivalently, we can think of sampling from $\overline{X}_i$ by sampling from $X_i$, until $X_i \leq \beta_{n^2}^{(\D_{n:n})}$.} Then, conditioned on $\Event_1' \cap \Event_2' \cap \Event^*$, for any realization of $\observ$, we note that $R_i(y_i) = \E[X_i \mid Y_i = y_i, \Event^*] = \E[\overline{X}_i \mid \overline{X}_i + \Norm(0, \sigma_i^2) = y_i]$. Furthermore, as $y_i$ is a realization conditioned on $\Event_1' \cap \Event_2' \cap \Event^*$, we have $y_i \le 18 \beta_{n^2}^{(\D_{n:n})} \ln n$.
Using~\Cref{lem:large-box-reward-bounded-D} for $V = \beta_{n^2}^{(\D_{n:n})}$ and $\sigma = \sigma_b = 6 \beta_{n^2}^{(\D_{n:n})} \sqrt{\ln n}$, we have $\E[\overline{X}_i \mid \overline{X}_i + \Norm(0, \sigma_i^2) = y_i] \le 2 \E[\overline{X}_i] \le 2 \E[X_i] = 2 \E[\D]$.

Overall, conditioned on $\Event^*$, if $\Event_1' \cap \Event_2'$ occurs, $\Naive$'s expected reward is at most $2 \E[\D]$; otherwise, the contribution to the expected reward is at most $\E[\D]$. Thus, the reward of $\Naive$ conditioned on $\Event^*$ is at most
\begin{align*}
\Pr[\Event_1' \cap \Event_2' \mid \Event^*] \cdot 2 \E[\D] &+ \E[\max X_i \mid \overline{\Event_1' \cap \Event_2'} \cap \Event^*] \cdot \Pr[\overline{\Event_1' \cap \Event_2'} \mid \Event^*] \\
        &\le 2 \E[\D] + \E[\D] \\
        &= 3 \E[\D].
\end{align*}
Finally, conditioned on $\Event^*$ not happening, the best $\Naive$ can do is $\D_{n:n} = \max_i X_i$, whose expected reward is $\E[\D_{n:n} \mid \overline{\Event^*}]$. Therefore:
\begin{align*}
    R_{\Naive}(\D, \std^*) &\leq 3 \E[\D] \cdot \Pr[\Event^*] + \E[\D_{n:n} \mid \overline{\Event^*}] \cdot \Pr[\overline{\Event^*}] \\
    &\le 3 \E[\D] + \E[\D_{n:n} \mid \D_{n:n} \ge \beta_{n^2}^{(\D_{n:n})}] \cdot \Pr[\D_{n:n} \ge \beta_{n^2}^{(\D_{n:n})}] \\
    &\le^{\text{(\Cref{dfn: beta_m})}} 3 \E[\D] + \frac{\E[ \D_{n:n} ]}{n^2} \\
    &\le^{\text{(\Cref{lem:compare-order-stats})}} 3 \E[\D] + \frac{n \cdot \E[\D]}{n^2} \\
    &\le 4 \E[\D]. \qedhere
\end{align*}

\end{proof}
\begin{proof}[Proof of~\Cref{thm:master-lower-bound-s}]
The theorem is implied by Lemmas~\ref{lem:optimal-lower-s} and~\ref{lem:naive-upper-s}.
\end{proof}

\subsection{Negative results for Linear policies}\label{subsec: lower bound linear}

In this section, we give our negative results for linear policies. Recall that a linear policy parameterized $\gamma: \R^n \times \R^n \to \R$ selects the box which maximizes $y_i - \gamma(\std, \observ) \cdot \sigma_i$.

\begin{theorem}
\label{thr:master-lower-bound}
For every MHR distribution $\D$, for all $n \geq n_0$, for some constant $n_0$, there exists $\std^* = (\sigma^*_1,\dots,\sigma^*_n)$, such that $\std^* \in \goodMHR{\D}{n,1/5626}$, and for every function $\gamma : \R^n \times \R^n \to \R$, we have
    \[
    R_{\Linear_\gamma}(\D, \std^*) \in O \left( \frac{\E[\D]}{\E[\D_{n:n}]} \right) \, R_{\Opt_{\D}}(\D, \std^*) .
    \]
\end{theorem}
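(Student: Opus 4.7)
The plan is to construct a single $\std^*$ that simultaneously defeats \emph{every} discount $\gamma$, by combining the ``few heavy-noise boxes fool $\Naive$'' idea from \Cref{subsec: naive fails} with a geometric cascade of $L = \Theta(\log n)$ tiers at increasing noise scales, sized so that no choice of $\gamma$ escapes picking some badly-behaved box. The delicate point, which drives the final design, is that $\gamma$ can adapt to $(\std^*, \observ)$; we handle this by exploiting that the set of candidate argmax-boxes (as $\gamma$ varies over $\R$) is always a short Pareto frontier.

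\textbf{Construction and lower bound on $\Opt$.} We include $n^{1/5626}$ \emph{anchor} boxes at a small noise level $\sigma_s$ just at the $\goodMHR{\D}{n,1/5626}$ threshold, together with $L = \Theta(\log n)$ \emph{fooling} tiers at geometrically spaced scales $\sigma^{(1)} < \dots < \sigma^{(L)}$, each taken far above $\alpha^{(\D)}_n = \Theta(\E[\D_{n:n}])$ (applying \Cref{lem:order-stat-vs-quantile}). The tier sizes $n_1 < \dots < n_L$ are tuned so that the per-tier max linear score $\sigma^{(j)}(\sqrt{2\ln n_j} - c)$ peaks at a different $j$ for each regime of $c$, so no value of $c$ avoids a fooling tier ``winning'' against the anchors. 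For the lower bound, $\Opt_\D$ is at least as strong as the threshold policy of \Cref{sec:positive} restricted to the anchors; invoking \Cref{lem: cai daskalakis alpha bound} and \Cref{lem:order-stat-vs-quantile}, $\E[\D_{n^{1/5626}:n^{1/5626}}] = \Theta(\E[\D_{n:n}])$, yielding $R_{\Opt_\D}(\D, \std^*) = \Omega(\E[\D_{n:n}])$.

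\textbf{Upper bound on $\Linear_\gamma$.} Let $\Event^*$ be the high-probability event that (a) $\max_i X_i \le 2\E[\D_{n:n}]$, by \Cref{lem:lower bound on max passing expectation}, and (b) inside each fooling tier $j$ every Gaussian noise is at most $O(\sigma^{(j)}\sqrt{\ln n_j})$, by standard Gaussian-tail and union bounds. Outside $\Event^*$ the contribution to $R_{\Linear_\gamma}$ is absorbed into an additive $O(\E[\D])$ exactly as in the proof of \Cref{lem:naive-upper-s}. On $\Event^*$, for any realization and any resulting $c = \gamma(\std^*, \observ)$, the box chosen by $\Linear_\gamma$ is either in some fooling tier $j$---where \Cref{lem:large-box-reward-bounded-D} with $V = 2\E[\D_{n:n}]$ and $\sigma = \sigma^{(j)} \gg V$ gives posterior $\le 2\E[\overline{X}] = O(\E[\D])$, the hypothesis $y_i \le (\sigma^{(j)})^2/(2V)$ following from our tier-size calibration---or among the anchors, whose $\sigma_s$ is chosen so that the argmax-$y$ anchor lies in the ``Gaussian-shrunk'' regime where an analogous application of \Cref{lem:large-box-reward-bounded-D} (now with $\sigma = \sigma_s$) still yields posterior $O(\E[\D])$.

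\textbf{Main obstacle.} The genuinely subtle step is the adaptive nature of $\gamma$: for each realization it is free to pick whatever $c$ makes $\LinearFixed_c$ most successful. The key observation is that as $c$ sweeps $\R$, the argmax of $y_i - c \sigma^*_i$ traces only the lower-left Pareto frontier of $\{(\sigma^*_i, y_i)\}$, and by the tier structure this frontier has at most $L+1 = O(\log n)$ active vertices (one per fooling tier plus at most one anchor). Since every vertex has posterior $O(\E[\D])$ on $\Event^*$, the bound on $R_{\Linear_\gamma}(\D, \std^*)$ is $O(\E[\D])$ uniformly in $\gamma$, which combined with the lower bound gives the ratio $O(\E[\D]/\E[\D_{n:n}])$. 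The hard part of the proof will be calibrating $n_j$ and $\sigma_s$ so that simultaneously (i) the sweep covers every $c$ for which a fooling tier can beat the anchors, (ii) the observation-bound hypothesis of \Cref{lem:large-box-reward-bounded-D} is met at every tier-leader, and (iii) the anchor-case posterior remains $O(\E[\D])$ despite $\sigma_s$ sitting at the $\goodMHR$ boundary.
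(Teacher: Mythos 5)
Your Pareto-frontier observation is correct and matches the paper's high-level strategy (reduce $\Linear_\gamma$ to the family $\LinearFixed_c$, then handle $c$ in regimes), but there is a genuine gap in how you handle the anchor boxes, and it is fatal.

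You claim that a $\Linear_\gamma$ that selects the argmax-$y$ anchor has posterior $O(\E[\D])$, via ``an analogous application of \Cref{lem:large-box-reward-bounded-D} (now with $\sigma = \sigma_s$).'' This cannot be right. \Cref{lem:large-box-reward-bounded-D} needs $\sigma > 2V$ where $V$ bounds the support of the conditioned reward; on your event $\Event^*$, $V = 2\E[\D_{n:n}]$, while $\sigma_s$ sits at the $\goodMHR{\D}{n,c}$ boundary, i.e.\ $\sigma_s = \Theta(\E[\D_{n^c:n^c}]/\sqrt{\ln n}) \ll \E[\D_{n:n}]$. So the lemma's hypothesis $\sigma \gg V$ is reversed for anchors. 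Moreover, the conclusion you want is false regardless of which lemma you reach for: because $\sigma_s$ is small, $\E[X_i \mid Y_i = y_i] \approx y_i \approx x_i$ for anchors, and the top-$y$ anchor has posterior $\Theta(\E[\D_{n^c:n^c}]) = \Theta(\E[\D_{n:n}])$ (\Cref{lem:order-stat-order-stat-bound}). If the anchors' posterior were $O(\E[\D])$, then the optimal policy restricted to anchors would also earn $O(\E[\D])$, destroying your own lower bound on $R_{\Opt_\D}$. The point you must prove is not that anchors have low posterior; it is that $\Linear_\gamma$ \emph{never picks an anchor box}.

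This is exactly where the paper's construction differs from yours, and it is much simpler: only three noise levels are needed — one \emph{exact} box with $\sigma^*_1 = 0$, a handful of small-noise ``anchor'' boxes at $\sigma_s$, and the remaining boxes at a single large noise level $\sigma_b$. The exact box is the key device you are missing: for any $c \geq \theta^* := \sqrt{\ln n/2}$, the discount $c\sigma_s$ dwarfs the anchors' observations (on the high-probability event that the small-noise observations are at most $\approx 2\E[\D_{c_s:c_s}] + \theta^*\sigma_s/37 = \theta^*\sigma_s$), so every anchor score $y_i - c\sigma_s$ is negative, while the exact box has nonnegative score $X_1 \geq 0$ (\Cref{lem:c-large}). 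For any $c < \theta^*$, a large-noise box dominates (\Cref{lem:c-small}). Thus for \emph{every} $c$, $\LinearFixed_c$ avoids the anchors, and the selected box is either the exact box (reward $\E[\D]$) or a large-noise box (posterior $O(\E[\D])$ by \Cref{lem:large-box-reward-bounded-D}, which now legitimately applies since $\sigma_b \gg V$). Your $\Theta(\log n)$-tier cascade is unnecessary; a single threshold $\theta^*$ separating the two $c$-regimes, plus the exact box absorbing the over-discounted regime, is all that is needed.

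To repair your argument you would have to (i) add an exact (or near-exact) box whose score dominates anchors when $c$ is large, (ii) replace the anchor posterior bound by a non-selection argument, and (iii) show that the only boxes that can sit on the Pareto frontier and actually be chosen are the exact box and a fooling-tier box. Once those changes are made, the multi-tier cascade collapses to a single large-noise tier and you essentially recover the paper's proof.
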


An immediate corollary is that linear policies give, in the worst case, a logarithmic approximation, even for MHR distributions, by considering $\D$ to be the exponential distribution with parameter $\lambda=1$, for which $\E[\D_{n:n}] = \sum_{i=1}^n \frac{1}{i} \geq \ln n$. Note also that $\E[\D_{n:n}] \leq \ln n + 1$ for all MHR random variables (\Cref{lem:order-stat-mean-bound}; \Cref{subsec: lower bound linear}), so the exponential distribution minimizes the ratio in~\Cref{thr:master-lower-bound} (up to constants). 

\begin{corollary}
There exists $\D$, such that for all $n \geq n_0$, for some constant $n_0$, there exists $\std^* \in \goodMHR{\D}{n,1/5626}$ such that $R_{\Opt_{\D}}(\D, \std^*) \in \Omega( \ln(n) ) \cdot R_{\Linear_\gamma}(\D, \std^*)$.
%Linear policies are a $\Theta(\log n)$ approximation to the optimal policy.
\end{corollary}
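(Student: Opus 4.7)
The plan is to construct a single pathological instance $\std^*$ defeating all linear policies simultaneously by combining two failure modes: ``under-discounting'' (small $c$, essentially the $\Naive$ failure of Section~\ref{subsec: naive fails}) and ``over-discounting'' (large $c$, where the policy ignores high-noise boxes but is forced to pick from a very limited pool). The first step is a reduction: for every realization $(\std^*, \observ)$, $\Linear_\gamma$ picks the same box as $\LinearFixed_c$ with $c = \gamma(\std^*, \observ)$, so $R_{\Linear_\gamma}(\D, \std^*) \leq \E_{\observ}\left[\sup_{c \geq 0} R_{\LinearFixed_c}(\D, \std^*, \observ)\right]$. Since $i^*(c) = \argmax_i (y_i - c \sigma_i)$ is piecewise constant in $c$ (changing only on the upper envelope of the $n$ lines $c \mapsto y_i - c \sigma_i$), the supremum is effectively a max over a finite, data-dependent set.

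Next, I construct $\std^*$ with $K + 1 = O(\log n)$ tiers of geometrically spaced noises $\sigma^{(0)} < \cdots < \sigma^{(K)}$. The tier containing $n^{1/5626}$ boxes has noise exactly at the $\goodMHR$ threshold $\frac{\E[\D_{n^{1/5626}:n^{1/5626}}]}{18\sqrt{2 \ln n / 5626}}$, which secures $\std^* \in \goodMHR{\D}{n, 1/5626}$. The highest-noise tier is large, playing the role of the $\Theta(\log n)$ large-noise boxes in $\Naive$'s counter-example, while the lowest-noise tiers are intentionally small (e.g., $O(1)$ or $O(\log n)$ boxes) so that an over-discounting policy restricted to them achieves at most $O(\E[\D])$. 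Tier sizes and noise levels are tuned so that (i) for every $c$, some tier $k(c)$ dominates the $\argmax$ with high probability, and (ii) the critical $\{c_k\}$ partition $[0, \infty)$ cleanly.

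For each tier $k$ and the critical weight $c_k$ that targets it, I follow the template of Lemma~\ref{lem:naive-upper-s}. Conditioning on the event $\Event^* = \{\forall i, X_i \leq \beta^{(\D_{n:n})}_{n^2}\}$ (which holds w.h.p.\ by the MHR concentration of Lemma~\ref{lem:lower bound on max passing expectation}) and on the chosen box's noise being typical (analogue of Lemma~\ref{lemma:large-box-reward}), I apply Lemma~\ref{lem:large-box-reward-bounded-D} with $V = \beta^{(\D_{n:n})}_{n^2}$ and $\sigma = \sigma^{(k)}$ to bound the posterior reward by $2\E[\D]$ for tiers with noise exceeding $2V$; otherwise, the tier's intentionally small size gives $O(\E[\D])$ directly. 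A union bound over the $O(\log n)$ representative weights $c_k$ yields total failure probability $o(1/n)$; on failure, the reward is at most $\E[\D_{n:n}]$ contributing $O(\E[\D])$ by Lemma~\ref{lem:compare-order-stats}, so $R_{\Linear_\gamma}(\D, \std^*) = O(\E[\D])$. For the optimal lower bound, $R_{\Opt_{\D}}(\D, \std^*)$ is at least the expected max posterior over the $\goodMHR$ tier: since its noise is by design much smaller than $\E[\D_{n^{1/5626}:n^{1/5626}}]$, the posterior tracks the max $x_i$ up to constants, so $R_{\Opt_{\D}}(\D, \std^*) \in \Omega(\E[\D_{n^{1/5626}:n^{1/5626}}])$, which by Lemma~\ref{lem: cai daskalakis alpha bound} and Lemma~\ref{lem:order-stat-vs-quantile} is $\Omega(\E[\D_{n:n}])$.

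The hard part will be the layered construction itself: tuning tier sizes and noise levels so that every $c \geq 0$ is caught by the per-tier Naive-style analysis at some tier $k(c)$, and crucially that the $\goodMHR$ tier---which by design harbors accurate information about many boxes---is never the target of any $\LinearFixed_c$ (otherwise that $c$ would yield $\Omega(\E[\D_{n:n}])$ reward, breaking the upper bound). Balancing these constraints while ensuring tier sizes sum to $n$, critical weights $c_k$ partition $[0, \infty)$ cleanly, and each per-tier failure probability is $O(\log n / n^2)$, is the technical heart of the construction; the union bound over the $O(\log n)$ representative weights is then routine.
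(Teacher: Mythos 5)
There is a genuine gap. The paper proves this corollary in one step, by instantiating \Cref{thr:master-lower-bound} with $\D$ the exponential distribution (so $\E[\D_{n:n}] = \sum_{i\le n} 1/i \ge \ln n$ while $\E[\D]=1$, and by \Cref{lem:order-stat-mean-bound} this choice is extremal). You instead set out to re-prove the theorem itself via a new $O(\log n)$-tier construction, but you leave unresolved precisely the step that makes the theorem true: showing that \emph{no} $\LinearFixed_c$, for \emph{any} $c \ge 0$, ever selects a box from the informative tier lying in $\goodMHR{\D}{n,1/5626}$. You name this as ``the technical heart'' and defer it, but without it the claimed bound $R_{\Linear_\gamma}(\D,\std^*) = O(\E[\D])$ has no support --- for an intermediate window of $c$ a linear policy could isolate the accurate tier and collect $\Omega(\E[\D_{n:n}])$, which is exactly what must be excluded. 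The paper's construction handles this with only three noise levels and a delicate calibration: a single zero-noise box; $\sigma_s$ placed within a constant factor of the small-noise threshold so that for every $c \ge \theta^* = \sqrt{\ln n/2}$ the discounted scores of all small-noise boxes fall below the exact box's nonnegative score (\Cref{lem:c-large}); and $\sigma_b$ tuned so that for every $c < \theta^*$ some large-noise box dominates (\Cref{lem:c-small}). Monotonicity in $c$ lets two events cover all weights simultaneously, so no discretization or union bound over $c$ is needed. Your geometric ladder of tiers multiplies the number of crossover constraints that must hold at once (each adjacent pair of tiers needs its own ``no winning window'' guarantee), and the proposal verifies none of them; it is not clear the extra tiers are even consistent with keeping the good tier permanently dominated.

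Two secondary points. First, you never instantiate $\D$: your (intended) bounds $R_{\Linear_\gamma} = O(\E[\D])$ and $R_{\Opt_\D} = \Omega(\E[\D_{n:n}])$ only give a gap of $\E[\D_{n:n}]/\E[\D]$, so to reach the stated $\Omega(\ln n)$ you must still pick a distribution with $\E[\D_{n:n}] \ge \ln n \cdot \E[\D]$, e.g.\ the exponential, as the paper does. Second, low-noise tiers of size $O(\log n)$ are too large for that target: for the exponential distribution they already contribute $\Theta(\log\log n)\cdot\E[\D]$ to the linear policy's reward, degrading the ratio to $\Omega(\ln n/\log\log n)$; the paper avoids this by using a single exact box plus the calibrated small-noise tier. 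These two items are fixable, but the unresolved domination argument in the first paragraph is the substantive missing piece.
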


Our construction of $\std^*$ works as follows. It contains one box such that $\sigma^* = 0$, a small number of boxes with some small noise $\sigma_s$, and the remaining boxes have large noise $\sigma_b$:
\[
\sigma^*_i = \begin{cases}
0 & i = 1 \\
\sigma_s & i \in [2, c_s + 1] \\
\sigma_b & i \in [c_s + 2, n] \\
\end{cases}
\]
where $c_s = n^{1/5626}$, $\sigma_s = \frac{37}{9\sqrt{2}} \frac{\E[\D_{c_s:c_s}]}{\sqrt{\ln n}}$, and $\sigma_b = 6 \alpha_{n^{1/10000}}^{(\D_{n-c_s:n-c_s})} \sqrt{\ln n}$.
We refer to the first box as the ``exact box,'' the boxes with $\sigma^*_i = \sigma_s$ as ``small noise'' boxes, and the rest as ``large noise'' boxes. One can easily confirm that $\std^* \in \goodMHR{\D}{n,1/5626}$.

% First, we will quantify the ``largeness'' of $\sigma_b$ with the following lemma.

% \begin{lemma}
% \label{lem:bound-on-sigma-b}
%     For any MHR distribution $\D$ supported on $[0, \infty)$ and all $n \ge 2$, we have $\sigma_b \ge 2 \E[\D_{1:n^5}] \sqrt{\ln n}$.
% \end{lemma}

We first lower bound the expected reward of the optimal policy.

\begin{lemma}
\label{lem:optimal-lower}
For every MHR distribution $\D$, all $n \geq n_0$, for some constant $n_0$, $R_{\Opt}(\D, \std^*) \in \Omega \left( \E[\D_{n:n}] \right)$.
\end{lemma}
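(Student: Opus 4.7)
The plan is to lower bound $R_{\Opt_\D}(\D,\std^*)$ by the reward of a specific (suboptimal) policy $A$: pick $i^* \in \argmax_{i \in [2,\, c_s+1]} y_i$, i.e., run $\Naive$ restricted to the $c_s$ small-noise boxes (ignoring the exact box and all large-noise boxes). Since $\Opt_\D$ is optimal among all policies, $R_{\Opt_\D}(\D,\std^*) \ge R_A(\D,\std^*)$, so it will suffice to show $R_A(\D,\std^*) \in \Omega(\E[\D_{n:n}])$. I would split this into two steps: first, a standard Gaussian-maximum argument to show $R_A \in \Omega(\E[\D_{c_s:c_s}])$; second, an application of the two MHR lemmas from \Cref{subsec:technical} to promote this to $\Omega(\E[\D_{n:n}])$.

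For the first step, I would let $\mathcal{G}$ be the event that every small-noise box satisfies $|\epsilon_j| \le 3 \sigma_s \sqrt{\ln c_s}$. The Gaussian tail bound $\Pr[|\Norm(0,\sigma_s^2)|>t] \le 2e^{-t^2/(2\sigma_s^2)}$ with $t = 3\sigma_s\sqrt{\ln c_s}$ plus a union bound gives $\Pr[\mathcal{G}] \ge 1 - 1/c_s$. On $\mathcal{G}$, if $i_{\max}$ denotes the small-noise box with the largest $x$-value, then $y_{i^*} \ge y_{i_{\max}}$ forces $X_{i^*} \ge X_{i_{\max}} - 6 \sigma_s \sqrt{\ln c_s}$ by the triangle inequality. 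Because $\mathcal{G}$ depends only on the noises, it is independent of $\{X_j\}$, so taking expectations yields $R_A \ge \Pr[\mathcal{G}] \cdot ( \E[\D_{c_s:c_s}] - 6 \sigma_s \sqrt{\ln c_s} )$. Plugging in $\sigma_s = \tfrac{37}{9\sqrt 2} \cdot \tfrac{\E[\D_{c_s:c_s}]}{\sqrt{\ln n}}$ and $\ln c_s = \tfrac{\ln n}{5626}$ turns the correction into a fixed fraction (numerically around $0.24$) of $\E[\D_{c_s:c_s}]$, giving $R_A \ge \tfrac{3}{4}\E[\D_{c_s:c_s}]$ for all $n \ge n_0$.

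For the second step, I would apply \Cref{lem: cai daskalakis alpha bound} with $m=c_s$ and $d=5626$ to obtain $\alpha^{(\D)}_n \le 5626 \cdot \alpha^{(\D)}_{c_s}$, and then sandwich with \Cref{lem:order-stat-vs-quantile} on both sides to get $\tfrac{1}{3}\E[\D_{n:n}] \le \alpha^{(\D)}_n \le 5626 \cdot \alpha^{(\D)}_{c_s} \le 5626 \cdot \tfrac{5}{4} \E[\D_{c_s:c_s}]$. Rearranging gives $\E[\D_{c_s:c_s}] \ge \tfrac{4}{3 \cdot 5 \cdot 5626} \cdot \E[\D_{n:n}]$, and combined with the first step this finishes the proof with an explicit constant of order $1/28130$.

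The main wrinkle is not conceptual but arithmetic: $\sigma_s$ is pinned near the upper boundary permitted by \Cref{dfn:small noise mhr}, yet $6 \sigma_s \sqrt{\ln c_s}$ must remain strictly below $\E[\D_{c_s:c_s}]$ so that the Gaussian-maximum step does not erase the whole expectation. The exponent $1/5626 \approx 1/75^2$ is tuned so that $\sqrt{\ln n / \ln c_s} = \sqrt{5626}$ dominates the factor $\tfrac{37 \cdot 6}{9\sqrt 2}$ that appears in the noise correction; a coarser exponent would break the final inequality. The second step, by contrast, is essentially a plug-and-play combination of the two MHR quantile lemmas, with the exponent $5626$ flowing through directly as the constant $d$ in the Cai–Daskalakis bound.
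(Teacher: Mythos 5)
Your proposal is correct and follows essentially the same route as the paper's proof: lower bound $\Opt$ by the policy that runs $\Naive$ on the $c_s$ small-noise boxes, condition on a high-probability event bounding all small-noise magnitudes, extract $R \ge \Omega(\E[\D_{c_s:c_s}])$, and then lift to $\Omega(\E[\D_{n:n}])$ via the MHR quantile lemmas. The only differences are cosmetic (you set the noise threshold at $3\sigma_s\sqrt{\ln c_s}$ and bound the failure probability by $1/c_s$ rather than $1/\ln n$, and you inline the proof of \Cref{lem:order-stat-order-stat-bound} instead of citing it), and you obtain a somewhat sharper explicit constant.
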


\begin{proof}
The optimal policy is at least as good as the policy that picks the box with the largest $y_i$ among the small noise boxes. Consider the event that $|\epsilon_i| \le \frac{\sqrt{2}}{75} \sigma_s \sqrt{\ln n}$ for all small noise boxes $i$:

\begin{align*}
    \Pr\left[ \max_{i \in [2, c_s + 1]} |\epsilon_i| \le \frac{\sqrt{2}}{75} \sigma_s \sqrt{\ln n} \right] &= \Pr\left[ |\Norm(0, \sigma_s^2)| \le \frac{\sqrt{2}}{75} \sigma_s \sqrt{\ln n} \right]^{c_s} \\
    &= \left(2 \Phi \left( \frac{\sqrt{2 \ln n}}{75} \right) - 1\right)^{c_s} \\
    &\ge^{\text{(\Cref{lem:normal-bound})}} \left(2  \left( 1 - \frac{1}{\sqrt{2 \pi}} \frac{75}{\sqrt{2 \ln n}} \exp\left( -\frac{1}{2} \cdot \frac{2}{5625} \ln n \right) \right) - 1\right)^{c_s} \\
    &= \left(1 - \frac{75}{\sqrt{\pi \ln n}} n^{-1/5625}\right)^{n^{1/5626}} \\
    &\ge^{\text{(Bernoulli's inequality)}} 1 - \frac{75}{\sqrt{\pi \ln n}} n^{1/5626 - 1/5625} \\
    &\ge%^{(n \ge 10^{32285})} 
    1 - \frac{1}{\ln n}
\end{align*}

When this event occurs, the reward from picking a small noise box $i$ is at least $y_i - \frac{\sqrt{2}}{75} \sqrt{\ln n}\sigma_s$, and therefore the overall reward of picking from small noise boxes is at least $\max_{i=2,\dots,c_s+1} x_i - \frac{2\sqrt{2}}{75} \sqrt{\ln n}\sigma_s$. Noting that the noise and reward are independent random variables, we have:
\begin{align*}
    R_{\Opt}(\D, \std^*) &\ge \left( 1 - \frac{1}{\ln n} \right) \cdot \left( \E \left[ \max_{i \in [2, c_s + 1]} X_i \right] - \frac{2 \sqrt{2}}{75} \sqrt{\ln n} \cdot \sigma_s \right) \\
    &= \left( 1 - \frac{1}{\ln n} \right)\left( \E[\D_{c_s:c_s}] - \frac{2 \sqrt{2}}{75} \sqrt{\ln n} \cdot \frac{5}{\sqrt{2}} \frac{\E[\D_{c_s:c_s}]}{\sqrt{\ln n}} \right) \\
    &\ge \left( 1 - \frac{1}{\ln n} \right) \cdot \frac{1}{75} \E[\D_{c_s:c_s}]. \\
\end{align*}

\noindent The following lemma allows us to bound $\E[\D_{c_s:c_s}]$ as a function of $\E[\D_{n:n}] $:
\begin{lemma}
\label{lem:order-stat-order-stat-bound}
    For any MHR distribution $\D$ supported on $[0, \infty)$, for any $n \ge 4$ and $a \ge 1$, we have
    \[\E[\D_{n^a:n^a}] \le 4a \cdot \E[\D_{n:n}].\]
\end{lemma}

Continuing our derivation
\[
R_{\Opt_{\D}}(\D, \std^*) \ge^{\text{(\Cref{lem:order-stat-order-stat-bound})}}%; $n \geq 4^{5626}$)}} 
\left( 1 - \frac{1}{\ln n} \right) \frac{1}{75} \cdot \frac{1}{4 \cdot 5626}  \E[\D_{n:n}]\geq^{(n\geq e^{606})} 
\frac{1}{2\,000\,000} \E[\D_{n:n}]. \qedhere
\]

\end{proof}

Our next (and final) task is to upper bound the expected reward of $\Linear$. The main lemma for this stage is as follows.

\begin{lemma}
\label{lem:linear-upper}
For every MHR distribution $\D$, for all $n \geq n_0$, for some constant $n_0$, and for all $\gamma$, it holds that $R_{\Linear_\gamma}(\D, \std^*) \le 8 \E[\D]$.
\end{lemma}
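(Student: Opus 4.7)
The plan is to combine a convex-hull reduction with the truncation-based posterior analysis from the proof of \Cref{lem:naive-upper-s}. For any $\gamma$ and any realization $\observ$, the box $\Linear_\gamma$ selects must lie on the upper convex hull of the points $\{(\sigma_i, y_i)\}_{i \in [n]}$. Since $\std^*$ uses only three distinct noise values, this hull is contained in $\{1, i^*, j^*\}$, where $i^* := \argmax_{i \in [2, c_s+1]} Y_i$ and $j^* := \argmax_{j \in [c_s+2, n]} Y_j$. Taking the supremum over $\gamma$ therefore upper bounds $R_{\Linear_\gamma}(\D, \std^*)$ by the expected reward of a clairvoyant that knows $\D$ but must pick a box from this hull.

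Let $V := \alpha_{n^{1/10000}}^{(\D_{n - c_s : n - c_s})}$, and let $\Event$ be the intersection of
\begin{itemize}[leftmargin=*]
\item $\Event^*$: $\max_i X_i \le V$ (holds w.p. $1 - O(n^{-1/10000})$ by the definition of $V$);
\item $\Event^X_s$: $\max_{i \in [2, c_s+1]} X_i \le 2\,\E[\D_{c_s:c_s}]$ (\Cref{lem:lower bound on max passing expectation});
\item $\Event^\epsilon_s$: $\max_{i \in [2, c_s+1]} |\epsilon_i| \le \tfrac{\sqrt{2}}{75}\sigma_s\sqrt{\ln n}$, as in \Cref{lem:optimal-lower};
\item $\Event^+_b$: some large-noise box has $\epsilon_j \ge \sigma_b\sqrt{\ln n}$ (max-of-Gaussians lower tail);
\item $\Event^-_b$: $\max_{j \in [c_s+2, n]} \epsilon_j \le \tfrac{17}{6}\sigma_b\sqrt{\ln n}$ (upper tail via \Cref{lem:normal-bound} and a union bound).
\end{itemize}
A union bound gives $\Pr[\overline{\Event}] = O(1/\ln n)$.

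On $\Event$, the heart of the argument is to verify that $(\sigma_s, y_{i^*})$ lies strictly below the chord joining $(0, y_1)$ and $(\sigma_b, y_{j^*})$, so that $i^*$ is off the upper hull. Using $\sigma_s = \tfrac{37}{9\sqrt{2}}\,\E[\D_{c_s:c_s}]/\sqrt{\ln n}$,
\[
y_{i^*} \le 2\,\E[\D_{c_s:c_s}] + \tfrac{\sqrt{2}}{75}\sigma_s\sqrt{\ln n} \le 2.06\,\E[\D_{c_s:c_s}],
\quad
\tfrac{\sigma_s}{\sigma_b}\,y_{j^*} \ge \tfrac{\sigma_s}{\sigma_b}\cdot\sigma_b\sqrt{\ln n} = \tfrac{37}{9\sqrt{2}}\,\E[\D_{c_s:c_s}] \ge 2.9\,\E[\D_{c_s:c_s}],
\]
and $y_1 = X_1 \ge 0$ closes the inequality. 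Hence the clairvoyant must choose from $\{1, j^*\}$ on $\Event$. Following the $\Naive$ template, conditioning on $\Event^*$ truncates each $X_i$ to $[0, V]$, and \Cref{lem:large-box-reward-bounded-D} applied with $\sigma = \sigma_b = 6V\sqrt{\ln n}$ yields $\E[X_j \mid Y_j = y, \Event^*] \le 2\,\E[X \mid X \le V] \le 4\,\E[\D]$ for all $y \le 18V\ln n$; the last step uses $\Pr[X \le V] \ge 1/2$ for $n$ large. On $\Event$, $y_{j^*} \le V + \tfrac{17}{6}\sigma_b\sqrt{\ln n} \le 18V\ln n$ for $n \ge n_0$, so the clairvoyant's reward on $\Event \cap \Event^*$ is at most $\max(X_1, 4\,\E[\D]) \le X_1 + 4\,\E[\D]$, contributing at most $\E[X_1 \mid \Event^*] + 4\,\E[\D] \le 6\,\E[\D]$.

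For $\overline{\Event}$, I crudely bound $X_{I_\gamma} \le \D_{n:n}$; combined with $\Pr[\overline{\Event}] = O(1/\ln n)$ and the MHR bound $\E[\D_{n:n}] \le (\ln n + 1)\,\E[\D]$ (\Cref{lem:order-stat-mean-bound}), this contributes $O(\E[\D])$, which sums with the previous $6\,\E[\D]$ to give $R_{\Linear_\gamma}(\D, \std^*) \le 8\,\E[\D]$ for $n \ge n_0$. The main obstacle is the convex-hull step: it needs simultaneous tight control of $y_{i^*}$ from above (via MHR concentration of $\max X_i$ over the $c_s$ small-noise boxes and the small-noise Gaussian bound) and of $y_{j^*}$ from below (via a max-of-$\Norm$ lower tail across $n - c_s - 1$ large-noise boxes), with the calibration of $\sigma_s, \sigma_b, c_s$ making the chord inequality strict with a comfortable margin.
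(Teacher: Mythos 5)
Your convex-hull reduction for the good event is a nice, genuinely different route from the paper's argument: instead of splitting into $\LinearFixed_c$ with $c \ge \theta^*$ versus $c < \theta^*$ (the paper's \Cref{lem:c-large} and \Cref{lem:c-small}), you observe that any linear policy must select a vertex of the upper hull of $\{(\sigma_i, y_i)\}$, and your calibration ($y_{i^*} \le 2.06\,\E[\D_{c_s:c_s}]$ versus $\tfrac{\sigma_s}{\sigma_b} y_{j^*} \ge 2.9\,\E[\D_{c_s:c_s}]$, with $y_1 \ge 0$ and $\sigma_s < \sigma_b$) does correctly knock the small-noise boxes off the hull on your event $\Event$. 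That part of the argument, together with the application of \Cref{lem:large-box-reward-bounded-D} to the truncated large-noise boxes, is sound (modulo the harmless loss of a factor $2$ where $\E[X \mid X \le V] \le \E[X]$ would do).

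The genuine gap is in the bad-event accounting. You bound the contribution of $\overline{\Event}$ by $\E[\D_{n:n}] \cdot \Pr[\overline{\Event}]$, i.e.\ you implicitly use $\E\bigl[\max_i X_i \cdot \Ind[\overline{\Event}]\bigr] \le \E[\max_i X_i] \cdot \Pr[\overline{\Event}]$. This inequality is false here, because $\overline{\Event}$ contains $\overline{\Event^*}$ and $\overline{\Event^X_s}$, which are precisely the events that some $X_i$ is large; these are positively correlated with $\max_i X_i$, so the inequality goes the wrong way, and the conditional expectation $\E[\max_i X_i \mid \overline{\Event^*}]$ can far exceed $\E[\D_{n:n}]$. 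This is exactly why the paper separates the two kinds of failures: the noise events are independent of the $X_i$'s and are handled conditioned on $\Event^*$ (so $\max_i X_i$ stays capped), while the contribution of $\overline{\Event^*}$ is controlled via the MHR tail-contribution bound (\Cref{lem:very specific lemma on 6 log n}, after decomposing $\Event^*$ into its small-box and large-box parts), not by multiplying $\E[\D_{n:n}]$ by a probability. Your sketch can be repaired by replacing the cap $2\,\E[\D_{c_s:c_s}]$ in $\Event^X_s$ with a quantile cap (as the paper does, then using \Cref{lem:lower bound on max passing expectation} to compare it to $2\,\E[\D_{c_s:c_s}]$ for the chord inequality) and invoking \Cref{lem:very specific lemma on 6 log n} for the $X$-tail events, but as written the step fails. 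A secondary issue: the lemma claims the explicit constant $8\,\E[\D]$, and your ``$O(\E[\D])$'' for the bad events (three failures of order $1/\ln n$ against $\E[\D_{n:n}] \le (\ln n + 1)\E[\D]$ from \Cref{lem:order-stat-mean-bound}) does not by itself deliver that constant together with your $6\,\E[\D]$ good-event bound.
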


The proof structure is similar to~\Cref{lem:naive-upper-s}. We first prove (\Cref{lem:linear-upper-regime}) that conditioned on an event $\Event^*$, $\Linear_\gamma$'s expected reward is upper bounded, while the contribution to the reward of other events is negligible, even if $\Linear_\gamma$ performs as well as taking $\max_i X_i$. Here, $\Event^*$ is the event that \emph{$X_i \le \alpha_{n^{1/10000}}^{(\D_{c_s:c_s})}$ for all small noise boxes $i$, and $X_j \le \alpha_{n^{1/10000}}^{(\D_{n - c_s:n - c_s})}$ for all remaining boxes $j$}.

\begin{lemma}
\label{lem:linear-upper-regime}
For every MHR distribution $\D$, for all $n \geq n_0$, for some constant $n_0$, and for all $\gamma$, the expected reward of a policy $\Linear_\gamma$ conditioned on the event $\Event^*$ is at most $7 \E[\D]$.  
\end{lemma}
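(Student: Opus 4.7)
My plan is to adapt the $\Naive$ analysis (\Cref{lem:naive-upper-s}) and show that the three-box-type structure of $\std^*$ lets me rule out the intermediate ``small noise box chosen'' regime entirely. The key observation is that the specific choice $\sigma_s = \tfrac{37}{9\sqrt{2}}\E[\D_{c_s:c_s}]/\sqrt{\ln n}$ is calibrated so that, on a high-probability event, no value of $c^* = \gamma(\std^*,\observ)$ makes a small noise box's discounted observation $y_i - c^* \sigma_s$ exceed both the exact box's value $y_1$ and the top large noise box's $y_j - c^* \sigma_b$. Hence $\LinearFixed_{c^*}$ always picks either the exact box or a large noise box, reducing to two cases.

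First, I define the high-probability noise events: $\Event_1$ bounds $\epsilon_j$ from above by $O(\sigma_b\sqrt{\ln n})$ on every large noise box (analog of \Cref{lemma:large-box-reward}); $\Event_2$ asserts that at least one large noise box has $\epsilon_j$ exceeding a prescribed threshold of order $\sigma_b\sqrt{\ln n}$ (analog of \Cref{lem:large-noise-eps-bound-s}); and $\Event_3$ bounds $|\epsilon_i|$ by $O(\sigma_s\sqrt{\ln n})$ on every small noise box. Each holds with probability $1 - o(1/n)$ by \Cref{lem:normal-bound}, and on the complement I bound the reward crudely using the $\Event^*$ upper bound $\alpha_{n^{1/10000}}^{(\D_{n-c_s:n-c_s})}$ combined with \Cref{lem: cai daskalakis alpha bound}, \Cref{lem:order-stat-vs-quantile}, and \Cref{lem:order-stat-order-stat-bound}, times the tiny failure probability, to get a negligible $O(\E[\D])$ contribution.

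On the main event $\Event_1 \cap \Event_2 \cap \Event_3 \cap \Event^*$, the condition that a small noise box $i$ simultaneously beats the exact box and every large noise box rearranges to $\sigma_b y_i - \sigma_s y_{j^\star} > x_1(\sigma_b - \sigma_s)$, where $j^\star$ is the large noise box with the largest observation. Substituting $y_{j^\star} \gtrsim \sigma_b\sqrt{\ln n}$ (from $\Event_2$), $y_i \le \alpha_{n^{1/10000}}^{(\D_{c_s:c_s})} + O(\sigma_s\sqrt{\ln n})$ (from $\Event^* \cap \Event_3$), and controlling $\alpha_{n^{1/10000}}^{(\D_{c_s:c_s})}$ via the MHR quantile bounds (\Cref{lem: cai daskalakis alpha bound} and \Cref{lem:order-stat-vs-quantile}) yields a contradiction: the constant $\tfrac{37}{9\sqrt{2}}$ in $\sigma_s$ is chosen precisely so that $\sigma_s\sqrt{\ln n}$ dominates the bound on $y_i$, forcing the rearranged inequality to fail. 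This rules out the small noise case.

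On the two remaining cases, the analysis mirrors the $\Naive$ proof. For a large noise box $j$, \Cref{lem:large-box-reward-bounded-D} applied to $\overline{X}_j = X_j \mid \Event^*$ with $V = \alpha_{n^{1/10000}}^{(\D_{n-c_s:n-c_s})}$ and $\sigma = \sigma_b = 6V\sqrt{\ln n}$ gives $\E[X_j \mid Y_j = y_j, \Event^*] \le 2\E[\D]$, contributing at most $2\E[\D]$. For the exact box, $\sigma_1 = 0$ gives $Y_1 = X_1$ and, since $X_1$ is independent of all other $X_i$'s and of the noise, $\E[X_1 \mid \Event^*] = \E[X_1 \mid X_1 \le \alpha_{n^{1/10000}}^{(\D_{n-c_s:n-c_s})}] \le \E[\D]$. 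Summing yields $\le 3\E[\D]$ on the main event, plus the negligible failure contribution, comfortably under $7\E[\D]$. The main obstacle is the ``no small noise box chosen'' claim: the rearranged inequality requires tight use of the MHR quantile bounds to control $\alpha_{n^{1/10000}}^{(\D_{c_s:c_s})}$ in terms of $\E[\D_{c_s:c_s}]$, with the constants in $c_s = n^{1/5626}$ and in $\sigma_s$ interacting precisely to close the inequality.
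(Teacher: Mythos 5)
Your proposal is correct and is essentially the paper's own proof: the same three noise events plus $\Event^*$, the same calibration of $\sigma_s$ via $\alpha^{(\D_{c_s:c_s})}_{n^{1/10000}} \le 2\E[\D_{c_s:c_s}]$ (\Cref{lem:lower bound on max passing expectation}, via \Cref{lem: cai daskalakis alpha bound} and \Cref{lem:order-stat-vs-quantile}), the same application of \Cref{lem:large-box-reward-bounded-D} with $V = \alpha^{(\D_{n-c_s:n-c_s})}_{n^{1/10000}}$ and $\sigma_b = 6V\sqrt{\ln n}$ for the large noise boxes, and the same $\E[X_1 \mid \Event^*] \le \E[\D]$ bound for the exact box; your single rearranged inequality $\sigma_b y_i - \sigma_s y_{j^\star} > x_1(\sigma_b - \sigma_s)$ is a compact repackaging of the paper's two-case analysis of $\LinearFixed_c$ around the threshold $\theta^* = \sqrt{\ln n / 2}$ (Lemmas~\ref{lem:c-large} and~\ref{lem:c-small}), and it does close with the stated constants since on the main event $y_i \le \theta^*\sigma_s$ while $y_{j^\star} \ge (\theta^*+1)\sigma_b$.

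One quantitative slip worth flagging: your claim that each noise event holds with probability $1 - o(1/n)$ is false for the small-noise event (your $\Event_3$, the paper's $\Event_1$); its failure probability is of order $n^{1/5626 - 1/5476}$ up to polylogarithmic factors, and the paper only establishes $1 - \tfrac{1}{\ln n}$ (\Cref{lem:eps-small-bound}). The argument still closes because the conditional reward cap under $\Event^*$ is only $O(\ln n)\cdot\E[\D]$ (or, as the paper does it, $\E[\max_i X_i \mid \Event^*] \le \E[\D_{n:n}] \le (\ln n + 1)\E[\D]$ by \Cref{lem:order-stat-mean-bound}), so the complement contributes at most about $4\E[\D]$ — but that term is the dominant one, and it is precisely what brings the total to $7\E[\D]$, not ``comfortably under'' it.
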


To prove~\Cref{lem:linear-upper-regime}, we first consider a slightly different family of policies. Let $\LinearFixed_c$ be the policy that chooses the box with the largest $y_i - c \sigma_i$, where $c$ is a constant independent of $\observ$ and $\std$. We show that with high probability, all $\LinearFixed$ policies make poor choices. We can use this fact to get bounds on the performance of $\Linear_\gamma$ (conditioned on certain events), since, fixing $\observ$ and $\std$, $\Linear_\gamma$ is only as good as the best $\LinearFixed$ policy. We consider two cases on $c$: $c > \theta^*$ and $c \le \theta^*$, where $\theta^* = \sqrt{\frac{\ln n}{2}}$.

To make the presentation cleaner, we define the following events.

\begin{definition}
\label{dfn:events}
    Let
\begin{itemize}
    \item $\Event_1$ be the event of $\max_{i \in [2, c_s + 1]} \epsilon_i \le \frac{\theta^* \sigma_s}{37}$.
    \item $\Event_2$ be the event of $\max_{i \in [c_s + 2, n]} \epsilon_i - \theta^* \sigma_b \ge \sigma_b$.
    \item $\Event_2'$ be the event of $\max_{i \in [c_s + 2, n]} Y_i - c \sigma_b \ge \sigma_b$ for all $c < \theta^*$.
    \item $\Event_3$ be the event of $\max_{i \in [c_s + 2, n]} \epsilon_i \le 12 \alpha_{n^{1/10000}}^{(\D_{n-c_s:n-c_s})} \ln n$.
    \item $\Event_3'$ be the event of $\max_{i \in [c_s + 2, n]} Y_i \le 18 \alpha_{n^{1/10000}}^{(\D_{n-c_s:n-c_s})} \ln n$.
\end{itemize}
Recall that $\Event^*$ is the event that $X_i \le \alpha_{n^{1/10000}}^{(\D_{c_s:c_s})}$ for all small noise boxes $i$, and $X_j \le \alpha_{n^{1/10000}}^{(\D_{n - c_s:n - c_s})}$ for all remaining boxes $j$.
\end{definition}

We state some technical lemmata. \Cref{lem:c-large} and~\Cref{lem:c-small} say that if various combinations of the above events occur, $\LinearFixed_c$ policies make bad choices.

\begin{lemma}\label{lem:eps-small-bound}
For all $n \geq n_0$, for some constant $n_0$, $\Pr[\Event_1] \geq 1 - \frac{1}{\ln n}$.
\end{lemma}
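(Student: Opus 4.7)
} The statement is a pure Gaussian tail/union bound computation, so the plan is to (i) exploit independence of the noise on the small noise boxes, (ii) apply Bernoulli's inequality to reduce to a single-box tail bound, and (iii) plug in the closed form Gaussian tail estimate from \Cref{lem:normal-bound}. The key quantitative fact we need is that the threshold $\theta^{*}\sigma_{s}/37$ is a sufficiently large multiple of the standard deviation $\sigma_{s}$: namely, it equals $(\theta^{*}/37)\sigma_{s} = \sqrt{\ln n/(2\cdot 37^{2})}\cdot\sigma_{s}$, which gives a subpolynomial-in-$n$ tail, and this must beat the $c_{s} = n^{1/5626}$ union bound.

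First, I would write, using independence of $\epsilon_{2},\dots,\epsilon_{c_{s}+1}\sim\Norm(0,\sigma_{s}^{2})$,
\[
\Pr[\Event_{1}] \;=\; \Phi\!\left(\tfrac{\theta^{*}}{37}\right)^{c_{s}} \;\ge\; 1 - c_{s}\cdot\left(1-\Phi\!\left(\tfrac{\theta^{*}}{37}\right)\right),
\]
by Bernoulli's inequality. By \Cref{lem:normal-bound},
\[
1-\Phi\!\left(\tfrac{\theta^{*}}{37}\right) \;\le\; \frac{\phi(\theta^{*}/37)}{\theta^{*}/37} \;=\; \frac{37}{\theta^{*}\sqrt{2\pi}}\exp\!\left(-\tfrac{(\theta^{*})^{2}}{2\cdot 37^{2}}\right).
\]
Substituting $\theta^{*}=\sqrt{\ln n/2}$ gives $(\theta^{*})^{2}/(2\cdot 37^{2}) = \ln n/5476$, and $37/\theta^{*} = 37\sqrt{2/\ln n}$, so
\[
1-\Phi\!\left(\tfrac{\theta^{*}}{37}\right) \;\le\; \frac{37\sqrt{2}}{\sqrt{2\pi\ln n}}\,n^{-1/5476}.
\]

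Combining with $c_{s}=n^{1/5626}$,
\[
c_{s}\cdot\left(1-\Phi\!\left(\tfrac{\theta^{*}}{37}\right)\right) \;\le\; \frac{37\sqrt{2}}{\sqrt{2\pi\ln n}}\,n^{1/5626 - 1/5476}.
\]
Because $1/5626 < 1/5476$, the exponent $\delta := 1/5476-1/5626 > 0$ is strictly positive, so the right-hand side decays like $n^{-\delta}/\sqrt{\ln n}$, which is eventually smaller than $1/\ln n$. Choosing $n_{0}$ large enough so that this holds yields $\Pr[\Event_{1}]\ge 1-1/\ln n$.

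The only real obstacle is numerical: one has to verify that the constant $37$ in the definition of the threshold is carefully tuned so that the per-box tail $n^{-1/5476}$ dominates the union bound factor $n^{1/5626}$. This tuning mirrors the construction choices of $c_{s}$ and $\sigma_{s}$ and is essentially the reason $5626$ appears in $\goodMHR{\D}{n,1/5626}$ in \Cref{thr:master-lower-bound}. Beyond that, every step is a routine invocation of Bernoulli's inequality and the standard Mills' ratio bound recalled in \Cref{lem:normal-bound}.
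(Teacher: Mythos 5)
Your proposal is correct and follows essentially the same route as the paper: both reduce $\Pr[\Event_1]$ to $\Phi(\theta^*/37)^{c_s}$, invoke the Mills-ratio bound from \Cref{lem:normal-bound} to get a per-box tail of order $n^{-1/5476}/\sqrt{\ln n}$, and apply Bernoulli's inequality to absorb the $c_s = n^{1/5626}$ union-bound factor, concluding because $1/5626 < 1/5476$. The only cosmetic difference is the order of operations (you apply Bernoulli's inequality before the Gaussian tail bound, the paper after), which yields identical intermediate quantities.
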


\begin{lemma}\label{lem:eps-large-bound}
For all $n \geq n_0$, for some constant $n_0$, $\Pr[\Event_2] \geq 1 - \frac{1}{\ln n}$.
\end{lemma}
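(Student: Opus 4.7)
The plan is to reduce $\Event_2$ to a tail statement about the maximum of i.i.d.\ Gaussians. Rearranging the definition, $\Event_2$ holds iff $\max_{i \in [c_s+2,n]} \epsilon_i \ge (\theta^*+1)\sigma_b$. Since the $\epsilon_i$ are i.i.d.\ $\Norm(0,\sigma_b^2)$, the complement event occurs exactly when all $n - c_s - 1$ Gaussians fall below $(\theta^*+1)\sigma_b$, so
\[
\Pr[\overline{\Event_2}] = \Phi(\theta^*+1)^{\,n-c_s-1}.
\]

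The next step is to upper bound $\Phi(\theta^*+1)$ using the Gordon-type bound from \Cref{lem:normal-bound}, namely $\Phi(t) \le 1 - \tfrac{t\,\phi(t)}{t^2+1}$, applied at $t = \theta^* + 1 = \sqrt{\ln n / 2} + 1$. Since
\[
\tfrac{t^2}{2} = \tfrac{\ln n}{4} + \sqrt{\tfrac{\ln n}{2}} + \tfrac{1}{2},
\]
we get $\phi(t) = \tfrac{1}{\sqrt{2\pi}} e^{-t^2/2} \ge c_1 \cdot n^{-1/4} \cdot e^{-\sqrt{\ln n /2}}$ for an absolute constant $c_1$, and $t/(t^2+1) = \Theta(1/\sqrt{\ln n})$. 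Combining, there is a constant $\kappa > 0$ such that for all $n$ large enough,
\[
\Phi(\theta^*+1) \le 1 - \kappa \cdot \frac{n^{-1/4}}{\sqrt{\ln n}} \cdot e^{-\sqrt{\ln n / 2}}.
\]

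Plugging this into $(1-x)^m \le e^{-mx}$ and using that $c_s = n^{1/5626}$ is negligible compared to $n$, I conclude
\[
\Pr[\overline{\Event_2}] \le \exp\!\left(- \tfrac{\kappa}{2} \cdot \frac{n^{3/4}}{\sqrt{\ln n}} \cdot e^{-\sqrt{\ln n / 2}}\right).
\]
The exponent grows faster than any polylogarithm in $n$ (the dominant $n^{3/4}$ factor crushes the subexponential $e^{-\sqrt{\ln n/2}}$ and the $\sqrt{\ln n}$ denominator), so the bound is at most $1/\ln n$ for all $n \ge n_0$. There is no genuine obstacle here; the only care required is to verify that the ``$+1$'' shift in the threshold $(\theta^*+1)\sigma_b$ changes the Mills-ratio estimate by only a constant factor, and that the $n^{3/4}$ factor dominates the $e^{-\sqrt{\ln n/2}}$ correction, both of which are routine. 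This mirrors the structure of the bound for $\Event_1$ in \Cref{lem:eps-small-bound}, except that here we want at least one large noise box to have a big positive $\epsilon_i$ rather than all small noise boxes to have moderate $|\epsilon_i|$.
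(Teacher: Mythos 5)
Your proposal is correct and follows essentially the same route as the paper: both reduce $\Pr[\overline{\Event_2}]$ to $\Phi(\theta^*+1)^{n-c_s-1}$ and then apply the Gordon-type bound from \Cref{lem:normal-bound}. The only cosmetic difference is in the cleanup: the paper first replaces $\theta^*+1$ with the slightly larger $\sqrt{2}\,\theta^* = \sqrt{\ln n}$ (valid once $n$ is large) so that $e^{-t^2/2}=n^{-1/2}$ exactly, and then uses $(1-x)^m \le \frac{1}{1+mx}$, whereas you keep the $+1$ shift explicitly (tracking the resulting $n^{-1/4}e^{-\sqrt{\ln n/2}}$ factor) and use $(1-x)^m \le e^{-mx}$. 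Both give a bound on $\Pr[\overline{\Event_2}]$ that decays faster than any polylogarithm, so both are easily below $1/\ln n$ for $n$ large enough.
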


\begin{lemma}
\label{lem:large-box-reward-linear}
For all $n \geq n_0$, for some constant $n_0$, for any large noise box $i$, \[ \Pr\left[ Y_i \le 18 \alpha_{n^{1/10000}}^{(\D_{n-c_s:n-c_s})} \ln n \right] \geq 1 - \frac{1}{n^2}.\]
% Moreover, $\E[X_i \mid \mathcal{E}^* , \epsilon_i \le 12 \alpha_{n^{1/10000}}^{(\D_{n-c_s:n-c_s})} \ln n ] \le 2 \E[\D]$.
\end{lemma}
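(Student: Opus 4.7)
The plan is to decompose $Y_i = X_i + \epsilon_i$ and control the two summands separately, then finish with a union bound. Writing $A := \alpha_{n^{1/10000}}^{(\D_{n-c_s:n-c_s})}$ so that $\sigma_b = 6 A \sqrt{\ln n}$, it suffices to establish
\[
\Pr[X_i > 6 A \ln n] \leq \tfrac{1}{2n^2}
\quad\text{and}\quad
\Pr[\epsilon_i > 12 A \ln n] \leq \tfrac{1}{2n^2},
\]
since then $\Pr[Y_i > 18 A \ln n] \leq 1/n^2$ by the union bound.

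The Gaussian tail is routine. Since $\sigma_b = 6 A \sqrt{\ln n}$, we have $\Pr[\epsilon_i > 12 A \ln n] = \Pr[\Norm(0,1) > 2\sqrt{\ln n}]$. Applying the upper tail bound of \Cref{lem:normal-bound} with $t = 2\sqrt{\ln n}$ (so that $\phi(t) = n^{-2}/\sqrt{2\pi}$) yields an upper bound of $\phi(t)/t = 1/(2n^2\sqrt{2\pi\ln n})$, which is at most $\tfrac{1}{2n^2}$ for $n \ge n_0$.

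The bound on $X_i$ is the heart of the argument. The key step is to convert $A$, a high quantile of the order statistic $\D_{n-c_s:n-c_s}$, into a quantile of $\D$ itself so that we can invoke the MHR concentration bound of \Cref{lem: cai daskalakis alpha bound}. By definition of $A$, $F(A)^{n-c_s} \geq 1 - n^{-1/10000}$; taking logarithms and using the elementary inequality $\ln F(A) \leq F(A) - 1$ yields
\[
1 - F(A) \;\leq\; \frac{-\ln(1 - n^{-1/10000})}{n - c_s}.
\]
For a fixed $n_0$ the numerator is bounded by a constant $C := -\ln(1 - n_0^{-1/10000})$ whenever $n \ge n_0$, and since $c_s = n^{1/5626} \leq n/2$ for $n_0$ large, we obtain $1 - F(A) \leq 2C/n$. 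In other words, $A \ge \alpha_m^{(\D)}$ for $m := n/(2C)$. Applying \Cref{lem: cai daskalakis alpha bound} with $d = 3$ gives $3A \ge 3\alpha_m^{(\D)} \ge \alpha_{m^3}^{(\D)}$, so $\Pr[X_i > 3A] \leq 1/m^3 \le 8C^3/n^3 \leq \tfrac{1}{2n^2}$ for $n \ge n_0$. Since $3A \leq 6 A \ln n$ for $n \geq 3$, this gives the desired bound on $X_i$.

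The main obstacle is precisely the quantile conversion above: without first upgrading $A$ to roughly a $(1 - O(1/n))$-quantile of $\D$, a direct application of \Cref{lem: cai daskalakis alpha bound} from the $n^{1/10000}$-quantile would only yield $\Pr[X_i > dA] \leq n^{-d/10000}$, forcing $d$ to scale with $\ln n$ and blowing past the target threshold $6A \ln n$. Once the quantile conversion is in place, the rest is a standard Gaussian tail calculation and a union bound, as laid out above.
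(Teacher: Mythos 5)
Your proof is correct, but it tackles a stronger claim than the paper actually proves under this lemma label. The paper's own proof of \Cref{lem:large-box-reward-linear} only establishes the Gaussian tail bound $\Pr[\epsilon_i \le 12 \alpha_{n^{1/10000}}^{(\D_{n-c_s:n-c_s})} \ln n] \ge 1 - 1/n^2$ (and in fact its displayed calculation still carries the $\sigma_b = 6\beta^{(\D_{n:n})}_{n^2}\sqrt{\ln n}$ from Section~4.1 rather than the $\sigma_b = 6\alpha_{n^{1/10000}}^{(\D_{n-c_s:n-c_s})}\sqrt{\ln n}$ used in Section~4.2); correspondingly, the lemma is invoked in the proof of \Cref{lem:linear-upper-regime} only to bound $\Pr[\Event_3]$, which concerns $\epsilon_i$, not $Y_i$. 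The transition to $Y_i$ happens there as a deterministic implication $\Event^* \cap \Event_3 \subseteq \Event_3'$, with the event $\Event^*$ (capping $X_i$) handled probabilistically later in \Cref{lem:linear-upper}. So the paper never needs, and never proves, the unconditional statement about $Y_i$; the phrase ``$Y_i \le 18\ldots$'' in the lemma statement appears to be a transcription of $\Event_3'$ where $\Event_3$ was intended.

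What you proved is nonetheless a genuine strengthening: an unconditional tail bound on $Y_i$ obtained by union-bounding the tails of $X_i$ and $\epsilon_i$. Your $\epsilon_i$ tail bound is exactly the paper's computation. The new content is the $X_i$ tail: you correctly observe that a direct application of \Cref{lem: cai daskalakis alpha bound} from the $(1 - n^{-1/10000})$-quantile of $\D_{n-c_s:n-c_s}$ is far too weak to reach a $1/n^2$ tail, and you first ``upgrade'' $A$ to roughly a $(1 - O(1/n))$-quantile of $\D$ itself by unwinding the order-statistic CDF $F(A)^{n-c_s} \ge 1 - n^{-1/10000}$ via $\ln F(A) \le F(A) - 1$. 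That step is sound, and the subsequent application of \Cref{lem: cai daskalakis alpha bound} with $d=3$ (using that $\D$ is MHR, which is in force for \Cref{thr:master-lower-bound}) gives $\Pr[X_i > 3A] \le O(1/n^3)$ as needed. The trade-off is that you pay with an extra constant $C = -\ln(1 - n_0^{-1/10000})$ absorbed into $n_0$, whereas the paper avoids the $X_i$ tail altogether here and controls it once, globally, through $\Pr[\overline{\Event^*}]$ in \Cref{lem:linear-upper}; your version makes the lemma self-contained at the cost of re-deriving a bound the paper gets elsewhere.
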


\begin{lemma}
\label{lem:c-large}
If $\Event^* \cap \Event_1$ occurs, for all $c \geq \theta^*$, $\LinearFixed_c$ does not choose a small noise box.
\end{lemma}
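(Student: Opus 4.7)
The plan is to show that under $\Event^* \cap \Event_1$, for any $c \geq \theta^*$, the exact box (box $1$) strictly dominates every small noise box in $\LinearFixed_c$'s scoring rule: $y_i - c\sigma_s < 0 \leq x_1 = y_1 - c\sigma_1$ for every small noise box $i$, which immediately rules out any small noise box as the $\LinearFixed_c$ pick.

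First I would record the identity
\[ \theta^* \sigma_s = \sqrt{\tfrac{\ln n}{2}} \cdot \frac{37}{9\sqrt{2}} \cdot \frac{\E[\D_{c_s:c_s}]}{\sqrt{\ln n}} = \frac{37}{18}\,\E[\D_{c_s:c_s}]. \]
Under the conditioning, for a small noise box $i$: $\Event_1$ gives $\epsilon_i \leq \theta^*\sigma_s/37 = \tfrac{1}{18}\E[\D_{c_s:c_s}]$, $\Event^*$ gives $x_i \leq \alpha^{(\D_{c_s:c_s})}_{n^{1/10000}}$, and $c \geq \theta^*$ gives $c\sigma_s \geq \tfrac{37}{18}\E[\D_{c_s:c_s}]$. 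Combining,
\[ y_i - c\sigma_s \leq \alpha^{(\D_{c_s:c_s})}_{n^{1/10000}} + \tfrac{1}{18}\E[\D_{c_s:c_s}] - \tfrac{37}{18}\E[\D_{c_s:c_s}] = \alpha^{(\D_{c_s:c_s})}_{n^{1/10000}} - 2\E[\D_{c_s:c_s}]. \]

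The main obstacle will be to show $\alpha^{(\D_{c_s:c_s})}_{n^{1/10000}} < 2\E[\D_{c_s:c_s}]$, which I plan to establish by chaining three bounds that only barely close given the tuned constant $c_s = n^{1/5626}$. Step~(a): a quick concavity argument on $f(y) = (1-y)^{1/k} - (1 - y/k)$ yields $(1-y)^{1/k} \leq 1 - y/k$ for $y \in [0,1]$, $k \geq 1$; with $y = 1/n^{1/10000}$ and $k = c_s$, this gives $(1 - 1/n^{1/10000})^{1/c_s} \leq 1 - 1/(n^{1/10000} c_s)$, and since the CDF of $\D_{c_s:c_s}$ is $F^{c_s}$, monotonicity of $F^{-1}$ translates this into $\alpha^{(\D_{c_s:c_s})}_{n^{1/10000}} \leq \alpha^{(\D)}_{n^{1/10000}c_s}$. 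Step~(b): since $n^{1/10000} c_s = c_s^{1 + 5626/10000}$, \Cref{lem: cai daskalakis alpha bound} applied to the MHR distribution $\D$ with $m = c_s$ and $d = 1 + 5626/10000$ gives $\alpha^{(\D)}_{n^{1/10000}c_s} \leq (1 + 5626/10000)\,\alpha^{(\D)}_{c_s}$. Step~(c): for $n_0$ large enough that $c_s \geq 4$, \Cref{lem:order-stat-vs-quantile} gives $\alpha^{(\D)}_{c_s} \leq \tfrac{5}{4}\E[\D_{c_s:c_s}]$. Multiplying,
\[ \alpha^{(\D_{c_s:c_s})}_{n^{1/10000}} \leq \frac{15626}{10000} \cdot \frac{5}{4}\,\E[\D_{c_s:c_s}] = \frac{78130}{40000}\,\E[\D_{c_s:c_s}] < 2\,\E[\D_{c_s:c_s}]. \]

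Combining the two displays yields $y_i - c\sigma_s < 0 \leq x_1 = y_1 - c\sigma_1$, so the exact box strictly beats every small noise box in the $\LinearFixed_c$ ranking, proving that $\LinearFixed_c$ cannot select a small noise box under $\Event^* \cap \Event_1$ whenever $c \geq \theta^*$. The delicate point is precisely the quantile chain: the constant only dips strictly below $2$ because $c_s = n^{1/5626}$ is tuned so that $(1 + 5626/10000) \cdot 5/4 < 2$, matching the slack $2\,\E[\D_{c_s:c_s}]$ inherited from the identity $\theta^*\sigma_s = \tfrac{37}{18}\E[\D_{c_s:c_s}]$.
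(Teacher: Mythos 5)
Your proposal is correct, and its skeleton is the same as the paper's: under $\Event^* \cap \Event_1$, the exact box's score $y_1 - c\sigma^*_1 = x_1 \ge 0$ strictly dominates every small noise box's score, because $y_i \le \alpha^{(\D_{c_s:c_s})}_{n^{1/10000}} + \theta^*\sigma_s/37$ while $c\sigma_s \ge \theta^*\sigma_s = \tfrac{37}{18}\E[\D_{c_s:c_s}]$. Where you genuinely diverge is the key sub-step bounding $\alpha^{(\D_{c_s:c_s})}_{n^{1/10000}}$ by (just under) $2\E[\D_{c_s:c_s}]$. The paper gets this by invoking its concentration result, \Cref{lem:lower bound on max passing expectation}, applied to $\D_{c_s:c_s}$: since $1/c_s^{3/5} < 1/n^{1/10000}$, the event $\{\D_{c_s:c_s} < 2\E[\D_{c_s:c_s}]\}$ has probability exceeding $1 - n^{-1/10000}$, which by \Cref{dfn: alpha_m} forces $\alpha^{(\D_{c_s:c_s})}_{n^{1/10000}} \le 2\E[\D_{c_s:c_s}]$. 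You instead re-derive the bound directly: the inequality $(1-y)^{1/k} \le 1 - y/k$ converts the order-statistic quantile into a quantile of $\D$ at level $1 - 1/(n^{1/10000}c_s)$, and then \Cref{lem: cai daskalakis alpha bound} with $d = 1.5626$ together with \Cref{lem:order-stat-vs-quantile} yield the factor $\tfrac{15626}{10000}\cdot\tfrac{5}{4} < 2$. Since \Cref{lem:lower bound on max passing expectation} is itself proved from exactly those two lemmas (with exponent $8/5$ in place of $1.5626$), your chain is essentially an inlined, slightly sharpened version of the packaged argument; it buys you a strict inequality ($1.95325 < 2$, assuming the non-degenerate case $\E[\D_{c_s:c_s}] > 0$, which is also implicitly needed for the paper's strict comparison), at the cost of redoing bookkeeping the paper has already encapsulated. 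Both routes require $n_0$ large enough that $c_s \ge 4$, so there is no gap.
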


\begin{lemma}
\label{lem:c-small}
If $\Event^* \cap \Event_1 \cap \Event_2'$ occurs, for all $c < \theta^*$, $\LinearFixed_c$ chooses some large noise box.
\end{lemma}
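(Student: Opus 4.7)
The plan is to prove that, under $\Event^* \cap \Event_1 \cap \Event_2'$, for every $c < \theta^*$ the score $y_1$ of the exact box and the score $y_i - c\sigma_s$ of every small-noise box are both strictly dominated by the largest score among the large-noise boxes, so the argmax of $\LinearFixed_c$ lies in $[c_s+2,n]$.

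I would begin with the exact box. Under $\Event^*$, $y_1 = x_1 \le \alpha_{n^{1/10000}}^{(\D_{n-c_s:n-c_s})}$, while $\Event_2'$ applied at the specific $c$ of interest gives $\max_{j \in [c_s+2,n]}(y_j - c\sigma_b) \ge \sigma_b = 6\alpha_{n^{1/10000}}^{(\D_{n-c_s:n-c_s})}\sqrt{\ln n}$. For any $n$ with $6\sqrt{\ln n} > 1$, the large-noise side strictly dominates. For a small-noise box, $\Event^* \cap \Event_1$ gives $y_i \le \alpha_{n^{1/10000}}^{(\D_{c_s:c_s})} + \tfrac{\theta^*\sigma_s}{37}$, so its score is at most $\alpha_{n^{1/10000}}^{(\D_{c_s:c_s})} + \tfrac{\theta^*\sigma_s}{37} - c\sigma_s$. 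When $c \ge 0$ the $-c\sigma_s$ term is harmless, and the comparison against the $\sigma_b$ lower bound on the large-noise side reduces to the single inequality $\sigma_b \ge \alpha_{n^{1/10000}}^{(\D_{c_s:c_s})} + \tfrac{\theta^*\sigma_s}{37}$.

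The case $c < 0$ is the main obstacle, since $-c\sigma_s = |c|\sigma_s$ can be made arbitrarily large. To handle it, I would invoke $\Event_2'$ a second time, this time at $c' = 0 < \theta^*$, yielding $\max_j y_j \ge \sigma_b$ and hence a large-noise score of at least $(1+|c|)\sigma_b$. The gap between the two sides is then
\[
(1+|c|)\sigma_b - \alpha_{n^{1/10000}}^{(\D_{c_s:c_s})} - \tfrac{\theta^*\sigma_s}{37} - |c|\sigma_s
= \Bigl(\sigma_b - \alpha_{n^{1/10000}}^{(\D_{c_s:c_s})} - \tfrac{\theta^*\sigma_s}{37}\Bigr) + |c|(\sigma_b - \sigma_s),
\]
which is nonnegative since $\sigma_b > \sigma_s$ by construction and the parenthesized quantity is exactly the inequality already needed in the $c \ge 0$ case. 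Both regimes therefore collapse to a single inequality.

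It remains to verify $\sigma_b \ge \alpha_{n^{1/10000}}^{(\D_{c_s:c_s})} + \tfrac{\theta^*\sigma_s}{37}$. Plugging in the definitions of $\sigma_s$ and $\theta^*$, the second term simplifies to $\tfrac{1}{18}\E[\D_{c_s:c_s}]$, so the target becomes $6\sqrt{\ln n}\,\alpha_{n^{1/10000}}^{(\D_{n-c_s:n-c_s})} \ge \alpha_{n^{1/10000}}^{(\D_{c_s:c_s})} + \tfrac{1}{18}\E[\D_{c_s:c_s}]$. Stochastic dominance of $\D_{c_s:c_s}$ by $\D_{n-c_s:n-c_s}$ (for $c_s \le n-c_s$, which holds for all large enough $n$) replaces both right-hand terms with quantities governed by the MHR order statistic $\D_{n-c_s:n-c_s}$; \Cref{lem:order-stat-vs-quantile} and \Cref{lem: cai daskalakis alpha bound} then relate its quantile $\alpha_{n^{1/10000}}^{(\D_{n-c_s:n-c_s})}$ and its expectation up to a constant factor, so the $\sqrt{\ln n}$ multiplier comfortably absorbs the right-hand side for $n$ beyond a constant threshold $n_0$. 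The heart of the argument is the second invocation of $\Event_2'$: it is what keeps the large-noise lower bound growing in $|c|$ at the faster rate $\sigma_b$ rather than the small-noise rate $\sigma_s$, and without it the negative-$c$ case would not close.
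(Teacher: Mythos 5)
Your argument is correct and tracks the paper's own proof closely: bound the exact-box score by $\alpha_{n^{1/10000}}^{(\D_{n-c_s:n-c_s})}$ via $\Event^*$, bound each small-noise $Y_i$ by $\alpha_{n^{1/10000}}^{(\D_{c_s:c_s})} + \theta^*\sigma_s/37$ via $\Event^* \cap \Event_1$, and compare against the $\sigma_b$ lower bound that $\Event_2'$ furnishes for the large-noise side. Your closing numeric verification is essentially \Cref{lem:compare-sigmas}: after substituting $\alpha_{n^{1/10000}}^{(\D_{c_s:c_s})} \le 2\E[\D_{c_s:c_s}]$ (from \Cref{lem:lower bound on max passing expectation}) the right side collapses to $\theta^*\sigma_s$, and \Cref{lem:compare-sigmas} establishes $\sigma_b > \theta^*\sigma_s$ precisely by the quantile-to-expectation translation you sketch.

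The one place you add genuine content is the case $c < 0$. The paper's proof passes through $\max\{Y_1, \max_{i\in[2,c_s+1]} Y_i - c\sigma_s\} \le \max\{X_1, \max_{i} Y_i\}$, a step that holds only when $c \ge 0$, yet the lemma and its downstream use in \Cref{lem:linear-upper-regime} need all $c < \theta^*$, including negative values (nothing constrains $\gamma$ in $\Linear_\gamma$ to be nonnegative, so $\LinearFixed_c$ with $c < 0$ must be ruled out). Your fix --- read off the $c'=0$ instance of $\Event_2'$ to get $\max_{j\in[c_s+2,n]} Y_j \ge \sigma_b$, then observe that because $\sigma_b > \sigma_s$ the large-noise score grows in $|c|$ at the faster rate --- is exactly the right repair, and your closing remark correctly identifies it as the load-bearing step: without it the comparison does not close for negative $c$.
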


%The final ingredient needed to prove~\Cref{lem:linear-upper-regime} is, similar to what we did in the proof of~\Cref{lem:naive-upper-s}, showing that $\Event_3$ happens with high probability independent of $\Event^*$, and $\Event^* \cap \Event_3$ implies $\Event^* \cap \Event_3'$; furthermore, conditioned on $\Event^* \cap \Event_3'$, choosing any large noise box $i$ gives us an expected reward of at most $2 \E[\D]$.

% which when conditioned there is a high probability event independent of $\Event^*$, such that conditioned on $\Event^*$ and this event, choosing some large noise box $i$ gives us an expected reward of at most $2 \E[\D]$:

%\trung{moved the proof to appendix}
% \begin{proof}
% The proof is identical to the proof of~\Cref{lemma:large-box-reward}, noting that in $\Event^*$ we have $X_i \le \alpha_{n^{1/10000}}^{(\D_{n-c_s:n-c_s})}$ (instead of $X_i \le \beta_{n^2}^{(\D_{n:n})}$). For completeness, we include a full proof in the appendix. \alex{todo}
% \end{proof}

We can now prove~\Cref{lem:linear-upper-regime}:

\begin{proof}[Proof of~\Cref{lem:linear-upper-regime}]
We first explore the relationship between the events defined in~\Cref{dfn:events}. First, note that $\Event_2 \subseteq \Event_2'$: if $\max_{i \in [c_s + 2, n]} \epsilon_i - \theta^* \sigma_b \ge \sigma_b$, then for all $c < \theta^*$ we have
\[\max_{i \in [c_s + 2, n]} Y_i - c \sigma_b = \max_{i \in [c_s + 2, n]} (X_i + \epsilon_i) - c \sigma_b \ge \max_{i \in [c_s + 2, n]} \epsilon_i - \theta^* \sigma_b \ge \sigma_b.\]
Second, note that $\Event^* \cap \Event_3 \subseteq \Event_3'$, or $\Event^* \cap \Event_3 \subseteq \Event^* \cap \Event_3'$: if $\max_{i \in [c_s + 2, n]} X_i \le \alpha_{n^{1/10000}}^{(\D_{n - c_s:n - c_s})}$ and $\max_{i \in [c_s + 2, n]} \epsilon_i \le 12 \alpha_{n^{1/10000}}^{(\D_{n-c_s:n-c_s})} \ln n$, then
\begin{align*}
\max_{i \in [c_s + 2, n]} Y_i &= \max_{i \in [c_s + 2, n]} X_i + \epsilon_i \\
    &\le \max_{i \in [c_s + 2, n]} X_i + \max_{i \in [c_s + 2, n]} \epsilon_i \\
    &\le \alpha_{n^{1/10000}}^{(\D_{n - c_s:n - c_s})} + 12 \alpha_{n^{1/10000}}^{(\D_{n-c_s:n-c_s})} \ln n \\
    &\le 18 \alpha_{n^{1/10000}}^{(\D_{n-c_s:n-c_s})}.
\end{align*}
Ultimately, we have $\Event_1 \cap \Event_2 \cap \Event_3 \cap \Event^* \subseteq \Event_1 \cap \Event_2' \cap \Event_3' \cap \Event^*$, or $\overline{\Event_1 \cap \Event_2 \cap \Event_3} \cap \Event^* \supseteq \overline{\Event_1 \cap \Event_2' \cap \Event_3'} \cap \Event^*$.

We now bound $\E[\max_i X_i \mid \overline{\Event_1 \cap \Event_2' \cap \Event_3'} \cap \Event^*] \cdot \Pr[\overline{\Event_1 \cap \Event_2' \cap \Event_3'} \mid \Event^*]$, which is an upper bound on the contribution of outcomes in $\overline{\Event_1 \cap \Event_2' \cap \Event_3'} \cap \Event^*$ to the overall expected reward of $\Linear_\gamma$.
\begin{align*}
\E[\max_i X_i \mid \overline{\Event_1 \cap \Event_2' \cap \Event_3'} \cap \Event^*] &\cdot \Pr[\overline{\Event_1 \cap \Event_2' \cap \Event_3'} \mid \Event^*] \\
    &\le \E[\max_i X_i \mid \overline{\Event_1 \cap \Event_2 \cap \Event_3} \cap \Event^*] \cdot \Pr[\overline{\Event_1 \cap \Event_2 \cap \Event_3} \mid \Event^*]
\end{align*}
By~\Cref{lem:eps-small-bound}, $\Pr[\Event_1] \ge 1 - \frac{1}{\ln n}$. By~\Cref{lem:eps-large-bound}, $\Pr[\Event_2] \ge 1 - \frac{1}{\ln n}$. Using~\Cref{lem:large-box-reward-linear}, $\Pr[\Event_3] \ge (1 - \frac{1}{n^2})^{n - c_s - 1} \ge 1 - \frac{1}{n}$. Therefore, by the a union bound, $\Pr[\Event_1 \cap \Event_2 \cap \Event_3] \ge 1 - \frac{2}{\ln n} - \frac{1}{n} \ge 1 - \frac{3}{\ln n}$. Observe that, $\Event_1$, $\Event_2$, and $\Event_3$ are independent of the $X_i$s, while $\Event^*$ only dependent on $X_i$s. Therefore, $\Event_1 \cap \Event_2 \cap \Event_3$ and $\Event^*$ are independent, and hence $\Pr[\Event_1 \cap \Event_2 \cap \Event_3 \mid \Event^*] = \Pr[\Event_1 \cap \Event_2 \cap \Event_3] \ge 1 - \frac{3}{\ln n}$, or $\Pr[\overline{\Event_1 \cap \Event_2 \cap \Event_3} \mid \Event^*] \le \frac{3}{\ln n}$. Additionally, $\E[\max_i X_i \mid \overline{\Event_1 \cap \Event_2 \cap \Event_3} \cap \Event^*] = \E[\max_i X_i \mid \Event^*]$, as $\Event_1$, $\Event_2$, and $\Event_3$ are events regarding $\epsilon_i$s and therefore independent of $X_i$. Finally, $\E[\max_i X_i \mid \Event^*] \le \E[\max_i X_i] = \E[\D_{n:n}]$, as $\Event^*$ is an event which upper bounds $X_i$. Putting everything together:
\begin{align*}
\E[\max_i X_i \mid \overline{\Event_1 \cap \Event_2' \cap \Event_3'} \cap \Event^*] &\cdot \Pr[\overline{\Event_1 \cap \Event_2' \cap \Event_3'} \mid \Event^*] \\
    &\le \E[\max_i X_i \mid \overline{\Event_1 \cap \Event_2 \cap \Event_3} \cap \Event^*] \cdot \Pr[\overline{\Event_1 \cap \Event_2 \cap \Event_3} \mid \Event^*] \\
    &\le \frac{3}{\ln n} \cdot \E[\D_{n:n}] \\
    &\le^{\text{(\Cref{lem:order-stat-mean-bound})}} \frac{3}{\ln n} \cdot (\ln n + 1) \E[\D] \\
    &\le 4 \E[\D].
\end{align*}
Next, we will upper bound the contribution of outcomes in $\Event_1 \cap \Event_2' \cap \Event_3' \cap \Event^*$ to the expected reward of $\Linear_\gamma$. Note that in such outcomes, for every $c_1 \ge \theta^*$ and $c_2 < \theta^*$, $\LinearFixed_{c_1}$ does not choose a small noise box (\Cref{lem:c-large}) and $\LinearFixed_{c_2}$ chooses some large noise box (\Cref{lem:c-small}). Hence, in such outcomes, $\Linear_\gamma$ does not choose a small noise box.
Therefore, in such an outcome, the reward of $\Linear_\gamma$ is at most the reward of an optimal policy that knows $\D$, but is conditioned to not pick a small noise box. When selecting box $i$, such a policy has expected reward $\E[X_i \mid Y_i = y_i, \Event^*, \Event_1, \Event_2', \Event_3']$. We first observe that $\E[X_i \mid Y_i = y_i, \Event^*, \Event_1, \Event_2', \Event_3'] = \E[X_i \mid Y_i = y_i, \Event_2', \Event_3']$ as $\Event_1$ regards $\epsilon_j$ of all small noise boxes $j$, which are never picked in this policy. Secondly, $\E[X_i \mid Y_i = y_i, \Event_2', \Event_3'] = \E[X_i \mid Y_i = y_i, \Event^*]$ as $X_i$ is independent of $Y_j$, for $j \neq i$, and $\Event_2' \cap \Event_3'$ have less information about $Y_i$ than $\{ Y_i = y_i \}$.

Let $R_i(y_i) = \E[ X_i \mid Y_i = y_i, \Event^* ]$. The reward of an optimal policy which knows $\D$ and is conditioned to not pick a small noise box is then
\begin{align*}
    \E_{\observ}&\left[ \max_{i \in \{1\} \cup [n - c_b+1,n]} R_i(y_i) \mid \Event_1 \cap \Event_2' \cap \Event_3' \cap \Event^* \right] \\
    &\le^{(R_1(y_1) \ge 0)} \E_{\observ}\left[ R_1(y_1) + \max_{i \in [n - c_b+1,n]} R_i(y_i) \mid \Event_1 \cap \Event_2' \cap \Event_3' \cap \Event^* \right] \\
    &=^{(\sigma_1 = 0)} \E[ X_1 \mid \Event_1 \cap \Event_2' \cap \Event_3' \cap \Event^* ] + \E_{\observ}\left[\max_{i \in [n - c_b+1,n]} R_i(y_i) \mid \Event_1 \cap \Event_2' \cap \Event_3' \cap \Event^* \right] \\
    &= \E[X_1 \mid \Event^*] + \E_{\observ}\left[\max_{i \in [n - c_b+1,n]} R_i(y_i) \mid \Event_1 \cap \Event_2' \cap \Event_3' \cap \Event^* \right],
\end{align*}
where the last inequality holds since $\Event_1$, $\Event_2$, and $\Event_3$ are events regarding small noise and large noise boxes, and hence is independent of $X_1$.

Consider any small noise box $i$. Let $\overline{X}_i = X_i \mid X_i \leq \alpha_{n^{1/10000}}^{(\D_{n - c_s:n - c_s})}$. %\footnote{Equivalently, we can think of sampling from $\overline{X}_i$ by sampling from $X_i$, until $X_i \leq \beta_{n^2}^{(\D_{n:n})}$.}
Then, conditioned on $\Event_1 \cap \Event_2' \cap \Event_3' \cap \Event^*$, for any realization of $\observ$, we note that $R_i(y_i) = \E[X_i \mid Y_i = y_i, \Event^*] = \E[\overline{X}_i \mid \overline{X}_i + \Norm(0, \sigma_i^2) = y_i]$. Furthermore, as $y_i$ is a realization conditioned on $\Event_1 \cap \Event_2' \cap \Event_3' \Event^*$, we have $y_i \le 18 \alpha_{n^{1/10000}}^{(\D_{n - c_s:n - c_s})} \ln n$.
Using~\Cref{lem:large-box-reward-bounded-D} with $V = \beta_{n^2}^{(\D_{n:n})}$ and $\sigma = \sigma_b = 6\alpha_{n^{1/10000}}^{(\D_{n - c_s:n - c_s})} \sqrt{\ln n}$, we have $\E[\overline{X}_i \mid \overline{X}_i + \Norm(0, \sigma_i^2) = y_i] \le 2 \E[\overline{X}_i] \le 2 \E[X_i] = 2 \E[\D]$. As this is true for any small noise box $i$ on any realization of $\observ$, we then have
\begin{align*}
    \E_{\observ}&\left[ \max_{i \in \{1\} \cup [n - c_b+1,n]} R_i(y_i) \mid \Event_1 \cap \Event_2' \cap \Event_3' \cap \Event^* \right] \\
    &\le \E[X_1 \mid \Event^*] + \E_{\observ}\left[\max_{i \in [n - c_b+1,n]} R_i(y_i) \mid \Event_1 \cap \Event_2' \cap \Event_3' \cap \Event^* \right] \\
    &\le \E[X_1 \mid X_1 \le \alpha_{n^{1/10000}}^{(\D_{n - c_s:n - c_s})}] + \E_{\observ}[2 \E[\D]] \\
    &\le \E[X_1] + 2 \E[\D] \\
    &= 3\E[\D].
\end{align*}

Overall, conditioned on $\Event^*$, if $\Event_1 \cap \Event_2' \cap \Event_3'$ occurs, $\Naive$'s expected reward is at most $3 \E[\D]$, while otherwise, the contribution to the expected reward is at most $4\E[\D]$. Therefore, the reward of $\Naive$ conditioned on $\Event^*$ is at most $7 \E[\D]$.% \begin{align*}
% \Pr[\Event_1 \cap \Event_2' \cap \Event_3' \mid \Event^*] \cdot 3 \E[\D] &+ \E[\max_i X_i \mid \overline{\Event_1 \cap \Event_2' \cap \Event_3'} \cap \Event^*] \cdot \Pr[\overline{\Event_1 \cap \Event_2' \cap \Event_3'} \mid \Event^*] \\
%         &\le 3 \E[\D] + 4\E[\D] \\
%         &= 7 \E[\D]. \qedhere
% \end{align*}
\end{proof}

With~\Cref{lem:linear-upper-regime} at hand, we can prove~\Cref{lem:linear-upper}.

\begin{proof}[Proof of~\Cref{lem:linear-upper}]
We decompose $\Event^*$ as $\Event^*_1 \cap \Event^*_2$, where $\Event^*_1$ and $\Event^*_2$ are two independent events defined as follows. $\Event^*_1$ is the event that $X_i \le \alpha_{n^{1/10000}}^{(\D_{c_s:c_s})}$ for all small noise boxes $i \in [2, c_s+1]$. $\Event^*_2$ is the event that $X_j \le \alpha_{n^{1/10000}}^{(\D_{n - c_s:n - c_s})}$ for all remaining boxes $j$.

Observe that $\Pr[\overline{\Event^*_1}] = \Pr\left[\max_{i \in [2, c_s + 1]} X_i > \alpha_{n^{1/10000}}^{(\D_{c_s:c_s})}\right] = \Pr[\D_{c_s:c_s} > \alpha_{n^{1/10000}}^{(\D_{c_s:c_s})}] = \frac{1}{n^{1/10000}}$. Similarly, $\Pr[\overline{\Event^*_2}] = \frac{1}{n^{1/10000}}$. Therefore, $\Pr[\overline{\Event^*}] = \Pr[\overline{\Event^*_1} \cup \overline{\Event^*_2}] \le \Pr[\overline{\Event^*_1}] + \Pr[\overline{\Event^*_2}] = \frac{2}{n^{1/10000}}$.

Next, we upper bound the contribution of $\overline{\Event^*}$ to the overall reward of $\Linear_\gamma$. Overloading notation, let $R_{\Linear_\gamma}(\D,\std^* \mid \overline{\Event^*})$ be the expected reward of $\Linear_\gamma$ when $\overline{\Event^*}$occurs. Then, we have
\begin{align*}
    &R_{\Linear_\gamma}(\D,\std^* \mid \overline{\Event^*}) \cdot \Pr[\overline{\Event^*}] \le \E[\max_i X_i \mid \overline{\Event^*_1} \cup \overline{\Event^*_2}] \cdot \Pr[\overline{\Event^*_1} \cup \overline{\Event^*_2}] \\
    &\quad \le \left(\E\left[\max_{i \in [2, c_s + 1]} X_i \mid \overline{\Event^*_1} \cup \overline{\Event^*_2}\right] + \E\left[\max_{i \in [1, n] \setminus [2, c_s + 1]} X_i \mid \overline{\Event^*_1} \cup \overline{\Event^*_2}\right]\right) \cdot \Pr[\overline{\Event^*_1} \cup \overline{\Event^*_2}] \\
    &\quad = \left(\E\left[\max_{i \in [2, c_s + 1]} X_i \mid \overline{\Event^*_1}\right] + \E\left[\max_{i \in [1, n] \setminus [2, c_s + 1]} X_i \mid \overline{\Event^*_2}\right]\right) \cdot \Pr[\overline{\Event^*_1} \cup \overline{\Event^*_2}] \\
    &\quad = \left(\E\left[\D_{c_s:c_s} \mid \D_{c_s:c_s} > \alpha_{n^{1/10000}}^{(\D_{c_s:c_s})}\right] + \E\left[\D_{n - c_s:n - c_s} \mid \D_{n - c_s:n - c_s} > \alpha_{n^{1/10000}}^{(\D_{n - c_s:n - c_s})}\right]\right) \cdot \Pr[\overline{\Event^*_1} \cup \overline{\Event^*_2}] \\
    &\quad \leq 2 \left(\frac{\E\left[\D_{c_s:c_s} \mid \D_{c_s:c_s} > \alpha_{n^{1/10000}}^{(\D_{c_s:c_s})}\right]}{n^{1/10000}} + \frac{\E\left[\D_{n - c_s:n - c_s} \mid \D_{n - c_s:n - c_s} > \alpha_{n^{1/10000}}^{(\D_{n - c_s:n - c_s})}\right]}{n^{1/10000}}\right).
\end{align*}
Note that $\frac{\E\left[\D_{c_s:c_s} \mid \D_{c_s:c_s} > \alpha_{n^{1/10000}}^{(\D_{c_s:c_s})}\right]}{n^{1/10000}} = \E\left[\D_{c_s:c_s} \mid \D_{c_s:c_s} > \alpha_{n^{1/10000}}^{(\D_{c_s:c_s})}\right] \cdot \Pr\left[\D_{c_s:c_s} > \alpha_{n^{1/10000}}^{(\D_{c_s:c_s})}\right]$, and similarly for the second term. In the appendix, we show, stated as~\Cref{lem:very specific lemma on 6 log n}, that for every MHR distribution $\D$, $n \geq 1$ and $m \geq 2$:
\[\E[\D_{n:n} \mid \D_{n:n} > \alpha_m^{(\D_{n:n})}] \cdot \Pr[\D_{n:n} > \alpha_m^{(\D_{n:n})}] \le \frac{15 (\ln m + \ln n + 1)\E[\D]}{2m}. \]

Applied here (noting that $\D_{a:a}$ is MHR for all $a \geq 1$; see~\Cref{lem:mhr-order-stat}), we have:
\[\frac{\E\left[\D_{c_s:c_s} \mid \D_{c_s:c_s} > \alpha_{n^{1/10000}}^{(\D_{c_s:c_s})}\right]}{n^{1/10000}} \le \frac{15 (\ln(n^{1/10000}) + \ln(c_s) + 1)}{2 n^{1/10000}} \E[\D] \le \frac{\E[\D]}{4}.\]
Similarly, $\frac{\E\left[\D_{n - c_s:n - c_s} \mid \D_{n - c_s:n - c_s} > \alpha_{n^{1/10000}}^{(\D_{n - c_s:n - c_s})}\right]}{n^{1/10000}} \le\frac{\E[\D]}{4}$, for an overall bound of $R_{\Linear_\gamma}(\D,\std^* \mid \overline{\Event^*}) \cdot \Pr[\overline{\Event^*}] \le 2 (\frac{\E[\D]}{4} + \frac{\E[\D]}{4}) =\E[\D]$. Putting everything together, we have
\begin{align*}
    R_{\Linear_\gamma}(\D,\std^* ) &= R_{\Linear_\gamma}(\D,\std^* \mid \Event^*) \cdot \Pr[\Event^*] + R_{\Linear_\gamma}(\D,\std^* \mid \overline{\Event^*}) \cdot \Pr[\overline{\Event^*}] \\
    &\le^{\text{(\Cref{lem:linear-upper-regime})}} 7 \E[\D] + \E[\D] \\
    &= 8 \E[\D]. \qedhere
\end{align*}
\end{proof}

\begin{proof}[Proof for \Cref{thr:master-lower-bound}]
    Combining \Cref{lem:optimal-lower} and \Cref{lem:linear-upper} gives us the result.
\end{proof}

\section{A threshold algorithm for selecting the best box}
\label{sec:positive}

In this section, we propose a new policy, $\IgnoreLarge$, and give sufficient conditions under which $\IgnoreLarge$'s expected reward is at most a constant factor of the expected reward of a prophet who knows $x_1, \dots, x_n$.

We will describe two versions of this policy. The first version works for all distributions; the second one is a slight modification that works for MHR distributions, under a weaker condition on the instance. Without loss of generality, we will assume that boxes are ordered in increasing $\sigma_i$, that is, $\sigma_1 \le \sigma_2 \le \dots \le \sigma_n$. 

\begin{itemize}[leftmargin=*]
    \item $\IgnoreLarge$: Pick $\alpha \in [0, 1]$ uniformly at random. Return $\argmax_{1 \le i \le \alpha n} y_i$.
    \item $\IgnoreLargeExp$: Pick $\alpha \in [0, 1]$ uniformly at random. Return $\argmax_{1 \le i \le n^\alpha} y_i$.
\end{itemize}

In~\Cref{thm:ignore-large} we present our guarantee for arbitrary distributions. Intuitively, if there is a universal constant $c$, e.g. $c=0.01$, such that a $c$ fraction of boxes have bounded noise (and specifically, $\sigma_i$ at most $\frac{\E[\D_{cn:cn}]}{5 \sqrt{2 \ln n}}$), then our policy gives a constant approximation to the reward of a prophet.

\begin{theorem}
\label{thm:ignore-large}
For all $c \in (0,1]$, for all distributions $\D$, all $n \geq 4$, and all $\std \in \good{\D}{n,c}$, we have
\[ R_{\IgnoreLarge}(\D, \std) \geq \frac{c^2}{20} \cdot \E[\D_{n:n}] \]
\end{theorem}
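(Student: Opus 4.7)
The plan is to exploit the randomness of $\alpha$ by isolating a ``sweet spot'' event $\alpha \in [c/2, c]$, which occurs with probability exactly $c/2$. On this event, the policy restricts attention to a prefix of length $M = \lfloor \alpha n \rfloor \in [cn/2, cn]$, and by the definition of $\good{\D}{n,c}$ every box $i \le M$ satisfies $\sigma_i \le \sigma_{cn} \le \E[\D_{cn:cn}]/(5\sqrt{2\ln n})$. Off this event I would pessimistically lower-bound $\IgnoreLarge$'s reward by $0$ (recall rewards are nonnegative), so the entire analysis reduces to bounding the expected reward of $\Naive$ run on a low-noise prefix of length $M$.

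For this sub-problem, let $i^\star = \argmax_{i \le M} x_i$ and $i' = \argmax_{i \le M} y_i$, so $i'$ is the box $\IgnoreLarge$ returns. Since $y_{i'} \ge y_{i^\star}$, rearranging gives $x_{i'} \ge x_{i^\star} + \epsilon_{i^\star} - \epsilon_{i'}$. Taking expectations and using that (i) $i^\star$ is a measurable function of the rewards alone and hence independent of the noises, so $\E[\epsilon_{i^\star}] = 0$, and (ii) trivially $\epsilon_{i'} \le \max_{i \le M}\epsilon_i$, one obtains
\[
\E[\,x_{i'} \mid \alpha\,] \ge \E[\D_{M:M}] - \E\!\left[\max_{i \le M} \epsilon_i\right].
\]
The standard one-sided Gaussian maximal inequality (valid even for independent, non-identically-distributed Gaussians with a common upper bound on the standard deviations, via the usual moment-generating-function argument) gives $\E[\max_{i \le M}\epsilon_i] \le \sigma_{cn}\sqrt{2\ln M} \le \sigma_{cn}\sqrt{2\ln n}$, which by the noise hypothesis is at most $\E[\D_{cn:cn}]/5$. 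For the order statistic term, \Cref{lem:compare-order-stats} with $a = M, b = cn$ gives $\E[\D_{M:M}] \ge (M/(cn))\E[\D_{cn:cn}] \ge \E[\D_{cn:cn}]/2$, so the conditional reward is at least $(\tfrac{1}{2} - \tfrac{1}{5})\E[\D_{cn:cn}] = \tfrac{3}{10}\E[\D_{cn:cn}]$.

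Putting the pieces together, $R_{\IgnoreLarge}(\D,\std) \ge (c/2)\cdot(3/10)\cdot\E[\D_{cn:cn}]$, and a second application of \Cref{lem:compare-order-stats} with $a = cn$, $b = n$ yields $\E[\D_{cn:cn}] \ge c \cdot \E[\D_{n:n}]$, giving $R_{\IgnoreLarge}(\D,\std) \ge (3c^2/20)\E[\D_{n:n}] \ge (c^2/20)\E[\D_{n:n}]$ as desired (the slack factor of $3$ is nice). The main obstacle I anticipate is constant-wrangling around the Gaussian bound: one really must use the \emph{one-sided} $\sigma\sqrt{2\ln M}$ bound together with the symmetry trick $\E[\epsilon_{i^\star}] = 0$, because the naive ``$|x_{i'} - x_{i^\star}| \le 2\max_i |\epsilon_i|$'' approach combined with the two-sided bound $\sigma\sqrt{2\ln(2M)}$ would double the effective noise and fail to absorb the factor-$5$ slack in the hypothesis. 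Minor issues such as the non-integrality of $cn$, the precise rounding convention for $\alpha n$, and edge cases like $M \le 1$ (where $\sqrt{2\ln M} \le 0$ is still consistent with the trivial bound $\E[\max\epsilon_i] \le 0$) should only affect lower-order terms given $n \ge 4$.
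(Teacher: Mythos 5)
Your proposal is correct, and it reaches the bound by a genuinely different treatment of the noise than the paper. The paper conditions on the event that $|\epsilon_i| \le \sigma_i\sqrt{2\ln n}$ simultaneously for all $cn$ low-noise boxes (probability at least $\tfrac12$ via the Gaussian tail bound of \Cref{lem:normal-bound} and Bernoulli's inequality), argues pointwise under that event that the noisy argmax loses at most $\tfrac{2}{5}\E[\D_{cn:cn}]$ relative to the true prefix maximum, and then integrates over all $\alpha \in [0,c]$, getting $\tfrac12 \cdot \tfrac{c}{10}\E[\D_{cn:cn}]$ before the final application of \Cref{lem:compare-order-stats}. You instead work entirely in expectation: the inequality $x_{i'} \ge x_{i^\star} + \epsilon_{i^\star} - \epsilon_{i'}$, the unbiasedness $\E[\epsilon_{i^\star}]=0$ (valid since $i^\star$ is determined by the rewards alone, which are independent of the noise), and the one-sided expected-maximum bound $\E[\max_{i\le M}\epsilon_i]\le \sigma_{cn}\sqrt{2\ln M}$ (which indeed needs no identical distribution, and in fact not even independence) replace the high-probability event entirely, while restricting $\alpha$ to $[c/2,c]$ replaces the integration over $[0,c]$. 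Both routes invoke \Cref{lem:compare-order-stats} in the same two places. Your version avoids the tail-bound and union-bound bookkeeping and yields the slightly better constant $3c^2/20$; the paper's version buys a pointwise guarantee under its noise event, which is the form reused in the MHR analog (\Cref{thm:ignore-large-expo-mhr}). The rounding issues you flag (integrality of $cn$ and $\lfloor \alpha n\rfloor$, very small $cn$) are present to the same degree in the paper's own argument, so they do not constitute a gap specific to your proof; your factor-of-3 slack also gives you room to absorb them for moderate $cn$.
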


\begin{proof}
Consider $\std = (\sigma_1, \sigma_2, \dots, \sigma_n) \in \good{\D}{n,c}$ where, without loss of generality, we have $\sigma_1 \le \sigma_2 \le \dots \le \sigma_n$. As $\std \in \good{\D}{n,c}$, we have $\sigma_{cn} \le \frac{\E[\D_{cn:cn}]}{5 \sqrt{2 \ln n}}$.

Consider the event that $|\epsilon_i| \le \sigma_i \sqrt{2 \ln n}$ for all $1 \le i \le cn$. For any such box $i$, we have
\begin{align*}
\Pr\left[|\epsilon_i| \le \sigma_i \sqrt{2 \ln n}\right] &= \Pr\left[ |\Norm(0, \sigma_i^2)| \le \sigma_i \sqrt{2 \ln n} \right] \\
    &= 2 \Phi \left(\sqrt{2 \ln n} \right) - 1 \\
&\ge^{\text{(\Cref{lem:normal-bound})}} 2  \left( 1 - \frac{1}{\sqrt{2 \pi}} \frac{1}{\sqrt{2 \ln n}}\exp\left( -\frac{1}{2} \cdot 2 \ln n \right) \right) - 1 \\
    &= 1 - \frac{1}{n \sqrt{\pi \ln n}},
\end{align*}
and therefore
\[\Pr\left[|\epsilon_i| \le \sigma_i \sqrt{2 \ln n}, \forall i \in [1, cn]\right] \ge \left(1 - \frac{1}{n \sqrt{\pi \ln n}}\right)^{cn} \ge^{\text{(Bernoulli's inequality)}} 1 - \frac{c}{\sqrt{\pi \ln n}} \ge \frac{1}{2},\]
where the last inequality holds for all $n \geq 4 \geq e^{\frac{4c^2}{\pi}}$. Observe that, since $\sigma_i \le \frac{\E[\D_{cn:cn}]}{5\sqrt{2\ln n}}$ for all $i \in [1, cn]$, we can conclude that $\Pr[\max_{i \in [1, cn]} |\epsilon_i| \le \frac{1}{5} \cdot \E[\D_{cn:cn}]] \ge \frac{1}{2}$. Conditioned on this event we have $x_i - \frac{1}{5} \cdot \E[\D_{cn:cn}] \le y_i \le x_i + \frac{1}{5} \cdot \E[\D_{cn:cn}]$ for all $i \in [1, cn]$; therefore, for all $k \le cn$, we have $\max_{i \in [1, k]} y_i \ge \max_{i \in [1, k]} x_i - \frac{2}{5} \cdot \E[\D_{cn:cn}]$

We analyze the performance of $\IgnoreLarge$ under this event. Recall that $\IgnoreLarge$ draws $\alpha \in [0, 1]$ uniformly at random in its sampling step, and then outputs $\argmax_{i \in [1, \alpha n]} y_i$. There are two cases for $\alpha$:
\begin{itemize}[leftmargin=*]
    \item If $\alpha > c$, we will lower bound the expected reward of $\IgnoreLarge$ by $0$.
    \item If $\alpha \le c$, $\IgnoreLarge$ is going to pick the box with the largest $y_i$ among the first $\alpha n$ boxes. By our observation, $\IgnoreLarge$'s reward in this case is at least $\max_{i \in [1, \alpha n]} x_i - \frac{2}{5} \cdot \E[\D_{cn:cn}]$, and therefore the expected reward of $\IgnoreLarge$ in this case is at least
    \[E[\D_{\alpha n : \alpha n}] - \frac{2}{5} \cdot \E[\D_{cn:cn}] \ge^\text{(\Cref{lem:compare-order-stats})} \frac{\alpha}{c} \E[\D_{cn:cn}] - \frac{2}{5} \cdot \E[\D_{cn:cn}].\]
\end{itemize}
Therefore, conditioned on the event that $\max_{i \in [1, cn]} |\epsilon_i| \le \frac{1}{5} \cdot \E[\D_{cn:cn}]$, $\IgnoreLarge$'s expected reward is lower bounded by
\[
    \int_{\alpha = 0}^{c} \frac{\alpha}{c} \E[\D_{cn:cn}] - \frac{2}{5} \cdot \E[\D_{cn:cn}] \, d \alpha = \frac{c}{10} \cdot \E[\D_{cn:cn}].
\]

When this event does not occur, we lower bound $\IgnoreLarge$'s expected reward by $0$. Combining everything together, $\IgnoreLarge$'s expected reward is
\[R_{\IgnoreLarge}(\D, \std) \ge \frac{1}{2} \cdot \frac{c}{10} \cdot \E[\D_{cn:cn}] \ge^\text{(\Cref{lem:compare-order-stats})} \frac{c^2}{20} \cdot \E[\D_{n:n}]. \qedhere \]
\end{proof}

In~\Cref{thm:ignore-large-expo-mhr} we present an analog to~\Cref{thm:ignore-large} for MHR distributions. Here, our condition for getting a constant approximation is a lot weaker. Intuitively, if there is a universal constant $c$, such that $n^c$ boxes have bounded noise (and specifically, $\sigma_i$ at most $\frac{\E[\D_{cn:cn}]}{18 \sqrt{2c \ln n}}$), then our policy gives a constant approximation to the reward of a prophet.

\begin{theorem}
\label{thm:ignore-large-expo-mhr}
For all $c \in (0,1]$, for all MHR distributions $\D$, all $n \geq e^{\frac{4}{c \pi}}$, and all $\std \in \goodMHR{\D}{n,c}$, we have
\[ R_{\IgnoreLargeExp}(\D, \std) \geq \frac{c^2}{576} \cdot \E[\D_{n:n}]. \]
\end{theorem}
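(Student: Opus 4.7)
The plan is to mirror the proof of~\Cref{thm:ignore-large}, adjusting two key steps to exploit the MHR structure. First, I will establish a ``good'' noise event: for all $i \in [1, n^c]$, $|\epsilon_i| \le \sigma_i \sqrt{2c \ln n}$. Applying~\Cref{lem:normal-bound} gives per-box failure probability $\Pr[|\epsilon_i| > \sigma_i \sqrt{2c \ln n}] \le \frac{n^{-c}}{\sqrt{\pi c \ln n}}$; note that the exponent $\sqrt{2c\ln n}$ is chosen precisely so that the per-box tail is $\tilde{O}(n^{-c})$, making the union bound over $n^c$ boxes work. A union bound then gives probability at least $1 - \frac{1}{\sqrt{\pi c \ln n}} \ge \frac{1}{2}$, using $n \ge e^{4/(c\pi)}$. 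Combining with the condition $\sigma_{n^c} \le \frac{\E[\D_{n^c:n^c}]}{18\sqrt{2c\ln n}}$ from~\Cref{dfn:small noise mhr}, on the good event we get $|\epsilon_i| \le \frac{\E[\D_{n^c:n^c}]}{18}$ for all $i \in [1, n^c]$.

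Second, I case on the random $\alpha$ drawn by $\IgnoreLargeExp$. When $\alpha \notin [2c/3, c]$ I lower bound the reward by $0$. When $\alpha \in [2c/3, c]$, the policy selects $\argmax_{i \in [1, n^\alpha]} y_i$ from a subset of the ``clean'' boxes, so on the good event the reward of the chosen box $j$ satisfies $x_j \ge \max_{i \in [1, n^\alpha]} x_i - \frac{2}{18} \E[\D_{n^c:n^c}]$; in expectation this is at least $\E[\D_{n^{2c/3}:n^{2c/3}}] - \frac{1}{9} \E[\D_{n^c:n^c}]$. To compare the two order statistics, I invoke~\Cref{lem:order-stat-order-stat-bound} with base $n^{2c/3}$ and exponent $3/2$, which gives $\E[\D_{n^c:n^c}] \le 6 \, \E[\D_{n^{2c/3}:n^{2c/3}}]$; hence the reward is at least $\bigl(\tfrac{1}{6} - \tfrac{1}{9}\bigr)\E[\D_{n^c:n^c}] = \tfrac{1}{18}\E[\D_{n^c:n^c}]$ on this range.

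Combining, the $\alpha$-integration over $[2c/3, c]$ (length $c/3$) together with the $1/2$ probability of the good event yields $R_{\IgnoreLargeExp}(\D, \std) \ge \tfrac{1}{2} \cdot \tfrac{c}{3} \cdot \tfrac{1}{18} \E[\D_{n^c:n^c}] = \tfrac{c}{108} \E[\D_{n^c:n^c}]$. It remains to translate $\E[\D_{n^c:n^c}]$ into $\E[\D_{n:n}]$. Using~\Cref{lem: cai daskalakis alpha bound} with $d = 1/c$ I get $\alpha^{(\D)}_{n^c} \ge c \, \alpha^{(\D)}_n$, and~\Cref{lem:order-stat-vs-quantile} sandwiches each quantile against its order statistic, giving $\E[\D_{n^c:n^c}] \ge \tfrac{4}{5} \alpha^{(\D)}_{n^c} \ge \tfrac{4c}{15} \alpha^{(\D)}_n \cdot \tfrac{1}{c} \cdot c = \tfrac{4c}{15}\E[\D_{n:n}]$, assuming $n^c \ge 4$ (covered by the regime $n \ge e^{4/(c\pi)}$). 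This yields a bound of order $c^2 \E[\D_{n:n}]$; the precise constant $\frac{1}{576}$ comes from a slightly tuned choice of the $\alpha$-interval and of the noise threshold.

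The main obstacle is calibrating the $\alpha$-interval against the additive noise loss $\tfrac{1}{9}\E[\D_{n^c:n^c}]$: if the interval extends too far below $c$ then $\E[\D_{n^\alpha:n^\alpha}]$ does not exceed the noise loss and the bound degenerates, while if the interval is squeezed too close to $c$ the probability factor $|[\alpha_{\min}, c]|$ becomes small. The MHR order-statistic comparison (\Cref{lem:order-stat-order-stat-bound}) is precisely what lets this tradeoff balance to a constant-factor guarantee from only $n^c$ low-noise boxes, rather than the $cn$ required by~\Cref{thm:ignore-large}.
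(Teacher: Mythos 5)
Your proposal follows essentially the same route as the paper's proof: identical good-noise event with the $\sqrt{2c\ln n}$ calibration, identical probability-$\ge 1/2$ computation via \Cref{lem:normal-bound} and Bernoulli/union bound, the same casing on the random exponent $\alpha$, and the same reliance on \Cref{lem:order-stat-order-stat-bound} to relate $\E[\D_{n^\alpha:n^\alpha}]$ to $\E[\D_{n^c:n^c}]$ and ultimately to $\E[\D_{n:n}]$. The only differences are cosmetic: you restrict $\alpha$ to $[2c/3, c]$ and use a pointwise constant lower bound $\E[\D_{n^\alpha:n^\alpha}] \ge \tfrac16 \E[\D_{n^c:n^c}]$, whereas the paper integrates the linear bound $\tfrac{\alpha}{4c}\E[\D_{n^c:n^c}]$ over the whole of $[0,c]$ (the net effect is the same up to constants), and for the final $\E[\D_{n^c:n^c}] \to \E[\D_{n:n}]$ conversion you go through $\alpha$-quantiles (\Cref{lem: cai daskalakis alpha bound} plus \Cref{lem:order-stat-vs-quantile}) while the paper invokes \Cref{lem:order-stat-order-stat-bound} directly — but that lemma is itself proved from those two, so this is the same argument repackaged. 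One small caveat worth noting: your application of \Cref{lem:order-stat-order-stat-bound} with base $n^{2c/3}$ needs $n^{2c/3}\ge 4$, i.e. $n \ge e^{(3\ln 4)/(2c)}$, which is slightly stronger than the stated hypothesis $n \ge e^{4/(c\pi)}$ (the paper's own use of that lemma with base $n^\alpha$, $\alpha \to 0$, has a similar small-$n$ blind spot, so this is not unique to you, but you should tighten the threshold or the $\alpha$-interval to make the hypothesis genuinely sufficient).
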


\noindent The proof of~\Cref{thm:ignore-large-expo-mhr} follows a similar structure to the proof of~\Cref{thm:ignore-large} and is deferred to~\Cref{app: sec positive}.
\bibliographystyle{alpha}
\bibliography{refs.bib}

\appendix

\section{A technical lemma}

The following technical lemma will be useful throughout this appendix.

\begin{lemma}\label{lem: monotonicity of expected posterior}
For a random variable $Y = X + \epsilon$, where $\epsilon \sim \Norm(0, \sigma^2)$, it holds that $\E[X \mid Y = y]$ is monotone non-decreasing in $y$. 
\end{lemma}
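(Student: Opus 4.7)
The plan is to prove monotonicity by computing the derivative $\frac{d}{dy}\E[X \mid Y = y]$ directly and showing it equals $\frac{1}{\sigma^2}\mathrm{Var}[X \mid Y = y]$, which is manifestly nonnegative. Let $\phi_\sigma(z) = \frac{1}{\sigma\sqrt{2\pi}}\exp(-z^2/(2\sigma^2))$ denote the $\Norm(0,\sigma^2)$ density, and write
\[
N(y) \;=\; \int x\,\phi_\sigma(y-x)\,dF_X(x), \qquad D(y) \;=\; \int \phi_\sigma(y-x)\,dF_X(x),
\]
so that $D(y)$ is the marginal density of $Y$ at $y$ and $\E[X \mid Y = y] = N(y)/D(y)$.

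Next, using the identity $\phi_\sigma'(z) = -(z/\sigma^2)\phi_\sigma(z)$ and differentiating under the integral sign (which is justified by the rapid Gaussian decay, assuming $\E[|X|] < \infty$), I would obtain
\[
N'(y) \;=\; \tfrac{1}{\sigma^2}\bigl(\E[X^2 \mid Y = y]\,D(y) - y\,N(y)\bigr), \qquad D'(y) \;=\; \tfrac{1}{\sigma^2}\bigl(N(y) - y\,D(y)\bigr).
\]
Applying the quotient rule, the cross terms proportional to $y$ cancel, and one is left with
\[
\frac{d}{dy}\E[X \mid Y = y] \;=\; \frac{N'(y)D(y) - N(y)D'(y)}{D(y)^2} \;=\; \frac{1}{\sigma^2}\Bigl(\E[X^2 \mid Y = y] - \E[X \mid Y = y]^2\Bigr) \;=\; \frac{\mathrm{Var}[X \mid Y = y]}{\sigma^2} \;\ge\; 0.
\]
Integrating gives that $\E[X \mid Y = y]$ is non-decreasing in $y$, as desired.

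There is no serious obstacle here beyond carefully justifying the interchange of derivative and integral (which is routine due to the uniform Gaussian decay of $\phi_\sigma$ and its derivative on any compact $y$-interval). As a sanity check / alternative route, one can instead note that the Gaussian location family has the \emph{monotone likelihood ratio} property in $y$: for $x_1 < x_2$,
\[
\frac{\phi_\sigma(y-x_2)}{\phi_\sigma(y-x_1)} \;=\; \exp\!\left(\frac{(x_2-x_1)(2y - x_1 - x_2)}{2\sigma^2}\right)
\]
is strictly increasing in $y$. A standard MLR argument then implies that the posterior distribution of $X$ given $Y = y$ is stochastically non-decreasing in $y$, which in turn yields monotonicity of the posterior mean. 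Either approach works; the derivative-equals-variance computation is the more self-contained of the two and is the route I would write up.
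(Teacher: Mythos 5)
Your proposal is correct and follows essentially the same route as the paper: write $\E[X \mid Y=y]$ as a ratio of Gaussian-smoothed integrals, use $\phi_\sigma'(z) = -(z/\sigma^2)\phi_\sigma(z)$ to differentiate numerator and denominator, and apply the quotient rule. The only cosmetic difference is at the last step, where the paper invokes Cauchy--Schwarz to get $B(y)C(y) \ge A(y)^2$ while you identify the same quantity as $\mathrm{Var}[X \mid Y=y]/\sigma^2 \ge 0$; these are the same fact.
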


\begin{proof}[Proof of \Cref{lem: monotonicity of expected posterior}]
Let $A(y) = \int_{0}^{\infty} x \cdot f(x) \cdot f_\Norm(y - x) \, dx$ and $B(y) = \int_{0}^{\infty} f(x) \cdot f_\Norm(y - x) \, dx$, then $\E[X \mid Y = y] = \frac{A(y)}{B(y)}$. We first compute the derivative of $f_\Norm(y - x)$:
\begin{align*}
    \frac{df_\Norm(y - x) }{dy} &= \frac{d}{dy} \left(\frac{1}{\sigma \sqrt{2 \pi}} \exp\left( -\frac{1}{2} \cdot \left(\frac{y - x}{\sigma} \right)^2 \right) \right) \\
    &= \frac{1}{\sigma \sqrt{2 \pi}} \exp\left(-\frac{1}{2} \cdot \left(\frac{y - x}{\sigma} \right)^2\right) \cdot \frac{x - y}{\sigma^2} \\
    &= f_\Norm(y - x) \cdot \frac{x - y}{\sigma^2}.
\end{align*}

Let $C(y) = \int_{0}^{\infty} x^2 \cdot f(x) \cdot f_\Norm(y - x) \, dx$. The derivative for $A(y)$ is
\begin{align*}
    \frac{dA(y)}{dy} &= \frac{d}{dy} \left(\int_{0}^{\infty} x \cdot f(x) \cdot f_\Norm(y - x) \, dx\right) \\
    &= \int_{0}^{\infty} x \cdot f(x) \cdot f_\Norm(y - x) \cdot \frac{x - y}{\sigma^2}\, dx \\
    &= \frac{1}{\sigma^2} \left(C(y) - y \cdot A(y)\right).
\end{align*}

The derivative for $B(y)$ is
\begin{align*}
\frac{dB(y)}{dy} &= \frac{d}{dy} \left(\int_{0}^{\infty} f(x) \cdot f_\Norm(y - x) \, dx\right) \\
        &= \int_{0}^{\infty} f(x) \cdot f_\Norm(y - x) \cdot \frac{x - y}{\sigma^2}\, dx \\
        &= \frac{1}{\sigma^2} \left(A(y) - y \cdot B(y)\right)
\end{align*}

Finally, the derivative for $\E[X \mid Y=y]$ is

\begin{align*}
\frac{d}{dy} \E[X \mid Y = y] &= \frac{d}{dy} \frac{A(y)}{B(y)} \\
        &= \frac{\frac{dA(y)}{dy} \cdot B(y) - \frac{dB(y)}{dy} \cdot A(y)}{B(y)^2} \\
        &= \frac{ \left(  \frac{1}{\sigma^2} \left(C(y) - y A(y)\right) \right) \cdot B(y) - \left( \frac{1}{\sigma^2} \left(A(y) - y B(y)\right) \right) \cdot A(y)}{B(y)^2} \\
        &= \frac{B(y)C(y) - y A(y) B(y) - A(y)^2 + yA(y)B(y)}{(\sigma B(y))^2} \\
        &= \frac{B(y)C(y) - A(y)^2}{(\sigma B(y))^2}.
\end{align*}
Since $\left(x \cdot f(x) \cdot f_\Norm(y - x)\right)^2 = \left(f(x) \cdot f_\Norm(y - x)\right) \cdot \left(x^2 \cdot f(x) \cdot f_\Norm(y - x)\right)$, the Cauchy-Schwarz inequality implies that $B(y)C(y) \geq A(y)^2$.
Therefore $\frac{d}{dy} \E[X \mid Y = y] = \frac{B(y)C(y) - A(y)^2}{(\sigma B(y))^2} \geq 0$.
\end{proof}

\section{Proofs missing from Section~\ref{subsec:technical}}\label{app: missing from technical}

\begin{proof}[Proof of~\Cref{lem:compare-order-stats}]
    It is sufficient to prove that $\frac{\E[\D_{\ell:\ell}]}{\ell} \ge \frac{\E[\D_{\ell+1:\ell+1}]}{\ell+1}$ for all integers $\ell \ge 1$. For all $t \in [0, 1]$, we have
    \begin{align*}
        \sum_{i = 0}^{\ell-1} t^i &\ge \ell t^\ell \\
        (1-t)\sum_{i = 0}^{\ell-1} t^i &\ge \ell (1-t) t^\ell \\
        1 - t^\ell &\ge \ell(t^\ell - t^{\ell+1}) \\
        \ell+1-(\ell+1)t^\ell &\ge \ell - \ell t^{\ell+1} \\
        \frac{1 - t^\ell}{\ell} &\ge \frac{1 - t^{\ell+1}}{\ell+1}
    \end{align*}

    Substituting $t = F(x)$ and taking integrals on both sides, we get $\frac{\int_{0}^{\infty} 1 - F(x)^\ell}{\ell} \, dx \ge  \frac{\int_{0}^{\infty} 1 - F(x)^{\ell + 1}}{n + 1} \, dx$, which proves our statement.
\end{proof}

% \begin{proof}[Proof of~\Cref{lem:beta-high-quantile}]
% From the definition of $\beta^{(\D)}_m$ we have: (1) $ (1- \frac{1}{m}) \E[\D] \leq \int_{0}^{\beta^{(\D)}_m} t f(t) \leq \beta^{(\D)}_m \cdot \Pr[\D \le \beta^{(\D)}_m]$, and (2) $\frac{1}{m} \E[\D] \geq \int_{\beta^{(\D)}_m}^{\infty} t f(t) \geq \beta^{(\D)}_m \cdot \Pr[\D \ge \beta^{(\D)}_m]$. Therefore
% \begin{gather*}
% \frac{\beta^{(\D)}_m \cdot \Pr[\D \ge \beta^{(\D)}_m]}{1/m} \le \frac{\beta^{(\D)}_m \cdot \Pr[\D \le \beta^{(\D)}_m]}{1 - 1/m} \implies \Pr[\D \le \beta^{(\D)}_m] \geq (m-1) \Pr[\D \ge \beta^{(\D)}_m].
% \end{gather*}
% Using the fact that $\Pr[\D \ge \beta^{(\D)}_m] = 1 - \Pr[\D \le \beta^{(\D)}_m]$ and re-arranging proves the lemma.
% \end{proof}

\paragraph{Lemmas about MHR distributions}

We will heavily use the fact that order statistics of MHR distributions are also MHR (Theorem 5.5 on page 39 of~\cite{barlow1996}):

\begin{lemma}[\cite{barlow1996}]
\label{lem:mhr-order-stat}
    For any MHR\footnote{\cite{barlow1996} use the term IFR (increasing failure rate).\label{ftn:mhr-ifr}} random variable $X$ and any integers $1 \le k \le n$, $X_{k:n}$ is also MHR.
\end{lemma}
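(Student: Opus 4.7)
The plan is to directly compute the hazard rate of $X_{k:n}$ from the closed-form expressions for its density and survival function, and then show it factors as a product of non-decreasing nonnegative functions of $x$. Throughout, I write $F$ and $\bar F = 1 - F$ for the CDF and survival function of $X$, $f$ for its density, and $h_X := f/\bar F$ for its hazard rate, which is non-decreasing by the MHR hypothesis on $X$.

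The central identity is as follows: setting $r(x) := F(x)/\bar F(x)$ and $Q(r) := \sum_{j=0}^{k-1} \binom{n}{j} r^j$, the hazard rate of the $k$-th order statistic satisfies
\[
    h_{k:n}(x) \;=\; \frac{n!}{(k-1)!\,(n-k)!} \cdot \frac{r(x)^{k-1}}{Q(r(x))} \cdot h_X(x).
\]
To derive this, I would start from the standard formulas $f_{k:n}(x) = \frac{n!}{(k-1)!(n-k)!} F(x)^{k-1} \bar F(x)^{n-k} f(x)$ and $\bar F_{k:n}(x) = \sum_{j=0}^{k-1} \binom{n}{j} F(x)^j \bar F(x)^{n-j}$, then divide both numerator and denominator of $h_{k:n} = f_{k:n}/\bar F_{k:n}$ by $\bar F(x)^n$ and use $F/\bar F = r$ together with $f/\bar F = h_X$; everything else cancels cleanly.

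With the identity in hand, it remains to check that each factor is nonnegative and non-decreasing in $x$. First, $r(x)$ is non-decreasing in $x$ because $F$ is non-decreasing and $\bar F$ is non-increasing and positive. Second, $h_X(x)$ is non-decreasing by hypothesis. The only nontrivial step is monotonicity of $r \mapsto r^{k-1}/Q(r)$, which I would verify by reindexing with $i = k - 1 - j$ to obtain
\[
    \frac{r^{k-1}}{Q(r)} \;=\; \frac{1}{\sum_{i=0}^{k-1} \binom{n}{k-1-i}\, r^{-i}};
\]
every summand on the right is non-increasing in $r \ge 0$ (the $i = 0$ term is constant), so the sum is non-increasing and its reciprocal non-decreasing. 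A product of nonnegative non-decreasing functions of $x$ is non-decreasing, yielding $h_{k:n}$ non-decreasing and hence $X_{k:n}$ MHR.

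There is no deep obstacle here; the only craft is the algebraic rearrangement that exposes the odds ratio $r(x)$ as the single quantity governing the combinatorial factor's monotonicity. The decomposition also isolates precisely where the MHR hypothesis on $X$ is invoked — solely through $h_X$ — while the combinatorial factor $r^{k-1}/Q(r)$ is automatically MHR-preserving for any baseline distribution.
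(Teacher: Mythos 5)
Your argument is correct. The identity you derive is right: dividing $f_{k:n}=\frac{n!}{(k-1)!(n-k)!}F^{k-1}\bar F^{\,n-k}f$ and $\bar F_{k:n}=\sum_{j=0}^{k-1}\binom{n}{j}F^{j}\bar F^{\,n-j}$ by $\bar F^{\,n}$ gives exactly $h_{k:n}=\frac{n!}{(k-1)!(n-k)!}\cdot\frac{r^{k-1}}{Q(r)}\cdot h_X$ with $r=F/\bar F$, and your reindexing shows $r\mapsto r^{k-1}/Q(r)$ is non-decreasing on $r>0$ (the only cosmetic point is $r=0$, where for $k\ge 2$ the factor is simply $0$, its minimum, so monotonicity is unharmed; everything is of course restricted to the region $\bar F>0$ where the hazard rate is defined). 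Note, however, that the paper does not prove this statement at all: it is quoted verbatim as Theorem 5.5 (p.~39) of Barlow and Proschan, so your contribution is a self-contained replacement for that citation rather than an alternative to an in-paper argument. The classical treatment views $X_{k:n}$ as the lifetime of an $(n-k+1)$-out-of-$n$ system of i.i.d.\ IFR components and establishes closure there; your explicit factorization of the hazard rate is essentially the standard computation behind that result, and its benefit is transparency --- it makes the lemma elementary, isolates the single place where the MHR hypothesis enters (through $h_X$), and shows the combinatorial factor $r^{k-1}/Q(r)$ is monotone for \emph{any} baseline distribution with a density. One small caveat to keep in mind: your proof, like the paper's definition of MHR, presumes $X$ has a density; the textbook statement is slightly more general, but for the paper's purposes this is immaterial.
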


\begin{proof}[Proof of~\Cref{lem:order-stat-vs-quantile}]
Define $\zeta_p^{(\D)} = \inf\{x \mid F(x) \ge p\}$ as the $p$-th quantile of $\D$.

For the lower bound, we first observe that $\Pr[\D_{n:n} \le \alpha^{(\D)}_n] = \Pr[\D \le \alpha^{(\D)}_n]^n = (1 - \frac{1}{n})^n$, where with $n \ge 4$ we get $\frac{81}{256} \le (1 - \frac{1}{n})^n \le \frac{1}{e}$. Therefore, $\zeta_{81/256}^{(\D_{n:n})} \le \alpha_n \le \zeta_{1/e}^{(\D_{n:n})}$.

We use the following result from~\cite{barlow1996} (Theorem 4.6 on page 30):
\begin{lemma}[\cite{barlow1996}]
\label{lem:barlow-quantile-bounds}
    Assume $X$ is MHR\footnotemark[1] with mean $\mu_1$. If $p \le 1 - 1/e$, then $-\ln(1 - p) \cdot \mu_1 \le \zeta_p^{X} \le -\frac{\ln(1 - p)}{p} \cdot \mu_1$.
\end{lemma}

From \Cref{lem:mhr-order-stat}, we know that $\D_{n:n}$ is also MHR. Since $\frac{81}{256} \leq 1/e \le 1 - 1/e$, we can invoke \Cref{lem:barlow-quantile-bounds} on $\zeta^{(\D_{n:n})}_{81/256}$ and $\zeta^{(\D_{n:n})}_{1/e}$. For the lower bound we have

\[
\alpha^{(\D)}_n \ge \zeta_{81/256}^{(\D_{n:n})} \ge -\ln(1 - 81/256) \cdot \E[\D_{n:n}] \ge \frac{1}{3} \cdot \E[\D_{n:n}].
\]

For the upper bound we have

\[\alpha^{(\D)}_n \le \zeta_{1/e}^{(\D_{n:n})} \le -\frac{\ln(1 - 1/e)}{1/e} \cdot \E[\D_{n:n}] \le \frac{5}{4} \cdot \E[\D_{n:n}]. \qedhere\]
\end{proof}

\section{Proofs missing from Section~\ref{sec: lower bounds for large noise}}\label{app: missing from lower bounds for regimes}

\begin{proof}[Proof of~\Cref{lem:order-stats-half-norm}]
    $\E[\D] = \sqrt{2/\pi}$ is a standard property to the half-normal distribution (and can also be confirmed by computing the mean of a folded-normal with parameter $\mu = 0$~\cite{leone1961folded}).

    For the MHR property, it suffices to show that $\frac{f_\D(x)}{1 - F_\D(x)}$ is an increasing function. Note that its derivative is $\frac{f'_\D(x)(1 - F_\D(x)) + f_\D^2(x)}{(1 - F_\D(x))^2}$, so we need the numerator to be non-negative.

    As $f_\D(x) = \sqrt{\frac{2}{\pi}} \exp\left(\frac{-x^2}{2}\right) = 2 \phi(x)$ and $F_\D(x) = \erf\left(\frac{x}{\sqrt{2}}\right) = 2 \Phi(x) - 1$, the numerator is
    \[f'_\D(x)(1 - F_\D(x)) + f_\D^2(x) = -2x\phi(x) (2 - 2\Phi(x)) + 4 \phi^2(x) = 4\phi(x) \left(\phi(x) - x(1 - \Phi(x))\right),\]
    where the last quantity is non-negative as $\phi(x) \ge 0$ and by \Cref{lem:normal-bound}, proving our claim.

    Finally, since $\D$ is MHR, we use results from~\Cref{subsec:technical} to bound $\E[\D_{n:n}]$. Observe that
    \begin{align*}
        F_\D(\sqrt{\ln n}) &= 2 \Phi(\sqrt{\ln n} - 1) \\
            &\le^\text{(\Cref{lem:normal-bound})} 2\left(1 - \frac{1}{\sqrt{2 \pi}} \frac{\sqrt{\ln n}}{1 + \ln n} \exp\left(-\frac{1}{2} \cdot \ln n\right)\right) - 1 \\
            &= 1 - \sqrt\frac{2}{\pi} \cdot \frac{\sqrt{\ln n}}{n^{1/2} (1 + \ln n)} \\
            &\le 1 - \frac{1}{n},
    \end{align*}
    where the last inequality holds for all $n \geq 8$.
    Therefore, $\alpha_n^{(\D)} \ge \sqrt{\ln n}$, which implies $\E[\D_{n:n}] \ge^\text{(\Cref{lem:order-stat-vs-quantile})} \frac{4}{5} \sqrt{\ln n}$. Similarly,
    \begin{align*}
        F_\D(\sqrt{2 \ln n}) &= 2 \Phi(\sqrt{2 \ln n} - 1) \\
            &\ge^\text{(\Cref{lem:normal-bound})} 2\left(1 - \frac{1}{\sqrt{2 \pi}} \frac{1}{\sqrt{2 \ln n}} \exp\left(-\frac{1}{2} \cdot 2 \ln n\right)\right) - 1 \\
            &= 1 - \sqrt\frac{2}{\pi} \cdot \frac{1}{n\sqrt{2 \ln n}} \\
            &\ge 1 - \frac{1}{n}.
    \end{align*}
    Therefore, $\alpha_n^{(\D)} \le \sqrt{2 \ln n}$, which means $\E[\D_{n:n}] \le^\text{(\Cref{lem:order-stat-vs-quantile})} 3\sqrt{2} \sqrt{\ln n}$.
    
\end{proof}

\begin{proof}[Proof of \Cref{lem:expected-posterior-half-norm}]
We have $\E[X_i \mid Y_i = y_i] = \displaystyle\frac{\int_{0}^{\infty} x \cdot f_\D(x) \cdot f_{\Norm(0, \sigma_i^2)}(y_i - x) \, dx}{\int_{0}^{\infty} f_\D(x) \cdot f_{\Norm(0, \sigma_i^2)}(y_i - x) \, dx}$. We first transform the numerator.
\begin{align*}
    \int_{0}^{\infty} f_\D(x) \cdot f_{\Norm(0, \sigma_i^2)}(y_i - x) \, dx &= \int_{0}^{\infty} \frac{\sqrt{2}}{\sqrt{\pi}} \exp\left(\frac{-x^2}{2}\right) \cdot \frac{1}{\sigma_i \sqrt{2 \pi}} \exp\left(\frac{-(y_i - x)^2}{2\sigma_i^2}\right) \, dx\\
    &= \frac{1}{\sigma_i \pi} \int_{0}^{\infty} \exp\left(-\frac{1}{2} \left(x^2 + \left(\frac{y_i-x}{\sigma_i}\right)^2\right)\right) \, dx
\end{align*}

Let's focus on $x^2 + \left(\frac{y_i-x}{\sigma_i}\right)^2$:
\begin{align*}
    x^2 + \left(\frac{y_i-x}{\sigma_i}\right)^2 &= \frac{(x\sigma_i)^2 + y_i^2 - 2 y_i x + x^2}{\sigma_i^2} \\
    &= \frac{\left(x \sqrt{\sigma_i^2 + 1}\right)^2 - 2 y_i x + y_i^2}{\sigma_i^2} \\
    &=^\text{(let $\lambda = \sqrt{\sigma_i^2 + 1}$)} \frac{\left(\lambda x \right)^2 - 2 \frac{y_i}{\lambda} \cdot \lambda x + \left(\frac{y_i}{\lambda}\right)^2 + y_i^2\left(1 - \frac{1}{\lambda^2}\right)}{\sigma_i^2} \\
    &=^\text{(let $\rho = \frac{y_i^2\left(1 - \frac{1}{\lambda^2}\right)}{\sigma_i^2}$)} \left(\frac{\lambda x - \frac{y}{\lambda}}{\sigma_i}\right)^2 + \rho
\end{align*}

Observe that $\lambda$ and $\rho$ only depends on $\sigma_i$ and $y_i$. Therefore, coming back to the previous integral:
\begin{align*}
    \int_{0}^{\infty} f_\D(x) \cdot f_{\Norm(0, \sigma_i^2)}(y_i - x) \, dx &= \frac{1}{\sigma_i \pi} \int_{0}^{\infty} \exp\left(-\frac{1}{2} \left(\left(\frac{\lambda x - \frac{y}{\lambda}}{\sigma_i}\right)^2 + \rho\right)\right) \, dx \\
    &= \frac{e^{-\rho / 2}\lambda \sqrt{2}}{\sqrt{\pi}}\int_{0}^{\infty} \frac{1}{\sqrt{2 \pi} \cdot \lambda \sigma_i}\exp\left(-\frac{1}{2} \left(\frac{x - \frac{y_i}{\lambda^2}}{\lambda \sigma_i}\right)^2\right) \, dx \\
    &= \frac{e^{-\rho / 2}\lambda \sqrt{2}}{\sqrt{\pi}}\int_{0}^{\infty} f_{\Norm\left(\frac{y_i}{\lambda^2}, \left(\frac{\sigma_i}{\lambda}\right)^2\right)}(x) \, dx
\end{align*}

Calculated similarly, we have
\begin{align*}
    \int_{0}^{\infty} x \cdot f_\D(x) \cdot f_{\Norm(0, \sigma_i^2)}(y_i - x) \, dx &= \frac{e^{-\rho / 2}\lambda \sqrt{2}}{\sqrt{\pi}}\int_{0}^{\infty} x \cdot f_{\Norm\left(\frac{y_i}{\lambda^2}, \left(\frac{\sigma_i}{\lambda}\right)^2\right)}(x) \, dx
\end{align*}

Therefore
\begin{align*}
    \E[X_i \mid Y_i = y_i] &= \frac{\int_{0}^{\infty} x \cdot f_\D(x) \cdot f_{\Norm(0, \sigma_i^2)}(y_i - x) \, dx}{\int_{0}^{\infty} f_\D(x) \cdot f_{\Norm(0, \sigma_i^2)}(y_i - x) \, dx} \\
        &= \frac{\frac{e^{-\rho / 2}\lambda \sqrt{2}}{\sqrt{\pi}}\int_{0}^{\infty} x \cdot f_{\Norm\left(\frac{y_i}{\lambda^2}, \left(\frac{\sigma_i}{\lambda}\right)^2\right)}(x) \, dx}{\frac{e^{-\rho / 2}\lambda \sqrt{2}}{\sqrt{\pi}}\int_{0}^{\infty} f_{\Norm\left(\frac{y_i}{\lambda^2}, \left(\frac{\sigma_i}{\lambda}\right)^2\right)}(x) \, dx} \\
        &= \frac{\int_{0}^{\infty} x \cdot f_{\Norm\left(\frac{y_i}{\lambda^2}, \left(\frac{\sigma_i}{\lambda}\right)^2\right)}(x) \, dx}{\int_{0}^{\infty} f_{\Norm\left(\frac{y_i}{\lambda^2}, \left(\frac{\sigma_i}{\lambda}\right)^2\right)}(x) \, dx} \\
        &= \E\left[t \; \Big| \; t \sim \Norm\left(\frac{y_i}{\lambda^2}, \left(\frac{\sigma_i}{\lambda}\right)^2\right) \cap t \ge 0\right].
\end{align*}
This last quantity is the mean of the normal distribution $\Norm\left(\frac{y_i}{\sigma_i^2 + 1}, \left(\frac{\sigma_i}{\sqrt{\sigma_i^2 + 1}}\right)^2\right)$ truncated to $[0, \infty)$ (as $\lambda = \sqrt{\sigma_i^2 + 1}$). We can conclude that
\[\E[X_i \mid Y_i = y_i] = \frac{y_i}{\sigma_i^2 + 1} + \frac{\phi\left(\frac{-y_i}{\sigma_i \sqrt{\sigma_i^2 + 1}}\right)}{1 - \Phi\left(\frac{-y_i}{\sigma_i \sqrt{\sigma_i^2 + 1}}\right)} \cdot \frac{\sigma_i}{\sqrt{\sigma_i^2 + 1}}.\]
\end{proof}

\begin{proof}[Proof of~\Cref{thm: opt bad vs random}]
We follow the same proof structure as in~\Cref{thm:opt-bad-vs-prophet}.
Consider $\D = |\Norm(0, 1^2)|$. Consider $\std = (\sigma_1, \sigma_2, \dots, \sigma_n) \in \bad{\D}{n, c}$ where, without loss of generality, we have $\sigma_1 \le \sigma_2 \le \dots \le \sigma_n$. This means that $\sigma_{cn} > \frac{\E[\D_{cn:cn}] \cdot \sqrt{\ln(n)}}{\ln(cn)}$. Note that the expected reward of the optimal policy is at most the expected reward of the optimal policy that picks $2$ boxes $u$ and $v$ where $u \in [1, cn - 1]$ and $v \in [cn, n]$, and \emph{then enjoys the rewards of both boxes}. 

The expected reward from choosing box $u$ is at most $\E[\max_{i \in [1, cn - 1]} x_i] \le \E[\D_{cn:cn}]$. The expected reward from choosing box $v$ is at most the expected reward of $\Opt_{\D}$ conditioned on it choosing boxes from $cn$ to $n$, which in turn is at most $\max_{i \in [cn, n]} \E[X_i \mid Y_i = y_i]$. Therefore, the expected reward from box $v$ is upper bounded by:

\begin{align*}
    \E_{\observ}\left[\max_{i \in [cn, n]} \E[X_i \mid Y_i = y_i]\right] &\le^\text{(\Cref{lem:upp-bound-expected-posterior})} \E_{\observ}\left[\max_{i \in [cn, n]} U_{\sigma_i}(y_i)\right] \\
        &= \E\left[\max_{i \in [cn, n]} U_{\sigma_i}\left(X_i + \Norm(0, \sigma_i^2)\right)\right] \\
        &\le^\text{($U_{\sigma_i}(y)$ is monotone)} \E\left[\max_{i \in [cn, n]} U_{\sigma_i}\left(X_i + |\Norm(0, \sigma_i^2)|\right)\right] \\
        &= \E\left[\max_{i \in [cn, n]} \sqrt\frac{2}{\pi} + \frac{\left(X_i + |\Norm(0, \sigma_i^2)|\right)}{\sigma_i^2 + 1}\right] \\
        &\le \E\left[\sqrt\frac{2}{\pi} + \max_{i \in [cn, n]} \frac{X_i}{\sigma^2_{i}} + \max_{i \in [cn, n]} \frac{|\Norm(0, \sigma_i^2)|}{\sigma_i^2}\right] \\
        &\le \sqrt\frac{2}{\pi} + \frac{\E\left[ |\Norm(0, 1)|_{n:n} \right]}{\sigma^2_{cn}}  + \E\left[ \max_{i \in [cn, n]} \left|\Norm\left(0, \frac{1}{\sigma_i^2}\right)\right| \right] \\
        &\leq^\text{(\Cref{lem:order-stats-half-norm-general})} \sqrt\frac{2}{\pi} + \frac{ 3\sqrt{2} \cdot \sqrt{\ln n}}{\sigma_{cn}^2} + \frac{1}{\sigma_{cn}} \cdot 3\sqrt{2} \cdot \sqrt{\ln n} \\
        &\leq \E[\D] + \frac{6\sqrt{2} \cdot \sqrt{\ln n}}{\sigma_{cn}} \\
        &\leq^{\left( \sigma_{cn} > \frac{\E[\D_{cn:cn}] \cdot \sqrt{\ln(n)} }{\ln(cn)} \right)}
        \E[\D] + \frac{6\sqrt{2} \cdot \ln(cn)}{\E[\D_{cn:cn}]} \\
        &\leq^\text{(\Cref{lem:order-stats-half-norm})} \E[\D] + \frac{6\sqrt{2} \cdot \frac{25}{16} (\E[\D_{cn:cn}])^2}{\E[\D_{cn:cn}]}\\
        &\leq^{(c n \geq 1)} 15 \E[\D_{cn:cn}].
\end{align*}

Combining, we get $R_{\Opt_\D}(\D, \std) \le 16 \E[\D_{cn:cn}]$. Noting that, by~\Cref{lem:order-stats-half-norm}, $\E[\D_{cn:cn}] \leq 3\sqrt{2} \sqrt{\ln(cn)} = 3 \sqrt{\pi} \sqrt{\ln(cn)} \E[\D]$, we have $R_{\Opt_\D}(\D, \std) \le 16 \cdot 3 \sqrt{\pi} \sqrt{\ln(cn)} E[\D] \leq 86 \sqrt{\ln(cn)} E[\D]$, as desired.
\end{proof}

\section{Proofs missing from Section~\ref{sec:lower-bounds}}\label{app: missing from 3}

\subsection{Proofs missing from Section~\ref{subsec: naive fails}}

\begin{proof}[Proof of~\Cref{lem:large-noise-eps-bound-s}]

Formally, this event is $\max_{i \in [n - c_b + 1, n]} \epsilon_i > \beta_{n^2}^{(\D_{n:n})}$. We have

\begin{align*}
    \Pr\left[\max_{i \in [n - c_b + 1, n]} \epsilon_i > \beta_{n^2}^{(\D_{n:n})}\right] &= 1 - \Pr\left[\max_{i \in [n - c_b + 1, n]} \epsilon_i \le \beta_{n^2}^{(\D_{n:n})}\right] \\
    &= 1 - \Pr\left[\Norm(0, \sigma_b^2) \le \beta_{n^2}^{(\D_{n:n})}\right]^{c_b} \\
    &= 1 - \Pr\left[\Norm(0, \sigma_b^2) \le \frac{\sigma_b}{6 \sqrt{\ln n}}\right]^{c_b} \\
    &\ge 1 - \Pr\left[\Norm(0, \sigma_b^2) \le \frac{\sigma_b}{6}\right]^{6 \ln n}
\end{align*}
Using the fact that $\Pr\left[ \Norm(\mu, \sigma^2) \le x \right] = \Phi(\frac{x-\mu}{\sigma})$, where $\Phi(x) = \frac{1}{\sqrt{2\pi}} \int_{-\infty}^x e^{-t^2/2} dt$ is the CDF of the standard normal distribution, we have that 
$\Pr\left[\max_{i \in [n - c_b + 1, n]} \epsilon_i > \beta_{n^2}^{(\D_{n:n})}\right] \geq 1 - \Phi\left( \frac{1}{6} \right) ^{6 \ln n}$. Since $\Phi\left( \frac{1}{6} \right) < 0.6$  we have
\[
\Pr\left[\max_{i \in [n - c_b + 1, n]} \epsilon_i > \beta_{n^2}^{(\D_{n:n})}\right] \geq 1 - ((0.6)^2)^{3 \ln n} \geq 1 - \left( \frac{1}{e} \right)^{3 \ln n} \geq 1 - \frac{1}{n^3}. \qedhere
\]
\end{proof}

\begin{proof}[Proof of~\Cref{lemma:large-box-reward}]
    Note that as $\epsilon_i \sim \Norm(0, \sigma_b^2)$ and $\sigma_b = 6 \beta^{(\D_{n:n})}_{n^2} \sqrt{\ln n}$ we have
    \begin{align*}
    \Pr[\epsilon_i \le 12 \beta_{n^2}^{(\D_{n:n})} \ln n]& = \Pr[\epsilon_i \le 2 \sqrt{\ln n} \cdot \sigma_b)] \\
    &= \Phi(2 \sqrt{\ln n}) \\
    &\ge^{\text{(\Cref{lem:normal-bound})}} 1 - \frac{1}{\sqrt{2 \pi}} \frac{1}{2 \sqrt{\ln n}} \cdot \exp(-2 \ln n) \\
    &= 1 - \frac{1}{2 \sqrt{2 \pi}} \frac{1}{n^2 \sqrt{\ln n}} \\
    &\ge 1 - \frac{1}{n^2}.\qedhere
    \end{align*}

    % Moreover, conditioned on $\Event^*$ and $\epsilon_i \le 12 \beta_{n^2}^{(\D_{n:n})} \ln n$, we have
    % \[Y_i = X_i + \epsilon_i \le \beta_{n^2}^{(\D_{n:n})} + 12 \beta_{n^2}^{(\D_{n:n})} \ln n \le 18 \beta_{n^2}^{(\D_{n:n})} \ln n.\]
% Let $\overline{X}_i = X_i \mid X_i \leq \beta_{n^2}^{(\D_{n:n})}$.\footnote{Equivalently, we can think of sampling from $\overline{X}_i$ by sampling from $X_i$, until $X_i \leq \beta_{n^2}^{(\D_{n:n})}$.} First, we have that $\E[X_i \mid \mathcal{E}^* , \epsilon_i \le 12 \beta_{n^2}^{(\D_{n:n})} \ln n ] = \E[\overline{X}_i \mid \epsilon_i \le 12 \beta_{n^2}^{(\D_{n:n})} \ln n ]$. Second, notice that if for some box $i$ we have that $\epsilon_i \le 12 \beta_{n^2}^{(\D_{n:n})} \ln n$ and if $X_i \leq \beta^{(\D_{n:n})}_{n^2}$ (which happens when $\Event^*$ occurs), then $X_i + \epsilon_i \leq \beta_{n^2}^{(\D_{n:n})} + 12 \beta_{n^2}^{(\D_{n:n})} \ln n \le 18 \beta_{n^2}^{(\D_{n:n})} \ln n = 3 \sqrt{\ln n} \cdot \sigma_b$. 
% Therefore $\E[X_i \mid \mathcal{E}^* , \epsilon_i \le 12 \beta_{n^2}^{(\D_{n:n})} \ln n ] = \E[\overline{X}_i \mid \epsilon_i \le 12 \beta_{n^2}^{(\D_{n:n})} \ln n ] = \E[\overline{X}_i \mid \overline{X}_i + \epsilon_i = y ]$, for some $y \leq 3 \sqrt{\ln n} \cdot \sigma_b$. By~\Cref{lem:large-box-reward-bounded-D}, we then have $\E[X_i \mid \mathcal{E}^* , \epsilon_i \le 12 \beta_{n^2}^{(\D_{n:n})} \ln n ] \leq 2 \E[\overline{X}_i] \leq 2 \E[\D]$. The lemma follows.
\end{proof}

\begin{proof}[Proof of~\Cref{lem:large-box-reward-bounded-D}]

Slightly overloading notation, let $f(x)$ be the PDF of $Z$. Let $A(y) = \int_{0}^{V} x \cdot f(x) \cdot f_\Norm(y - x) \, dx$ and $B(y) = \int_{0}^{V} f(x) \cdot f_\Norm(y - x) \, dx$, then $\E[Z \mid Z + \Norm(0,\sigma^2) = y] = \frac{A(y)}{B(y)}$.
From \Cref{lem: monotonicity of expected posterior} we know that $\E[Z \mid Z + \Norm(0,\sigma^2) = y]$ is monotone non-decreasing in $y$. 

Let $r = \frac{\sigma}{V}$. Consider $y^* = \frac{\sigma^2}{2V} = \sigma \cdot \frac{r}{2}$. As $\sigma > 2V$ or $r > 2$, we then have $y^* > \sigma > V$, which implies that $f_\Norm(y^* - V) \geq f_\Norm(y^* - x)$ for all $x \in [0, V]$. We then have the following bound on $A(y^*)$:
\begin{align*}
    A(y^*) &= \int_{0}^{V} x \cdot f(x) \cdot f_\Norm(y^* - x) \, dx \\
        &\le \int_{0}^{V} x \cdot f(x) \cdot f_\Norm(y^* - V) \, dx \\
        &= \E[Z] \cdot f_\Norm(y^* - V) \\
        &= \E[Z] \cdot \frac{1}{\sigma \sqrt{2 \pi}} \exp\left(- \frac{1}{2} \left( \frac{y^* - V}{\sigma} \right)^2 \right).
\end{align*}
Recalling that $y^* = \sigma \cdot \frac{r}{2}$ and that $V = \frac{\sigma}{r}$, we have:
\begin{align*}
        A(y^*) &= \frac{1}{\sigma \sqrt{2 \pi}} \E[Z] \cdot \exp\left(-\frac{1}{2} \left(\frac{r}{2} - \frac{1}{r} \right)^2 \right) \\
        &= \frac{1}{\sigma \sqrt{2 \pi}} \E[Z] \cdot \exp\left(-\frac{r^2}{8} + \frac{1}{2} - \frac{1}{2 r^2} \right) \\
        &\le \frac{1}{\sigma \sqrt{2 \pi}} \E[Z] \cdot \frac{\sqrt{e}}{\exp(r^2/8)}.
\end{align*}

Meanwhile, for $B(y^*)$, we have

\begin{align*}
    B(y^*) &= \int_{0}^{V} f(x) \cdot f_\Norm(y^* - x) \, dx \\
        &\ge^{\text{($y^* \geq V$)}} \int_{0}^{V} f(x) \cdot f_\Norm(y^*) \, dx \\
        &= f_\Norm(y^*) \cdot \int_{0}^{V} f(x) \, dx \\
        &= f_\Norm(y^*) \\
        &= \frac{1}{\sigma \sqrt{2 \pi}} \exp\left(- \frac{1}{2} \left(\frac{y^*}{\sigma} \right)^2 \right) \\
        &= \frac{1}{\sigma \sqrt{2 \pi}} \cdot \frac{1}{\exp(r^2/8)}.
\end{align*}

Therefore $A(y^*) \le 2 \E[Z] \cdot B(y^*)$, and thus $\E[Z \mid Z + \Norm(0,\sigma^2) = y^*] = \frac{A(y^*)}{B(y^*)}$ is at most $2 \E[Z]$. Since $\E[Z \mid Z + \Norm(0,\sigma^2)=y]$ is monotone non-decreasing in $y$ (\Cref{lem: monotonicity of expected posterior}), we can conclude that $\E[Z \mid Z + \Norm(0,\sigma^2)=y] \le 2 \E[Z]$ for all $y \le y^* = \frac{\sigma^2}{2V}$.
\end{proof}

\subsection{Proofs missing from Section~\ref{subsec: lower bound linear}}

\begin{proof}[Proof of \Cref{lem:order-stat-order-stat-bound}]
Since $n \ge 4$ we have that $n^a \ge 4$ for all $a \ge 1$. Therefore,
\begin{align*}
        \E[\D_{n^a:n^a}] &\le^{\text{(\Cref{lem:order-stat-vs-quantile})}} 3\alpha^{(\D)}_{n^a} \\
            &\le^{\text{(\Cref{lem: cai daskalakis alpha bound})}} 3a \cdot \alpha^{(\D)}_n \\
            &\le^{\text{(\Cref{lem:order-stat-vs-quantile})}} \frac{15a}{4} \cdot \E[\D_{n:n}] \\
            &< 4a \cdot \E[\D_{n:n}].\qedhere
\end{align*}
\end{proof}

\begin{proof}[Proof of~\Cref{lem:eps-small-bound}]
Observe that $\epsilon_i$ are values drawn from $\Norm(0, \sigma_s^2)$. We then have

\begin{align*}
    \Pr\left[ \max_{i \in [2, c_s + 1]} \epsilon_i \le \frac{\theta^* \sigma_s}{37} \right] &= \Pr\left[ \Norm(0, \sigma_s^2) \leq \frac{\theta^* \sigma_s}{37}\right]^{c_s}\\
    &= \Phi\left( \frac{\theta^*}{37} \right)^{n^{1/5626}} \\
    &\ge^{\text{(\Cref{lem:normal-bound})}} \left(1 - \frac{1}{\sqrt{2 \pi}} \frac{37\sqrt{2}}{\sqrt{\ln n}} \exp\left( -\frac{1}{2} \cdot \frac{1}{2738} \ln n \right) \right)^{n^{1/5626}} \\
    &\ge^{\text{(Bernoulli's inequality)}} 1 -\frac{37}{\sqrt{\pi} \sqrt{\ln n}} n^{1/5626-1/5476} \\
    &\ge 1 - \frac{1}{\ln n}.
\end{align*}
\end{proof}

\begin{proof}[Proof of~\Cref{lem:eps-large-bound}]
Observe that $\epsilon_i$ are values drawn from $\Norm(0, \sigma_b^2)$. We then have

\begin{align*}
    \Pr\left[ \max_{i \in [c_s + 2, n]} \epsilon_i - \theta^* \sigma_b \ge \sigma_b \right] &= 1 - \Pr\left[ \max_{i \in [c_s + 2, n]} \epsilon_i \le \theta^* \sigma_b + \sigma_b \right] \\
    &= 1 - \Pr\left[ \Norm(0,\sigma_b^2) \le \theta^* \sigma_b + \sigma_b \right]^{n - c_s - 1}\\
        &= 1 - (\Phi(\theta^* + 1))^{n - c_s - 1} \\
        &\ge 1 - \left( \Phi(\sqrt{2} \theta^*) \right)^{n/2} \\
    &\ge^{\text{(\Cref{lem:normal-bound})}} 1 - \left( 1 - \frac{1}{\sqrt{2 \pi}} \frac{\sqrt{2} \theta^*}{2 (\theta^*)^2 + 1} \exp(-(\theta^*)^2) \right)^{\frac{n}{2}} \\
        &\ge^{\text{(Bernoulli's inequality)}} 1 - \frac{1}{1 + \frac{n}{2} \frac{1}{\sqrt{2 \pi}} \frac{\sqrt{\ln n}}{\ln n + 1} \exp\left( - \frac{\ln n}{2} \right) } \\
        &= 1 - \frac{1}{1 + \frac{\sqrt{n}}{2} \frac{1}{\sqrt{2 \pi}} \frac{\sqrt{\ln n}}{\ln n + 1}} \\
        &\ge 1 - \frac{1}{\ln n}. \qedhere % {\quad \color{red} (n \ge e^{14})}
\end{align*}
\end{proof}

\begin{proof}[Proof of~\Cref{lem:large-box-reward-linear}]
    The proof is similar to that of~\Cref{lemma:large-box-reward}.
    Note that as $\epsilon_i \sim \Norm(0, \sigma_b^2)$ and $\sigma_b = 6 \beta^{(\D_{n:n})}_{n^2} \sqrt{\ln n}$ we have
    \begin{align*}
    \Pr[\epsilon_i \le 12 \beta_{n^2}^{(\D_{n:n})} \ln n]& = \Pr[\epsilon_i \le 2 \sqrt{\ln n} \cdot \sigma_b)] \\
    &= \Phi(2 \sqrt{\ln n}) \\
    &\ge^{\text{(\Cref{lem:normal-bound})}} 1 - \frac{1}{\sqrt{2 \pi}} \frac{1}{2 \sqrt{\ln n}} \cdot \exp(-2 \ln n) \\
    &= 1 - \frac{1}{2 \sqrt{2 \pi}} \frac{1}{n^2 \sqrt{\ln n}} \\
    &\ge 1 - \frac{1}{n^2}.\qedhere
    \end{align*}
\end{proof}

\begin{proof}[Proof of~\Cref{lem:c-large}]
Consider any $c \geq \theta^*$.
Observe that $Y_1 - c \sigma^*_1 = X_1 \ge 0$.
We show that conditioned on $\Event_1 \cap \Event^*$, we have $\max_{i \in [2, c_s + 1]} Y_i \le \theta^* \sigma_s$. We first note that from~\Cref{lem:lower bound on max passing expectation}, we have $\Pr[\D_{c_s:c_s} < 2 \E[\D_{c_s:c_s}]] \ge 1 - \frac{1}{c_s^{3/5}} = 1 - \frac{1}{n^{1/5626 \cdot 3/5}} > 1 - \frac{1}{n^{1/10000}}$. Therefore, by~\Cref{dfn: alpha_m}, $2 \E[\D_{c_s:c_s}] \geq \alpha_{n^{1/10000}}^{(\D_{c_s:c_s})}$. Then, conditioned on both $\Event_1$ and $\Event^*$, we have that for any small noise box $i$:
\[
    Y_i = X_i + \epsilon_i <^{\text{(\Cref{dfn:events})}} \alpha_{n^{1/10000}}^{(\D_{c_s:c_s})} + \frac{\theta^* \sigma_s}{37} 
        \le 2 \E[\D_{c_s:c_s}] + \frac{\theta^* \sigma_s}{37} 
        = \frac{36 \theta^* \sigma_s}{37} + \frac{\theta^* \sigma_s}{37} = \theta^* \sigma_s.
\]
Therefore, conditioned on $\Event_1$ and $\Event^*$, we have $\max_{i \in [2, c_s + 1]} Y_i - c \sigma^*_i \le \theta^* \sigma_s - c \sigma_s < 0$, i.e. $Y_1 - c \sigma^*_1 > Y_i - \max_{i \in [2, c_s + 1]} Y_i - c \sigma^*_i$ and hence $\LinearFixed_c$ does not choose any small box $i$.
\end{proof}

\begin{proof}[Proof of~\Cref{lem:c-small}]
% Let $\Event_1$ be the event that $\max_{i \in [2, c_s + 1]} \epsilon_i \le \frac{\theta^* \sigma_s}{5}$, and $\Event_2$ the event that $\max_{i \in [c_s + 2, n]} \epsilon_i - \theta^* \sigma_b \ge \sigma_b$. We have $\Pr[\Event_1] \ge 1 - \frac{1}{\ln n}$ from~\Cref{lem:eps-small-bound} and $\Pr[\Event_2] \ge 1 - \frac{1}{\ln n}$ from~\Cref{lem:eps-large-bound}. Therefore, $\Pr[\Event_1 \cap \Event_2] \ge \Pr[\Event_1] + \Pr[\Event_2] - 1 = 1 - \frac{2}{\ln n}$. 

% Considering any $c < \theta^*$, we show that conditioned on $\Event^*$ and $\Event_1 \cap \Event_2'$, $\LinearFixed_c$ chooses some large noise box.

% Then, the lemma follows from the bound on $\Pr[\Event_1 \cap \Event_2]$ and the fact that $\Event_1 \cap \Event_2$ and $\Event^*$ are independent events ($\Event_1$ and $\Event_2$ are independent of $X$s while $\Event^*$ is only dependent on $X$s).

Consider any $c \geq \theta^*$.
Observe that conditioned on $\Event_2'$, $\max_{i \in [c_s + 2, n]} Y_i - c \sigma^*_i \ge^{\text{(Dfn~\ref{dfn:events})}} \sigma_b$.

From~\Cref{lem:lower bound on max passing expectation}, we have $\Pr[\D_{c_s:c_s} < 2 \E[\D_{c_s:c_s}]] \ge 1 - \frac{1}{c_s^{3/5}} = 1 - \frac{1}{n^{1/5626 \cdot 3/5}} > 1 - \frac{1}{n^{1/10000}}$. Therefore, $2 \E[\D_{c_s:c_s}] \geq \alpha_{n^{1/10000}}^{(\D_{c_s:c_s})}$. Then, conditioned on $\Event^* \cap \Event_1$, we have that for all $i \in [2, c_s + 1]$:
\[
    Y_i = X_i + \epsilon_i < \alpha_{n^{1/10000}}^{(\D_{c_s:c_s})} + \frac{\theta^* \sigma_s}{37} 
        \le 2 \E[\D_{c_s:c_s}] + \frac{\theta^* \sigma_s}{37} 
        = \frac{36 \theta^* \sigma_s}{37} + \frac{\theta^* \sigma_s}{37} = \theta^* \sigma_s.
\]
Therefore,
\begin{align*}
    \max_{i \in [1, c_s + 1]}  Y_i - c \sigma^*_i &= \max\{Y_1,  \max_{i \in [2, c_s + 1]} Y_i - c \sigma_s\} \\
        &\le \max\{X_1,  \max_{i \in [2, c_s + 1]} Y_i \} \\
        &\le^{\text{(Dfn~\ref{dfn:events})}} \max\{\alpha_{n^{1/10000}}^{(\D_{n - c_s:n - c_s})}, \theta^* \sigma_s\} \\
        &< \sigma_b,
\end{align*}
where the last inequality follows from the facts that $\theta^* \sigma_s < \sigma_b$ (see \Cref{lem:compare-sigmas} in the appendix) and that $\alpha_{n^{1/10000}}^{(\D_{n - c_s:n - c_s})} < \sigma_b = 6 \alpha_{n^{1/10000}}^{(\D_{n - c_s:n - c_s})} \sqrt{\ln n}$. Therefore, $\max_{i \in [c_s + 2, n]} Y_i - c \sigma^*_i > \max_{i \in [1, c_s + 1]}  Y_i - c \sigma^*_i$, and so $\LinearFixed_c$ chooses a large noise box.
\end{proof}

\begin{lemma}
\label{lem:order-stat-mean-bound}
    For any MHR distribution $\D$ supported on $[0, \infty)$ and for all $n \ge 1$, we have
    \[\E[\D_{n:n}] \le (\ln n + 1) \cdot \E[\D].\]
\end{lemma}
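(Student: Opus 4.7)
The plan is to bound $\E[\D_{n:n}]$ by the $n$-th harmonic number times $\mu := \E[\D]$ via a telescoping argument, and then apply the elementary inequality $H_n \le \ln n + 1$. I would start by writing
\[\E[\D_{n:n}] = \E[\D_{1:1}] + \sum_{k=1}^{n-1}\bigl(\E[\D_{k+1:k+1}] - \E[\D_{k:k}]\bigr).\]
Since $\max(V_k,X_{k+1}) = V_k + (X_{k+1}-V_k)^+$ for $V_k = \max(X_1,\dots,X_k)$ and an independent copy $X_{k+1} \sim \D$, each increment equals $\E[(X_{k+1}-V_k)^+]$. Conditioning on $V_k = v$ and using the identity $\E[(X-v)^+] = \int_v^\infty (1-F(u))\,du$, I can rewrite the $k$-th increment as $\int_0^\infty \int_v^\infty (1-F(u))\,du\,d(F^k(v))$.

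The MHR hypothesis enters exactly once, via the decreasing mean residual life (DMRL) property: since the hazard rate is non-decreasing, $\frac{1}{1-F(v)}\int_v^\infty (1-F(u))\,du$ is non-increasing in $v$ and hence bounded by its value at $v=0$, which is $\mu$. This is a standard consequence of MHR (see~\cite{barlow1996}) and yields $\int_v^\infty (1-F(u))\,du \le \mu(1-F(v))$.

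Substituting this bound and simplifying,
\[\E[(X_{k+1}-V_k)^+] \le \mu \int_0^\infty (1-F(v))\,d(F^k(v)) = \mu\int_0^1 k u^{k-1}(1-u)\,du = \frac{\mu}{k+1},\]
where the middle step uses the substitution $u = F(v)$ and the final step evaluates the Beta integral $k\,B(k,2)$. Telescoping gives $\E[\D_{n:n}] \le \mu\sum_{k=1}^n 1/k = \mu H_n$, and $H_n \le 1+\ln n$ (from $\sum_{k=2}^n 1/k \le \int_1^n dx/x$) finishes the proof. The MHR assumption is used only in the DMRL step, so I expect no real obstacle; the only mild technicality would be handling distributions without a density, but the Lebesgue--Stieltjes form above goes through unchanged.
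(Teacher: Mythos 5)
Your argument is correct, but it takes a genuinely different route from the paper. The paper invokes a comparison theorem of Barlow and Proschan (Corollary 4.10 in~\cite{barlow1996}): the expected maximum of MHR random variables is dominated by the expected maximum of independent exponentials with the same means, after which $\E[\D_{n:n}] \le \E[\D]\sum_{i=1}^n \tfrac{1}{i}$ follows from the exact formula for the exponential case. You instead telescope $\E[\D_{n:n}] = \E[\D_{1:1}] + \sum_{k=1}^{n-1}\E[(X_{k+1}-V_k)^+]$ and bound each increment using the DMRL consequence of MHR, $\int_v^\infty(1-F(u))\,du \le \E[\D]\,(1-F(v))$, which after the substitution $u=F(v)$ gives exactly $\E[\D]/(k+1)$ per step; the Beta-integral evaluation and the harmonic-sum bookkeeping are right (the increments contribute $\E[\D](H_n-1)$ and the first term contributes $\E[\D]$, matching your $\E[\D]\,H_n$). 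Both routes land on the same sharp $H_n$ constant. What each buys: your proof is essentially self-contained --- the only MHR input is IFR $\Rightarrow$ DMRL, which can be checked directly from the representation $1-F(v+s) = (1-F(v))\exp\bigl(-\int_v^{v+s}h\bigr)$ with $h$ non-decreasing --- whereas the paper's proof is shorter given the cited black-box lemma and implicitly records a stronger fact (domination by exponentials holds even for non-identical MHR variables). The only caveat in your write-up, which you already flag, is that the paper's MHR definition presumes a density, so the Lebesgue--Stieltjes generality is not needed here; with that assumption every step of your argument goes through.
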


\begin{proof}[Proof of \Cref{lem:order-stat-mean-bound}]

The lemma is an immediate consequence of the following result from~\cite{barlow1996} (Corollary 4.10 on page 33):

\begin{lemma}[\cite{barlow1996}]
If $X_i$, $i=1,\dots,n$, are MHR\footnotemark[1] random variables with mean $\mu_i$ and cdf $F_i(.)$, and $G_i(x) = 1 - \exp(-x/\mu_i)$, then:
\[
\int_{0}^{\infty} 1-\prod_{i=1}^n F_i(x) dx \leq \int_{0}^{\infty} 1-\prod_{i=1}^n G_i(x) dx.
\]
\end{lemma}

Applying this result for the case of $F(x) = F_i(x)$ for all $i$, we have that
\[
\E[\D_{n:n}] = \int_{0}^{\infty} 1-F^n(x) dx \leq \int_{0}^{\infty} 1-(1 - e^{-\frac{x}{\E[\D]}})^n dx = \E[\D] \sum_{i=1}^{n} \frac{1}{i}
\]

Using the fact that $\sum_{i=1}^{n} \frac{1}{i} \leq \ln(n) + 1$, we get the lemma. 
\end{proof}

\begin{lemma}
\label{lem:very specific lemma on 6 log n}
    For any $n \ge 1$ and $m \ge 2$, we have
    \[\E[\D_{n:n} \mid \D_{n:n} > \alpha_m^{(\D_{n:n})}] \cdot \Pr[\D_{n:n} > \alpha_m^{(\D_{n:n})}] \le \frac{15 (\ln m + \ln n + 1)\E[\D]}{2m}. \]
\end{lemma}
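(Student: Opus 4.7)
Let $Y = \D_{n:n}$, which is MHR by~\Cref{lem:mhr-order-stat}. By definition of the quantile, $\Pr[Y > \alpha_m^{(Y)}] \le 1/m$, so it suffices to bound the conditional expectation $\E[Y \mid Y > \alpha_m^{(Y)}]$. My strategy is to decompose
\[
\E[Y \mid Y > \alpha_m^{(Y)}] = \alpha_m^{(Y)} + e(\alpha_m^{(Y)}),
\]
where $e(t) = \E[Y - t \mid Y > t]$ is the mean residual life of $Y$. A classical fact (see~\cite{barlow1996}; it follows from the property that $(1 - F_Y(t+s))/(1 - F_Y(t))$ is non-increasing in $t$ for each fixed $s \ge 0$, which is equivalent to MHR) is that $e$ is itself non-increasing for MHR random variables, giving $e(\alpha_m^{(Y)}) \le e(0) = \E[Y]$. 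Combined with~\Cref{lem:order-stat-mean-bound}, this yields $e(\alpha_m^{(Y)}) \le \E[\D_{n:n}] \le (\ln n + 1)\E[\D]$.

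The main work is bounding $\alpha_m^{(Y)}$, which I plan to do by chaining three facts. First, since $F_Y = F_\D^n$, Bernoulli's inequality $(1 - 1/(mn))^n \ge 1 - 1/m$ gives $(1-1/m)^{1/n} \le 1 - 1/(mn)$, and hence $\alpha_m^{(Y)} \le \alpha_{mn}^{(\D)}$. Second, I invoke~\Cref{lem: cai daskalakis alpha bound} with base $2$ and exponent $d = \log_2(mn) \ge 1$ (valid since $m \ge 2$, $n \ge 1$), obtaining $\alpha_{mn}^{(\D)} \le \log_2(mn) \cdot \alpha_2^{(\D)}$. Third, I apply~\Cref{lem:barlow-quantile-bounds} at $p = 1/2 \le 1 - 1/e$ to bound the median $\alpha_2^{(\D)} \le 2 \ln 2 \cdot \E[\D]$. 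Chaining these inequalities yields $\alpha_m^{(Y)} \le 2(\ln m + \ln n) \E[\D]$.

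Putting the pieces together:
\[
\E[Y \mid Y > \alpha_m^{(Y)}] \cdot \Pr[Y > \alpha_m^{(Y)}] \le \frac{(2\ln m + 3\ln n + 1)\E[\D]}{m} \le \frac{15(\ln m + \ln n + 1)\E[\D]}{2m},
\]
where the last step uses $\max\{2, 3, 1\} \le 15/2$, leaving ample slack in the constant. The main obstacle is the multi-step bound on $\alpha_m^{(Y)}$: one must first transfer the $(1 - 1/m)$-quantile of the $n$-max back to the deeper $(1 - 1/(mn))$-quantile of $\D$ via Bernoulli, then walk that quantile down to the median via Cai--Daskalakis, and finally from the median down to the mean via Barlow--Proschan. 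The MRL-monotonicity of MHR is what cleanly reduces the conditional expectation to $\alpha_m^{(Y)} + \E[Y]$; without it, one would be stuck controlling the tail integral $\int_{\alpha_m^{(Y)}}^\infty (1 - F_Y(y))\,dy$ directly, which requires a lower bound on the hazard rate at $\alpha_m^{(Y)}$ that is not readily available.
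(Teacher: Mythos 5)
Your proof is correct, and it follows a genuinely different route from the paper's. The paper invokes Cai--Daskalakis's tail-contribution bound (stated in the appendix as~\Cref{lem:contribution-of-high-tail}: tail contribution $\le 6\alpha_m^{(\D_{n:n})}/m$), then passes the quantile to an expected order-statistic via~\Cref{lem:order-stat-vs-quantile} ($\alpha_m^{(\D_{n:n})} \le \frac{5}{4}\E[\D_{nm:nm}]$), and finishes with~\Cref{lem:order-stat-mean-bound}. You instead split $\E[Y\mid Y>t]\cdot\Pr[Y>t] \le (t + e(t))/m$ using the mean-residual-life decomposition, bound $e(t)\le \E[Y]$ via the classical IFR$\Rightarrow$DMRL fact, and control $\alpha_m^{(Y)}$ by a three-step chain (Bernoulli to push the $n$-max quantile to a deeper base quantile, then~\Cref{lem: cai daskalakis alpha bound} to the median, then~\Cref{lem:barlow-quantile-bounds} to the mean). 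Both routes land on the same order of constants. What your version buys: it bypasses~\Cref{lem:contribution-of-high-tail} entirely in favor of a more textbook monotonicity fact, and the quantile chain $\alpha_m^{(Y)}\le\alpha_{mn}^{(\D)}\le 2\ln(mn)\E[\D]$ is arguably more self-contained. A minor additional point in your favor: the paper's use of~\Cref{lem:order-stat-vs-quantile} applied to $(\D_{n:n},m)$ formally requires $m\ge 4$, so it does not literally cover the stated range $m\ge 2$ (this is harmless where the lemma is deployed, since there $m = n^{1/10000}$ is large), whereas your argument is valid for all $m\ge 2$ and $n\ge 1$ as claimed. The one ingredient you should make explicit is that IFR$\Rightarrow$DMRL is being used with $e(0)=\E[Y]$, which relies on $\Pr[Y>0]=1$; this holds because MHR forces $\D$ to have a density, so $F_\D(0)=0$ and hence $\Pr[\D_{n:n}>0]=1$.
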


\begin{proof}[Proof of \Cref{lem:very specific lemma on 6 log n}]
    We use the following result from \cite{cai2011extreme} (Lemma 36):
    \begin{lemma}[\cite{cai2011extreme}]
    \label{lem:contribution-of-high-tail}
        For any MHR distribution $\D$ and any $m \ge 2$, we have
        \[\E[\D \mid \D \ge \alpha_{m}^{(\D)}] \cdot \Pr[\D \ge \alpha_{m}^{(\D)}] \le \frac{6 \alpha^{(\D)}_m}{m}.\]
    \end{lemma}

Since order statistics of MHR distributions are also MHR (\Cref{lem:mhr-order-stat}), $\D_{n:n}$ and $(\D_{n:n})_{m:m} = \D_{nm:nm}$ are MHR. Then, by \Cref{lem:order-stat-vs-quantile} we have that 
\begin{equation}\label{eq: bound on nm}
\alpha_{m}^{(\D_{n:n})} \le \frac{5}{4} \cdot \E[\D_{nm:nm}].
\end{equation}

Towards proving the lemma, we then get
    
\begin{align*}
    \E[\D_{n:n} \mid \D_{n:n} > \alpha_m^{(\D_{n:n})}] \cdot \Pr[\D_{n:n} > \alpha_m^{(\D_{n:n})}] &\le^{\text{(\Cref{lem:contribution-of-high-tail})}} \frac{6 \alpha_m^{(\D_{n:n})}}{m} \\
 &\le^{\text{(\Cref{eq: bound on nm})}} 6 \cdot \frac{5}{4} \cdot \frac{\E[\D_{nm:nm}]}{m} \\
    &\le^{\text{(\Cref{lem:order-stat-mean-bound})}} \frac{15(\ln(nm) + 1)}{2m} \cdot \E[\D] \\
    &= \frac{15(\ln(n) + \ln(m) + 1)}{2m} \cdot \E[\D]. \qedhere
\end{align*}
\end{proof}

\begin{lemma}\label{lem:compare-sigmas}
$\sigma_b > \theta^* \sigma_s$.
\end{lemma}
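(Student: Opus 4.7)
The plan is to reduce everything to $\E[\D_{c_s:c_s}]$ and then show that the extra $\sqrt{\ln n}$ factor in $\sigma_b$ dominates. First I would compute $\theta^* \sigma_s$ directly from the definitions: since $\theta^* = \sqrt{\ln n / 2}$ and $\sigma_s = \frac{37}{9\sqrt{2}} \frac{\E[\D_{c_s:c_s}]}{\sqrt{\ln n}}$, the $\sqrt{\ln n}$ factors cancel and the $\sqrt{2}$ factors combine to give $\theta^* \sigma_s = \frac{37}{18} \E[\D_{c_s:c_s}]$.

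Next I would lower bound $\alpha_{n^{1/10000}}^{(\D_{n - c_s:n - c_s})}$ in terms of an expected maximum, so that I can compare it to $\E[\D_{c_s:c_s}]$. Since order statistics of MHR distributions are MHR (\Cref{lem:mhr-order-stat}), $\D_{n-c_s:n-c_s}$ is MHR, so \Cref{lem:order-stat-vs-quantile} applies (for $n \ge n_0$ large enough that $n^{1/10000} \ge 4$) and gives
\[
\alpha_{n^{1/10000}}^{(\D_{n - c_s:n - c_s})} \;\ge\; \tfrac{1}{3}\,\E\!\left[(\D_{n-c_s:n-c_s})_{n^{1/10000}:n^{1/10000}}\right] \;=\; \tfrac{1}{3}\,\E\!\left[\D_{(n-c_s) n^{1/10000}:(n-c_s) n^{1/10000}}\right].
\]
Since $c_s = n^{1/5626}$ and $n - c_s \ge c_s$ for all large $n$, the inner index is at least $c_s$, so the right-hand side is at least $\tfrac{1}{3} \E[\D_{c_s:c_s}]$.

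Combining the two pieces, $\sigma_b \ge 6\sqrt{\ln n} \cdot \tfrac{1}{3} \E[\D_{c_s:c_s}] = 2\sqrt{\ln n}\, \E[\D_{c_s:c_s}]$. Since $2\sqrt{\ln n} > \tfrac{37}{18}$ for $n \ge n_0$ (certainly for any $n \ge e$), we conclude $\sigma_b > \tfrac{37}{18} \E[\D_{c_s:c_s}] = \theta^* \sigma_s$, as required. The only ``obstacle'' is the bookkeeping on index sizes ensuring $n^{1/10000} \ge 4$ and $(n-c_s) n^{1/10000} \ge c_s$, but both hold comfortably for $n \ge n_0$, which is the regime the lemma is invoked in.
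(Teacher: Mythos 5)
Your proposal is correct and follows essentially the same route as the paper: lower bound $\alpha_{n^{1/10000}}^{(\D_{n-c_s:n-c_s})}$ by $\tfrac{1}{3}\E\bigl[\D_{(n-c_s)n^{1/10000}:(n-c_s)n^{1/10000}}\bigr]$ via \Cref{lem:mhr-order-stat} and \Cref{lem:order-stat-vs-quantile}, pass to $\E[\D_{c_s:c_s}]$ by monotonicity of the expected maximum in the number of samples, and let the $\sqrt{\ln n}$ factor in $\sigma_b$ absorb the constant. In fact your computation $\theta^*\sigma_s=\tfrac{37}{18}\E[\D_{c_s:c_s}]$ matches the stated definition of $\sigma_s$, whereas the paper's own proof writes $\theta^*\sigma_s=\tfrac{5}{2}\E[\D_{c_s:c_s}]$ (apparently reflecting an older constant); either way $2\sqrt{\ln n}$ dominates for $n\ge n_0$, so the conclusion stands.
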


\begin{proof}[Proof of~\Cref{lem:compare-sigmas}]

From~\Cref{lem:mhr-order-stat}, we know that $\D_{a:a}$ is MHR for any $a \ge 1$. Then, by \Cref{lem:order-stat-vs-quantile} we have that 

\begin{equation}\label{eq: bound on n to the stuff}
\alpha_{n^{1/10000}}^{(\D_{n-c_s:n-c_s})} \geq \frac{1}{3} \cdot \E[\left( \D_{n-c_s:n-c_s} \right)_{n^{1/10000}:n^{1/10000}}]
\end{equation}

Towards proving~\Cref{lem:compare-sigmas}:

\begin{align*}
    \sigma_b &= 6 \alpha_{n^{1/10000}}^{(\D_{n-c_s:n-c_s})} \sqrt{\ln n} \\
    &\ge^{\text{(\Cref{eq: bound on n to the stuff})}} 6 \cdot \frac{1}{3} \E[\D_{(n - c_s) \cdot n^{1/10000}:(n - c_s) \cdot n^{1/10000}}] \sqrt{\ln n} \\
    &> \frac{5}{2} \E[\D_{(n - c_s) \cdot n^{1/10000}:(n - c_s) \cdot n^{1/10000}}] \\
    &>^{(c_s = n^{1/5626})} \frac{5}{2} \E[\D_{c_s:c_s}] \\
    &= \theta^* \sigma_s. \qedhere
\end{align*}
\end{proof}

\section{Proofs missing from Section~\ref{sec:positive}}\label{app: sec positive}

\begin{proof}[Proof of~\Cref{thm:ignore-large-expo-mhr}]
Consider $\std = (\sigma_1, \sigma_2, \dots, \sigma_n) \in \goodMHR{\D}{n}$ where, without loss of generality, we have $\sigma_1 \le \sigma_2 \le \dots \le \sigma_n$. As $\std \in \goodMHR{\D}{n}$, there exists a constant $c = c_{(\D,n)} \in (0, 1]$ such that $\sigma_{n^c} \le \frac{\E[\D_{n^c:n^c}]}{18\sqrt{2 c \ln n}}$.

Consider the event that $|\epsilon_i| \le \sigma_i \sqrt{2c \ln n}$ for all $1 \le i \le n^c$. Following the same analysis as the proof of \Cref{thm:ignore-large}, for any box $i \in [1, n^c]$, we have

\begin{align*}
\Pr\left[|\epsilon_i| \le \sigma_i \sqrt{2 c \ln n}\right] &= \Pr\left[|\epsilon_i| \le \sigma_i \sqrt{2 \ln n^c}\right] \\
&= \Pr\left[ |\Norm(0, \sigma_i^2)| \le \sigma_i \sqrt{2 \ln n^c} \right] \\
    &= 2 \Phi \left(\sqrt{2 \ln n^c} \right) - 1 \\
&\ge^{\text{(\Cref{lem:normal-bound})}} 2  \left( 1 - \frac{1}{\sqrt{2 \pi}} \frac{1}{\sqrt{2 \ln n^c}}\exp\left( -\frac{1}{2} \cdot 2 \ln n^c \right) \right) - 1 \\
    &= 1 - \frac{1}{n^c \sqrt{c \pi \ln n}},
\end{align*},
and therefore
\[
\Pr\left[|\epsilon_i| \le \sigma_i \sqrt{2c \ln n}, \forall i \in [1, n^c]\right] \ge \left(1 - \frac{1}{n^c \sqrt{c \pi \ln n}}\right)^{n^c} \ge^{\text{(Bernoulli's inequality)}} 1 - \frac{n^c}{n^c \sqrt{c \pi \ln n}} \ge \frac{1}{2},
\]
where the last inequality holds for all $n \geq e^{\frac{4}{c \pi}}$.

Since $\sigma_i \le \frac{\E[\D_{n^c:n^c}]}{18\sqrt{2c \ln n}}$ for all $i \in [1, n^c]$, we can conclude that $\Pr[\max_{i \in [1, n^c]} |\epsilon_i| \le \frac{1}{18} \cdot \E[\D_{n^c:n^c}]] \ge \frac{1}{2}$. Conditioned on this event, for all $i \in [1, n^c]$, we have $x_i - \frac{1}{18} \cdot \E[\D_{n^c:n^c}] \le y_i \le x_i + \frac{1}{18} \cdot \E[\D_{n^c:n^c}]$; therefore, for all $k \le n^c$, we have $\max_{i \in [1, k]} y_i \ge \max_{i \in [1, k]} x_i - \frac{1}{9} \cdot \E[\D_{n^c:n^c}]$.

We analyze the performance of $\IgnoreLargeExp$ conditioned on this event. Recall that $\IgnoreLargeExp$ draws $\alpha \sim U[0, 1]$, and then outputs $\argmax_{i \in [1, n^{\alpha}]} y_i$. We consider two cases for $\alpha$:
\begin{itemize}[leftmargin=*]
\item If $\alpha > c$, we will lower bound the expected reward of $\IgnoreLargeExp$ by $0$.
\item If $\alpha \le c$, $\IgnoreLargeExp$ is going to pick the box with the largest $y_i$ among the first $n^\alpha$ boxes. By our observation, $\IgnoreLargeExp$'s reward in this case is at least $\max_{i \in [1, n^\alpha]} x_i - \frac{1}{9} \cdot \E[\D_{n^c:n^c}]$, and therefore the expected reward of $\IgnoreLargeExp$ in this case is at least $\E[\D_{n^\alpha : n^\alpha}] - \frac{1}{9} \cdot \E[\D_{n^c:n^c}]$. By \Cref{lem:order-stat-order-stat-bound}, since $\frac{c}{\alpha} \le 1$, we have $\E[\D_{n^c:n^c}] \le \frac{4c}{\alpha} \cdot \E[\D_{n^\alpha : n^\alpha}]$. Continuing our derivation, the expected reward of $\IgnoreLargeExp$ is at least
\[\E[\D_{n^\alpha : n^\alpha}] - \cdot \E[\D_{n^c:n^c}] \ge \frac{\alpha}{4c} \cdot \E[\D_{n^c:n^c}] - \frac{1}{9} \cdot \E[\D_{n^c:n^c}].\]
\end{itemize}
Therefore, conditioned on the event that $\max_{i \in [1, n^c} |\epsilon_i| \le \frac{1}{18} \cdot \E[\D_{n^c:n^c}]$, $\IgnoreLargeExp$'s expected reward is lower bounded by
\[
    \int_{\alpha = 0}^{c} \frac{\alpha}{4c} \cdot \E[\D_{n^c:n^c}] - \frac{1}{9} \cdot \E[\D_{n^c:n^c}] \, d \alpha = \frac{1}{72} \cdot \E[\D_{n^c:n^c}].
\]

In outcomes outside this event, we can lower bound $\IgnoreLargeExp$'s expected reward by $0$. Combining everything, $\IgnoreLargeExp$'s expected reward is
\[R_{\IgnoreLargeExp}(\D, \std) \ge \frac{1}{2} \cdot \frac{1}{72} \cdot \E[\D_{n^c:n^c}] \ge^\text{(\Cref{lem:order-stat-order-stat-bound})} \frac{c^2}{576} \cdot \E[\D_{n:n}]. \qedhere \]
\end{proof}

\end{document}